\newtheorem{proposition}{Proposition}[section]
\newtheorem{theorem}[proposition]{Theorem}
\newtheorem{lemma}[proposition]{Lemma}
\newtheorem{assumption}[proposition]{Assumption}
\newtheorem{remark}[proposition]{Remark}
\newcommand{\re}{\text{\upshape Re} \,}
\newcommand{\im}{\text{\upshape Im} \,}
\newcommand{\ren}{\text{\upshape \normalfont ren}}
 \newcommand\ben{\begin{equation*}}
 \newcommand\ebn{\end{equation*}}
 \newcommand\beq{\begin{equation}}
 \newcommand\eeq{\end{equation}}
 \newcommand\lb{\left(}
  \newcommand\rb{\right)} 
\numberwithin{equation}{section}
\begin{document}
\title[The family of confluent Virasoro fusion kernels]{The family of confluent Virasoro fusion kernels \\ and a non-polynomial $q$-Askey scheme}
\author{Jonatan Lenells and Julien Roussillon}
\address{Department of Mathematics, KTH Royal Institute of Technology, \\ 100 44 Stockholm, Sweden.}
\email{jlenells@kth.se}
\email{julienro@kth.se}

\begin{abstract}
We study the recently introduced family of confluent Virasoro fusion kernels $\mathcal{C}_k(b,\boldsymbol{\theta},\sigma_s,\nu)$. We study their eigenfunction properties and show that 
they can be viewed as non-polynomial generalizations of both the continuous dual $q$-Hahn and the big $q$-Jacobi polynomials. More precisely, we prove that: (i) $\mathcal{C}_k$ is a joint eigenfunction of four different difference operators for any positive integer $k$, (ii) $\mathcal{C}_k$ degenerates to the continuous dual $q$-Hahn polynomials when $\nu$ is suitably discretized, and (iii) $\mathcal{C}_k$ degenerates to the big $q$-Jacobi polynomials when $\sigma_s$ is suitably discretized. These observations lead us to propose the existence of a non-polynomial generalization of the $q$-Askey scheme. The top member of this non-polynomial scheme is the Virasoro fusion kernel (or, equivalently, Ruijsenaars' hypergeometric function), and its first confluence is given by the $\mathcal{C}_k$.
\end{abstract}

\maketitle

\tableofcontents

\section{Introduction}

The Askey--Wilson polynomials are a four-parameter family of %$_4 \phi_3$ 
$q$-hypergeometric series \cite{AW1985}. They satisfy one three-term recurrence relation and one difference equation of Askey--Wilson type. Moreover, they form the top element of a five-level hierarchy of $q$-orthogonal polynomials called the $q$-Askey scheme \cite{KLS2010, KS}. Each element of this scheme is a family of orthogonal polynomials satisfying one three-term recurrence relation and one difference equation. 
The families at a given level of the scheme arise as limits of the families at the level above. In particular, the two families at the second level of the $q$-Askey scheme are the continuous dual $q$-Hahn and the big $q$-Jacobi polynomials, and both of them arise as limits of the Askey--Wilson polynomials \cite{KS,KM}, see Figure \ref{schemefig1} (left).

In this article, we propose a non-polynomial version of the $q$-Askey scheme motivated by two-dimensional conformal field theory, see Figure \ref{schemefig1} (right). The top element of the proposed scheme is the Virasoro fusion kernel. The second level is made up of a family of confluent Virasoro fusion kernels which was recently introduced in \cite{LR}. The members of the non-polynomial scheme are associated with a quantum deformation parameter $q$ which is related to the central charge $c$ of the Virasoro algebra by
\begin{align}\label{qcQdef}
q=e^{2i\pi b^2}, \qquad c=1+6 Q^2, \qquad Q = b+b^{-1}.
\end{align}
Moreover, each member is a joint eigenfunction of four difference operators.
The proposed scheme is a generalization of the $q$-Askey scheme in the sense that its members reduce to members of the $q$-Askey scheme in appropriate limits when certain variables are discretized, see Figure \ref{schemefig2}. In this paper, we consider in detail the first two levels of the non-polynomial scheme and their relation to the first two levels of the $q$-Askey scheme; results on lower levels will be presented elsewhere. 

\vspace{.1cm}
\begin{figure}[h!]
 \hspace{.5cm}
\begin{subfigure}{.3\textwidth}
\centering
% Define block styles
\tikzstyle{block} = [rectangle, draw, fill=blue!20, 
    text width=10em, text centered, rounded corners, minimum height=2em]
\tikzstyle{line} = [draw, -latex', shorten >= 4pt, shorten <= 0pt]
    
\begin{tikzpicture}[node distance = .8cm, auto]
    % Place nodes
    \node [block] (AW) {Askey--Wilson polynomials};
    \node [block, below left=0.5cm and -1.2cm of AW] (Hahn) {Continuous dual $q$-Hahn polynomials};
    \node [block, below right=0.5cm and -1.2cm of AW] (Jacobi) {Big $q$-Jacobi polynomials};
    % Draw edges
    \path [line] (AW) -- (Hahn);
    \path [line] (AW) -- (Jacobi);
\end{tikzpicture}
\end{subfigure}
\hspace{5cm}
\begin{subfigure}{.3\textwidth}
\centering
% Define block styles
\tikzstyle{block} = [rectangle, draw, fill=blue!20, 
    text width=10em, text centered, rounded corners, minimum height=2em]
\tikzstyle{line} = [draw, -latex', shorten >= 4pt, shorten <= 0pt]
    
    \begin{tikzpicture}[node distance = .8cm, auto]
    % Place nodes
    \node [block] (F) {Virasoro fusion kernel};
    \node [block, below =of F] (Cn) {confluent Virasoro \\ fusion kernels};
    % Draw edges
    \path [line] (F) -- (Cn);
\end{tikzpicture}
\end{subfigure}
\caption{First two levels of the $q$-Askey scheme (left) and of the non-polynomial scheme (right). \label{schemefig1}}
\end{figure}
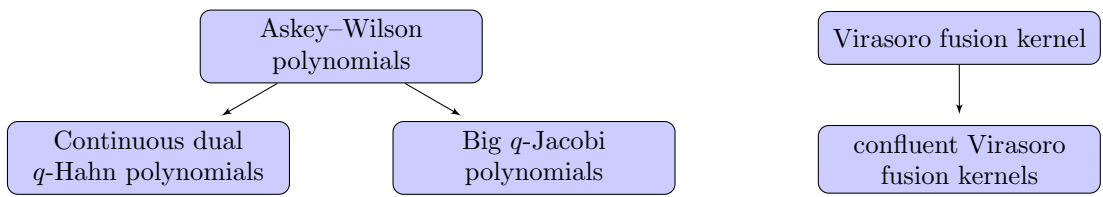

In the first part of the paper (Sections \ref{Fsec}-\ref{Ckdifferencesec}), we study the Virasoro fusion kernel $F\left[\substack{\theta_1\;\;\theta_t\vspace{0.1cm}\\ \theta_{\infty}\;\theta_0};\substack{\sigma_s \vspace{0.15cm} \\  \sigma_t}\right] $ as well as the confluent Virasoro fusion kernels $\mathcal{C}_k(b,\boldsymbol{\theta},\sigma_s,\nu)$ which were introduced in \cite{LR} as confluent limits of $F$. The Virasoro fusion kernel $F$ plays a fundamental role in the conformal bootstrap approach to two-dimensional conformal field theories \cite{pt1,Ribault}. It was first constructed in \cite{pt1, pt2} and later revisited in \cite{TV14} where it was interpreted as $b$-6j symbols associated to the quantum group $U_q(sl_2(\mathbb{R}))$, with $b$ characterizing the central charge of the Virasoro algebra according to (\ref{qcQdef}). 
The function $F$ satisfies two different pairs of difference equations, the first pair involving operators acting on $\sigma_s$ and the second pair operators acting on $\sigma_t$. 
On the other hand, the kernels $\mathcal{C}_k$ form a family of functions indexed by the integer $k \geq 1$. They are obtained from  $F$ by letting the variables $\theta_1$, $\theta_\infty$, and $\sigma_t$ tend to infinity in a prescribed way, see Section \ref{Ckdifferencesec}, and this confluent limit can be viewed as the first degeneration limit in the non-polynomial scheme. We show in Theorem \ref{thm6p2} and Theorem \ref{thm6p4} that, just like $F$, each of the kernels $\mathcal{C}_k$ is a joint eigenfunction of two different pairs of difference operators, with the operators in the first pair acting on $\nu$ and the operators in the second pair acting on $\sigma_s$. These difference equations are obtained by studying the confluent limits of the difference equations satisfied by $F$.

In the second part of the paper (Sections \ref{FtoAW}-\ref{CktoJacobisec}), we study the relation between the non-polynomial scheme and the $q$-Askey scheme. 
In the confluent limit, the parameter $\sigma_s$ is left unchanged while $\sigma_t$ is sent to infinity. As a result, the two pairs of difference equations satisfied by $\mathcal{C}_k$ are of different nature: the first pair is of the form satisfied by the continuous dual $q$-Hahn polynomials, % (see \eqref{difference1hn}), 
while the second pair is of the form satisfied by the big $q$-Jacobi polynomials. %(see \eqref{7p30}). 
This suggests that there is a relationship between the $\mathcal{C}_k$ and these polynomials. This relationship is made precise in Theorem \ref{thhahn} and Theorem \ref{thjacobi}, which together with Theorem \ref{FAWthm} form the main results of the second part of the paper. Theorem \ref{FAWthm} shows that $F$ reduces (up to normalization) to the Askey--Wilson polynomials when the variable $\sigma_s$  is suitably discretized. 
Similarly, Theorem \ref{thhahn} and Theorem \ref{thjacobi} show that $\mathcal{C}_k(b,\boldsymbol{\theta},\sigma_s,\nu)$ reduces (again up to normalization) to the continuous dual $q$-Hahn polynomials when $\nu$ is suitably discretized and to the big $q$-Jacobi polynomials when $\sigma_s$ is suitably discretized. Depending on whether $k$ is an odd or an even integer, the resulting polynomials are associated with the quantum deformation parameter $q$ or its inverse $q^{-1}$. 

The main results in Sections \ref{FtoAW}-\ref{CktoJacobisec} are summarized in Figure \ref{schemefig2}. 
In Figure \ref{schemefig2}, the limits from elements of the non-polynomial scheme to elements of the $q$-Askey scheme are indicated by dashed arrows; we refer to these limits as polynomial limits, since the limiting functions are polynomials. The limits within the two schemes are indicated by solid arrows. It is important to note that the diagram in Figure \ref{schemefig2} commutes: The confluent limit which was considered (with an entirely different goal in mind) in \cite{LR} to define the confluent Virasoro fusion kernels $\mathcal{C}_k$ descends to the degeneration limits of the $q$-Askey scheme in the polynomial limit.

 \begin{figure}[h!]
\centering
 \tikzstyle{block} = [rectangle, draw, fill=blue!20, 
    text width=10em, text centered, rounded corners, minimum height=2em]
\tikzstyle{line} = [draw, -latex', shorten >= 4pt, shorten <= 0pt]
       
\begin{tikzpicture}[node distance = .8cm, auto]
    % Place nodes
    \node [block] (F) {Virasoro fusion kernel};
      \node[block] at (10,0) (Cn) {confluent Virasoro \\ fusion kernels};
   % \node [block, right = of F] (Cn) {confluent Virasoro \\ fusion kernel};
    % Draw edges
    \node[block] at (0,-3.5) (AW) {Askey--Wilson \\ polynomials};
   
    \node[block] at (5,-2.2) (Hn) {Continuous dual \\ q-Hahn polynomials};
    
      \node[block] at (5,-4.8) (Jn)  {big-q Jacobi \\ polynomials};     
   
    \path [line] (F) -- node [text width=1.5cm,midway,above]{Eq.\,\eqref{confluentlimit}} (Cn);
            \path [line] (AW) -- node [text width=2.5cm,midway,above]{Eq.\,\eqref{qhahn}} (Hn);
        \path [line] (AW) -- node [text width=2.7cm,midway,below]{Eq.\,\eqref{Jn}} (Jn);
    \path [line,dashed](F) -- node [text width=2cm,midway,left]{Theorem \ref{FAWthm}} (AW);
    \path[line,dashed] (Cn) -- node [text width=2.4cm,midway,left]{Theorem \ref{thhahn}} (Hn);
    \path[line,dashed] (Cn) -- node [text width=2.5cm,midway,right]{\; Theorem \ref{thjacobi}} (Jn);
\end{tikzpicture}
\caption{Illustration of the relationship between the non-polynomial scheme introduced in this paper and the $q$-Askey scheme. The diagram summarizes the main results of Sections \ref{FtoAW}-\ref{CktoJacobisec}. The solid and dashed arrow correspond to confluent and polynomial limits, respectively.
 \label{schemefig2}}
\end{figure}
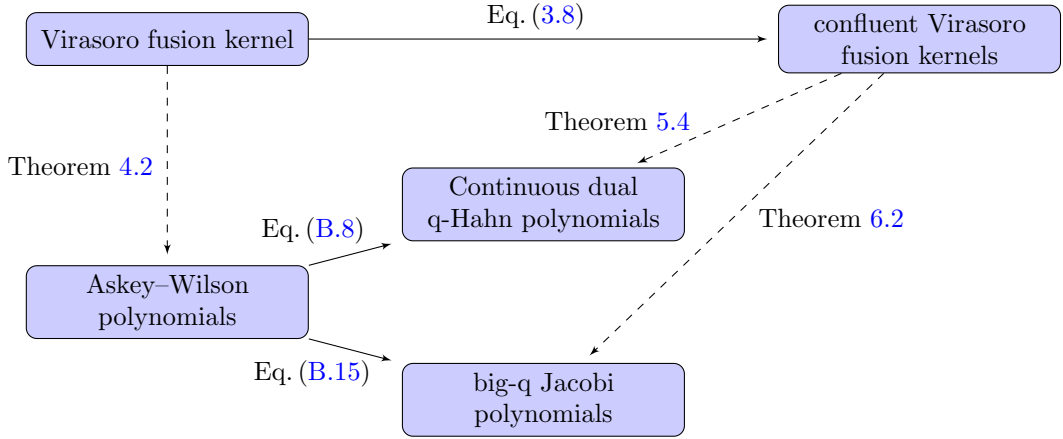

In the context of the $q$-Askey scheme, it is usually assumed that $0 < q <1$, or at least that $|q| < 1$ \cite{KLS2010}. However, in the non-polynomial setting it is natural to assume that $q$ lies on the unit circle; according to (\ref{qcQdef}), this corresponds to $b > 0$ and a central charge $c$ satisfying $c > 1$. Thus, even though we expect many of our results to analytically extend to other values of $q$, we will focus on the case when $q$ lies on the unit circle.
It is interesting to note that although the orthogonal polynomials in the $q$-Askey scheme are typically not defined (at least not in the standard way) when $q$ is a root of unity, no such restriction is necessary for the non-polynomial scheme. Consequently, we only need to impose the assumption that $q$ is not a root of unity in the second part of the paper where the polynomial limits are considered.

Sometimes the family of continuous $q$-Hahn polynomials are included in the second level of the $q$-Askey scheme, along with the continuous dual $q$-Hahn and the big $q$-Jacobi polynomials. However, since the continuous $q$-Hahn polynomials are related to the Askey--Wilson polynomials by simple phase shifts (see \cite[Eq. (14.1.17)]{KLS2010}), they can be obtained from the Virasoro fusion kernel in the same way as the Askey--Wilson polynomials. We will therefore not discuss them further in this paper.

\subsection{Relation to earlier work}
Our construction of a non-polynomial scheme with the Virasoro fusion kernel as its top member has been inspired by results presented in \cite{R1999} and \cite{R20}. First, it was shown in \cite{R1999}  that the Ruijsenaars hypergeometric function $R(a_+,a_-,\boldsymbol{c}; v, \hat{v})$ (also referred to as the $R$-function) reduces to the Askey--Wilson polynomials in a certain limit when either $v$ or $\hat{v}$ is suitably discretized. The key to the proof of this fact in \cite{R1999} is that, in this limit, one of the four difference equations satisfied by the $R$-function reduces to the three-term recurrence relation satisfied by the Askey--Wilson polynomials. 
Second, it was recently understood in \cite{R20} that, up to normalization, the Virasoro fusion kernel $F$ is equal to the $R$-function when the parameters of the two functions are appropriately identified. By combining these two results, it follows that the Virasoro fusion kernel also reduces to the Askey--Wilson polynomials in an appropriate limit. This observation is made precise in Theorem \ref{FAWthm}, which therefore can be viewed as a reformulation of the result from \cite{R1999} expressed in the language of the Virasoro fusion kernel using the identification put forth in \cite{R20}. However, in Section \ref{FtoAW} we give a direct and self-contained proof of Theorem \ref{FAWthm} based on the idea of \cite{R1999}, because this is easier than to explain how the assertion follows from \cite{R1999} and \cite{R20}.

To the best of our knowledge, no attempt has previously been made to derive an Askey type scheme with the $R$-function or the Virasoro fusion kernel as its top member. 
A non-polynomial generalization of the $q$-Askey scheme whose top member is the Askey--Wilson function was constructed in \cite{KS99}. It is however not clear if there is any relation between the non-polynomial scheme presented here and the scheme of \cite{KS99}. The Askey--Wilson function is a non-polynomial generalization of the Askey--Wilson polynomials \cite{KS99}; it is a joint eigenfunction of two difference operators of Askey--Wilson type \cite{BRS,IR} and it was shown in \cite{BRS} that it is proportional to a trigonometric Barnes integral, whose building block is the  $q$-Gamma function $\Gamma_q$ \cite{R2001}. In particular, the function $\Gamma_q$, and consequently the Askey--Wilson function, are well defined only for $|q| <1$. On the other hand, the $R$-function, which was introduced in \cite{R1994} and studied in greater detail in \cite{R1999,R2003,R2003bis}, is proportional to a hyperbolic Barnes integral \cite{BRS}, whose building block consists of the hyperbolic gamma function $G(a_+,a_-,z)$ \cite{R2001}, and it is defined for more general values of $q=e^{i\pi a_+/a_-}$ in the complex plane. The $R$-function can be expressed as a sum of two terms, where each term is proportional to a product of two Askey--Wilson functions \cite[Theorem 6.5]{BRS}.

%In particular, the function $G$ and consequently the $R$-function have a symmetric dependence on two generally distinct parameters $q$ and $\tilde{q}$ which are given by
%\beq\label{qqtilde}
%q=e^{i\pi a_+/a_-}, \qquad \tilde{q}=e^{i\pi a_-/a_+}.
%\eeq
%Moreover, the function $R(a_+,a_-,\boldsymbol{p},\sigma_s,\sigma_t)$ is symmetric under the exchange of $\sigma_s$ and $\sigma_t$ and a certain involution acting on the four parameters. Therefore, the $R$-function is a joint eigenfunction of four Askey--Wilson difference operators, two of them acting on $\sigma_s$ and the other two acting on $\sigma_t$. 
%An essential relation between the $R$-function and the Virasoro fusion kernel is that the two parameters $a_-$ and $a_+$ in \eqref{qqtilde} are traded with $b$ and $b^{-1}$. 

%Since the $R$-function is symmetric under the exchange of $\sigma_s$ and $\sigma_t$, it admits two different limits toward the Askey--Wilson polynomials depending on whether $\sigma_s$ or $\sigma_t$ is discretized. The second aim of this article is to study what these polynomial limits becomes when we take the confluent limits \eqref{confluentlimit}. Our result are formulated in Theorems \ref{thhahn} and \ref{thjacobi}. 

\subsection{Organization of the paper}
We recall the definition and eigenfunction properties of the Virasoro fusion kernel $F$ in Section \ref{Fsec}. Eigenfunction properties of the confluent Virasoro fusion kernels $\mathcal{C}_k$ are derived in Section \ref{Ckdifferencesec}. In Section \ref{FtoAW}, we consider the reduction of $F$ to the Askey--Wilson polynomials. In Section \ref{CktoHahnsec}, we show that a renormalized version of $\mathcal{C}_k$ reduces to the continuous dual $q$-Hahn polynomials when $\nu$ is suitably discretized. In Section \ref{CktoJacobisec}, we prove  that a renormalized version of $\mathcal{C}_k$ reduces to the big $q$-Jacobi polynomials when $\sigma_s$ is suitably discretized. Section \ref{conclusionssec} contains some conclusions and perspectives. In Appendix \ref{appendixA}, the definition of $q$-hypergeometric series is recalled. In Appendix \ref{appendixB}, we review the properties of the first two levels of the $q$-Askey scheme that are needed for the proofs in Sections \ref{FtoAW}-\ref{CktoJacobisec}.

\section{The Virasoro fusion kernel} \label{Fsec}

The Virasoro fusion kernel, denoted by $F$, is defined by
\beq \label{fusion01}
\begin{split}
F\left[\substack{\theta_1\;\;\theta_t\vspace{0.1cm}\\ \theta_{\infty}\;\theta_0};\substack{\sigma_s \vspace{0.15cm} \\  \sigma_t}\right] = &  \prod_{\epsilon,\epsilon'=\pm1} \frac{g_b \lb \epsilon \theta_1+\theta_{t}+\epsilon' \sigma_t\rb g_b \lb \epsilon \theta_0-\theta_\infty+\epsilon' \sigma_t \rb}{g_b \lb \epsilon \theta_0 + \theta_t + \epsilon' \sigma_s \rb g_b \lb \epsilon \theta_1-\theta_\infty+\epsilon' \sigma_s \rb} \prod_{\epsilon=\pm1} \frac{g_b(\frac{iQ}2+2\epsilon \sigma_s)}{g_b(-\frac{iQ}2+2\epsilon \sigma_t)}
	\\
& \times \int_{\mathsf{F}} dx~\prod_{\epsilon=\pm1} \frac{s_b \lb x+ \epsilon \theta_1 \rb s_b \lb x+\epsilon\theta_0+\theta_\infty+\theta_t \rb}{s_b \lb x+\frac{iQ}{2}+\theta_\infty+\epsilon \sigma_s \rb s_b \lb x+\frac{i Q}{2}+\theta_t+\epsilon \sigma_t \rb},
\end{split}
\eeq
where $s_b(z)$ and $g_b(z)$ are the special functions defined by
\begin{equation}\label{defsb}
s_b(z)=\operatorname{exp}{\left[  i \int_0^\infty \frac{dy}{y} \left(\frac{\operatorname{sin}{2yz}}{2\operatorname{sinh}{b^{-1}y}\operatorname{sinh}{by}}-\frac{z}{y}\right)\right]}, \qquad |\im z|<\frac{Q}{2},
\end{equation}
and
\beq \label{gb}
g_b(z)=\operatorname{exp}{\left\{ \int_0^\infty \frac{dt}{t}\left[\frac{e^{2i z t}-1}{4 \operatorname{sinh}{b t} \operatorname{sinh}{b^{-1} t}}+\frac{1}{4}z^2 \lb e^{-2bt}+e^{-\frac{2t}b}\rb-\frac{iz}{2t} \right]\right\}}, \qquad \im z>-\frac{Q}{2}.
\eeq
In order to specify the contour of integration $\mathsf{F}$ in (\ref{fusion01}), we need to first recall some properties of $g_b(z)$ and $s_b(z)$. These functions are related to the functions $G$ and  $E$ in \cite[Eq. (A.3)]{R1999} and \cite[Eq. (A.43)]{R1999} by
\begin{align}\label{sbgbGE}
s_b(z) = G(b, b^{-1}; z), \qquad g_b(z) = \frac{1}{E(b, b^{-1}; -z)};
\end{align}
thus it follows from \cite{R1999} that 
\begin{itemize}
\item $s_b$ and $g_b$ satisfy the relation $s_b(z)= g_b(z)/g_b(-z)$ and
\item the function $g_b(z)$ has no zeros, but it has poles located at 
\beq\label{polegb}
z_{k,l}=-\frac{i Q}{2} -i k b-il b^{-1}, \qquad k,l = 0,1, 2, \dots.
\eeq
\end{itemize}
Consequently, the function $s_b(z)$ is a meromorphic function of $z \in \mathbb{C}$ with zeros $\{z_{m,l}\}_{m,l=0}^\infty$ and poles $\{p_{m,l}\}_{m,l=0}^\infty$ located at 
\begin{equation}\label{polesb}
\begin{split}
&z_{m,l}=\frac{i Q}{2}+i m b +il b^{-1}, \qquad m,l = 0,1, 2,\dots, \qquad (\text{zeros}), 
	\\
&p_{m,l}=-\frac{i Q}{2} -i m b-il b^{-1}, \qquad m,l = 0,1, 2,\dots, \qquad (\text{poles}).
\end{split}
\end{equation}
The multiplicity of the zero $z_{m,l}$ in (\ref{polesb}) is given by the number of distinct pairs $(m_i,l_i) \in \mathbb{Z}_{\geq 0} \times \mathbb{Z}_{\geq 0}$ such that $z_{m_i,l_i}=z_{m,l}$. 
The pole $p_{m,l}$ has the same multiplicity as the zero $z_{m,l}$.
In particular, if $b^2$ is an irrational real number, then all the zeros and poles in (\ref{polesb}) are distinct and simple.
We deduce that the integrand in (\ref{fusion01}) has eight semi-infinite sequences of poles in the complex $x$-plane. Assuming that $b > 0$, there are four downward sequences starting at $x = \pm \theta_1 -\frac{i Q}{2}$ and $x = \pm \theta_0 - \theta_\infty - \theta_t -\frac{i Q}{2}$, and four upward sequences starting at $x = -\theta_\infty \pm \sigma_s$ and $x = -\theta_t \pm \sigma_t$.
The contour $\mathsf{F}$ in (\ref{fusion01}) is any curve from  $-\infty$  to $+\infty$ which separates the four upward from the four downward sequences of poles. We can ensure the existence of such a contour by imposing the following restrictions on the parameters.  

%Note that the simple zeros and poles in (\ref{polesb}) may merge into zeros and poles of higher multiplicity for certain values of $b,k,l$.

%The $s_b$-function a close relative of Ruijsenaars' hyperbolic gamma function.

\begin{assumption}[Restrictions on the parameters]\label{assumption}
Throughout the paper, we assume that
\beq \label{restrictions}
b>0,\quad (\theta_0,\theta_t,\theta_1,\theta_\infty)\in \mathbb{R}^4.
\eeq
\end{assumption}
Let us temporarily also assume that $\sigma_s, \sigma_t \in \mathbb{R}$. Then Assumption \ref{assumption} implies that $\mathsf{F}$ can be chosen to be any curve from $-\infty$ to $+\infty$ lying in the open strip $\im x\in (-Q/2,0)$.
Moreover, with this choice of $\mathsf{F}$, the integrand in (\ref{fusion01}) has exponential decay as $\re x \to \pm \infty$, so the integral in (\ref{fusion01}) is well-defined. The decay of the integrand follows from the following asymptotic formula which is a consequence of \cite[Theorem A.1]{R1999} and (\ref{sbgbGE}): %\cite[Eq. (3.50)]{R2001}
For each $\epsilon >0$, 
\begin{align}\label{sbasymptotics}
\pm \ln s_b(z) = -\frac{i\pi z^2}{2} - \frac{i\pi}{24}(b^2 + b^{-2}) + O(e^{-\frac{2\pi(1-\epsilon)}{\max(b, b^{-1})}|\re z|}), \qquad \re z \to \pm \infty,
\end{align}
uniformly for $(b,\im z)$ in compact subsets of $(0,\infty) \times \mathbb{R}$. Note that Assumption \ref{assumption} is made primarily for simplicity; we expect all our results to admit analytic continuations to more general values of the parameters. In the following subsection we use that $F$ can be defined for complex values of $\sigma_s$ and $\sigma_t$ by analytic continuation. 
%, such as $b\in \mathbb{C}\backslash i\mathbb{R},~ \boldsymbol\theta \in \mathbb{C}^4$ and $(\sigma_s,\sigma_t) \in \mathbb{C}^2$.

\subsection{First pair of difference equations}
%We now recall that $F$ is a joint eigenfunction of two pairs of difference operators. 
Define a translation operator $e^{\pm ib \partial_{\sigma_s}}$ which formally acts on a meromorphic function $f(\sigma_s)$ by $e^{\pm ib \partial_{\sigma_s}} f(\sigma_s)=f(\sigma_s \pm ib)$. Define the difference operator $H_F$ acting on the variable $\sigma_s$ by
\begin{equation} \label{HF}
H_F\left[\substack{\theta_1\;\;\theta_t\vspace{0.1cm}\\ \theta_{\infty}\;\theta_0};\substack{b,\;\sigma_s}\right] = H_F^+\left[\substack{\theta_1\;\;\theta_t\vspace{0.1cm}\\ \theta_{\infty}\;\theta_0};\substack{b,\;\sigma_s}\right] e^{ib\partial_{\sigma_s}}+H_F^+\left[\substack{\theta_1\;\;\theta_t\vspace{0.1cm}\\ \theta_{\infty}\;\theta_0};\substack{b,\; -\sigma_s}\right] e^{-ib\partial_{\sigma_s}}+H_F^0\left[\substack{\theta_1\;\;\theta_t\vspace{0.1cm}\\ \theta_{\infty}\;\theta_0};\substack{b,\;\sigma_s}\right],
\end{equation}
where
\beq \label{K}
H_F^+\left[\substack{\theta_1\;\;\theta_t\vspace{0.1cm}\\ \theta_{\infty}\;\theta_0};\substack{b,\;\sigma_s}\right]= \frac{4\pi^2\Gamma \lb 1+2b^2-2ib\sigma_s \rb \Gamma \lb b^2-2i b \sigma_s \rb \Gamma \lb -2i b \sigma_s \rb \Gamma \lb 1+b^2-2ib \sigma_s \rb}{\prod_{\epsilon,\epsilon'=\pm1}\Gamma \lb \tfrac{bQ}2-ib(\sigma_s+\epsilon \theta_1+\epsilon' \theta_\infty)\rb \Gamma \lb \tfrac{bQ}2-ib (\sigma_s+\epsilon \theta_0+\epsilon' \theta_t )\rb}
\eeq
and
\beq\label{H0} \begin{split}
H_F^0\left[\substack{\theta_1\;\;\theta_t\vspace{0.1cm}\\ \theta_{\infty}\;\theta_0};\substack{b,\;\sigma_s}\right]= & -2\cosh{( 2\pi b (\theta_1+\theta_t+\tfrac{ib}2))}\\
&+4 \displaystyle \sum_{k=\pm 1} \frac{\prod_{\epsilon=\pm 1} \cosh{(\pi b(\epsilon \theta_\infty-\tfrac{ib}2-\theta_1-k\sigma_s))} \cosh{(\pi b(\epsilon \theta_0-\tfrac{ib}2-\theta_t-k\sigma_s))}}{\operatorname{sinh}{\lb 2\pi b(k\sigma_s+\frac{ib}2)\rb}\operatorname{sinh}{\lb 2\pi b k \sigma_s \rb}}.
\end{split} \eeq
It was shown in \cite[Proposition 4.3]{R20} that the Virasoro fusion kernel $F$ is a joint eigenfunction of two copies of $H_F$:
\begin{subequations} \label{difference1} \begin{align} 
\label{differenceF1} & H_F\left[\substack{\theta_1\;\;\theta_t\vspace{0.1cm}\\ \theta_{\infty}\;\theta_0};\substack{b,\;\sigma_s}\right] F\left[\substack{\theta_1\;\;\theta_t\vspace{0.1cm}\\ \theta_{\infty}\;\theta_0};\substack{\sigma_s \vspace{0.15cm} \\  \sigma_t}\right] = 2\cosh{\lb 2\pi b \sigma_t\rb} ~ F\left[\substack{\theta_1\;\;\theta_t\vspace{0.1cm}\\ \theta_{\infty}\;\theta_0};\substack{\sigma_s \vspace{0.15cm} \\  \sigma_t}\right], \\
\label{differenceF2} & H_F\left[\substack{\theta_1\;\;\theta_t\vspace{0.1cm}\\ \theta_{\infty}\;\theta_0};\substack{b^{-1},\;\sigma_s}\right] F\left[\substack{\theta_1\;\;\theta_t\vspace{0.1cm}\\ \theta_{\infty}\;\theta_0};\substack{\sigma_s \vspace{0.15cm} \\  \sigma_t}\right] = 2\cosh{\lb 2\pi b^{-1} \sigma_t\rb} ~ F\left[\substack{\theta_1\;\;\;\theta_t\vspace{0.1cm}\\ \theta_{\infty}\;\;\theta_0};\substack{\sigma_s \vspace{0.15cm} \\  \sigma_t}\right].
\end{align}\end{subequations}

\subsection{Second pair of difference equations}
Define the dual difference operator $\tilde{H}_F$ acting on the variable $\sigma_t$ by 
\beq \label{tildeHF} \tilde H_F\left[\substack{\theta_0\;\;\theta_t\vspace{0.1cm}\\ \theta_{\infty}\;\theta_1};\substack{b,\;\sigma_t}\right] = \tilde H_F^+\left[\substack{\theta_0\;\;\theta_t\vspace{0.1cm}\\ \theta_{\infty}\;\theta_1};\substack{b,\;\sigma_t}\right] e^{ib\partial_{\sigma_t}}+\tilde H_F^+\left[\substack{\theta_0\;\;\theta_t\vspace{0.1cm}\\ \theta_{\infty}\;\theta_1};\substack{b,\;-\sigma_t}\right] e^{-ib\partial_{\sigma_t}}+H_F^0\left[\substack{\theta_0\;\;\theta_t\vspace{0.1cm}\\ \theta_{\infty}\;\theta_1};\substack{b,\;\sigma_t}\right], \eeq
where
  \beq \label{tildeK}\begin{split}
 \tilde H_F^+\left[\substack{\theta_0\;\;\theta_t\vspace{0.1cm}\\ \theta_{\infty}\;\theta_1};\substack{b,\;\sigma_t}\right] = \frac{4\pi^2~\Gamma \left(1-b^2+2 i b\sigma_t \right) \Gamma (1+2ib\sigma_t) \Gamma \left(2 i b \sigma_t-2 b^2\right) \Gamma \left(2 i b \sigma_t-b^2\right)}{\prod _{\epsilon,\epsilon'=\pm 1} \Gamma \left(\frac{1-b^2}2+ib \left(\sigma_t+\epsilon \theta_0+\epsilon' \theta_\infty\right)\right) \Gamma \left(\frac{1-b^2}2+ib \left(\sigma_t+\epsilon \theta_1+\epsilon' \theta_t\right)\right)}.
     \end{split}\eeq
It was shown in \cite[Proposition 4.4]{R20} that $F$ also satisfies the dual pair of difference equations
\begin{subequations}\label{difference2} \begin{align}
\label{differenceF3} & \tilde H_F\left[\substack{\theta_0\;\;\theta_t\vspace{0.1cm}\\ \theta_{\infty}\;\theta_1};\substack{b,\;\sigma_t}\right] F\left[\substack{\theta_1\;\;\theta_t\vspace{0.1cm}\\ \theta_{\infty}\;\theta_0};\substack{\sigma_s \vspace{0.15cm} \\  \sigma_t}\right] = 2\cosh{\lb 2\pi b \sigma_s\rb} ~ F\left[\substack{\theta_1\;\;\theta_t\vspace{0.1cm}\\ \theta_{\infty}\;\theta_0};\substack{\sigma_s \vspace{0.15cm} \\  \sigma_t}\right], \\
\label{differenceF4} & \tilde H_F\left[\substack{\theta_0\;\;\theta_t\vspace{0.1cm}\\ \theta_{\infty}\;\theta_1};\substack{b^{-1},\;\sigma_t}\right] F\left[\substack{\theta_1\;\;\theta_t\vspace{0.1cm}\\ \theta_{\infty}\;\theta_0};\substack{\sigma_s \vspace{0.15cm} \\  \sigma_t}\right] = 2\cosh{\lb 2\pi b^{-1} \sigma_s\rb} ~ F\left[\substack{\theta_1\;\;\theta_t\vspace{0.1cm}\\ \theta_{\infty}\;\theta_0};\substack{\sigma_s \vspace{0.15cm} \\  \sigma_t}\right].
\end{align}\end{subequations}

\section{Difference equations for the confluent Virasoro fusion kernels} \label{Ckdifferencesec}
In the previous section, we recalled the definition of the Virasoro fusion kernel and its eigenfunction properties. In this section, we describe the limit of the Virasoro fusion kernel leading to the family of confluent Virasoro fusion kernels $\mathcal{C}_k$, and we show that the kernel $\mathcal{C}_k$ satisfies two different pairs of difference equations for each $k$.

\subsection{Confluent Virasoro fusion kernels} 
The confluent Virasoro fusion kernels $\{\mathcal{C}_k\}_{k=1}^\infty$ were introduced in \cite{LR} as confluent limits of the Virasoro fusion kernel. Let $\boldsymbol{\theta}=(\theta_0,\theta_t,\theta_*) \in \mathbb{R}^3$ be a vector of parameters and let $\Delta(x)$ be the function defined by
\beq
\Delta(x)=\frac{Q^2}4+x^2.
\eeq
The kernel $\mathcal{C}_k$ is defined for any integer $k \geq 1$ by \cite[Eq. (5.5)]{LR}
\begin{equation}\label{gnm}
\begin{split}
\mathcal{C}_k\lb b,\boldsymbol{\theta},\nu, \sigma_s\rb  = P^{(k)}\lb \boldsymbol{\theta},\nu, \sigma_s\rb \displaystyle \int_{\mathsf{C}} dx ~ I^{(k)}\lb x,\boldsymbol{\theta},\nu, \sigma_s\rb\end{split}, 
\end{equation}
where the prefactor $P^{(k)}$ is given by\footnote{Complex powers are defined on the universal cover of $\mathbb{C}\setminus \{0\}$, i.e., $$(e^{2i\pi(\lfloor \frac{k}2 \rfloor-\frac12)}b)^\alpha = e^{2i\pi \alpha (\lfloor \frac{k}2 \rfloor-\frac12)}b^\alpha.$$}
\beq \label{P} \begin{split}
P^{(k)}\lb \boldsymbol{\theta},\nu, \sigma_s\rb = & \left(e^{2i\pi(\lfloor \frac{k}2 \rfloor-\frac12)}b\right)^{\Delta(\theta_0)+\Delta(\theta_t)-\Delta(\sigma_s)+\frac{\theta_*^2}2-2\nu^2}  
	\\
& \times \displaystyle \prod_{\epsilon=\pm 1}\frac{g_b\lb \epsilon \sigma_s-\theta_* \rb g_b \lb \epsilon \sigma_s-\theta_0-\theta_t\rb g_b \lb \epsilon \sigma_s+\theta_0-\theta_t \rb}{g_b \lb -\frac{iQ}{2}+2\epsilon \sigma_s \rb g_b\left(\nu-\frac{\theta_*}{2}+\epsilon \theta_0\right)g_b\left(-\theta_t+\epsilon(\nu+\frac{\theta_*}{2})\right)},
\end{split} \eeq
and the integrand $I^{(k)}$ is given by 
\beq \label{I}
I^{(k)}\lb x,\boldsymbol{\theta},\nu, \sigma_s\rb = e^{(-1)^{k+1}i\pi x\lb \frac{iQ}{2}+\frac{\theta_*}{2}+\theta_t+\nu \rb}\frac{s_b \lb x+\frac{\theta_*}{2}-\theta_t+\nu \rb}{s_b \big( x+ \frac{i Q}{2} \big) } \prod_{\epsilon=\pm 1} \frac{s_b \lb x+\epsilon \theta_0 +\nu-\frac{\theta_*}{2}\rb }{s_b\big( x+ \frac{iQ}{2}+\nu-\frac{\theta_*}{2}-\theta_t + \epsilon \sigma_s \big)}.
\eeq
The integration contour $\mathsf{C}$ in (\ref{gnm}) is defined as follows. As a function of $x \in \mathbb{C}$, the numerator in the integrand has three decreasing semi-infinite sequences of poles, while the denominator has three increasing semi-infinite sequences of zeros. The contour $\mathsf{C}$ in (\ref{gnm}) is any curve from $-\infty$ and $+\infty$ which separates the increasing from the decreasing sequences. In addition to the restrictions that $b>0$ and $(\theta_0,\theta_t,\theta_1,\theta_\infty) \in \mathbb{R}^4$ imposed by Assumption  \ref{assumption}, we temporarily also assume that $(\theta_*, \nu, \sigma_s) \in \mathbb{R}^3$. In this case, the contour of integration $\mathsf{C}$ can be any curve from $-\infty$ to $+\infty$ lying within the strip $\im{x} \in (-Q/2,0)$. Moreover, with this choice of $\mathsf{C}$, the integral in (\ref{gnm}) converges. Indeed, it follows from the asymptotic formula (\ref{sbasymptotics}) for $s_b$ that the integrand $I^{(k)}$ obeys the estimate
\begin{align}\label{Ikestimate}
I^{(k)}\lb x,\boldsymbol{\theta},\nu, \sigma_s\rb = O(e^{-\pi Q |\re x|}), \qquad \re x \to \pm \infty,
\end{align}
uniformly for $(b, \im x, \theta_0, \theta_t, \theta_*, \nu, \sigma_s)$ in compact subsets of $\mathbb{R}_{>0} \times \mathbb{R} \times \mathbb{C}^5$. As in the case of $F$, the functions $\mathcal{C}_k$ are defined for more general values of the variables by analytic continuation.

%More generally, the function \eqref{gnm} extends to a meromorphic function of $(\theta_0,\theta_t,\theta_*,\nu,\sigma_s)$ provided that $b\notin i \mathbb{R}$.
%This assumption is made primarily for simplicity: we expect all the results of this article to admit an analytic continuation to more general values of the parameters, such as $b\notin i\mathbb{R}$, $\boldsymbol{\theta} \in \mathbb{C}^3$ and $(\nu,\sigma_s)\in \mathbb{C}^2$.

The kernels $\mathcal{C}_k$ are confluent limits of the Virasoro fusion kernel \eqref{fusion01}. Indeed, define the function $M$ by
\beq\label{Mdef}
M\left[\substack{\theta_0\;\;\theta_t\vspace{0.1cm}\\ \theta_\infty\;\theta_1};\substack{\sigma_t\vspace{0.15cm} \\  \sigma_s}\right] = e^{i\pi\lb \Delta(\sigma_t)-\Delta(\theta_1)-\Delta(\theta_t) \rb} ~
~ F\left[\substack{\theta_0\;\;\theta_t\vspace{0.1cm}\\ \theta_\infty\;\theta_1};\substack{\sigma_t\vspace{0.15cm} \\  \sigma_s}\right],
\eeq
where $F$ is defined by \eqref{fusion01}. 
Moreover, define the normalization factor $L_k(\Lambda,\nu,\sigma_s)$ for $\Lambda > 0$ and $k \geq 1$ by
\beq \label{Lj}
L_k(\Lambda,\nu,\sigma_s) = e^{-i \pi  \left(-\Delta \left(\frac{\theta_*+\Lambda }{2}\right)-\Delta (\theta_t)+\Delta \left(\frac{\Lambda }{2}-\nu \right)\right)} \lb e^{2 i \pi  (k-1)} ib\Lambda\rb^{\left(\Delta (\theta_0)+\Delta (\theta_t)-\Delta (\sigma_s)+\frac{\theta_*^2}{2}-2 \nu ^2\right)}.
\eeq
Then, for any integer $k \geq 1$, 
\beq \label{confluentlimit}
 \lim_{\Lambda \to + \infty} \lb L_k(\epsilon\Lambda,\nu,\sigma_s) M \left[\substack{\theta_0\;\;\;\;\;\;\;\;\theta_t \vspace{0.1cm}\\ \frac{\epsilon\Lambda-\theta_*}2\;\frac{\epsilon\Lambda+\theta_*}2};\substack{\frac{\epsilon\Lambda}2-\nu\vspace{0.15cm} \\ \sigma_s}\right] \rb = \begin{cases} \mathcal{C}_{2k} \lb b,\boldsymbol{\theta},\nu, \sigma_s\rb, & \epsilon=+1, \\
\mathcal{C}_{2k-1} \lb b,\boldsymbol{\theta},\nu, \sigma_s\rb,& \epsilon=-1.
 \end{cases}
 \eeq
The relation between $F$ and $\mathcal{C}_k$ was not stated in this form in \cite{LR}, but equation (\ref{confluentlimit}) is convenient for our present purposes and can be deduced from \cite[Section 6.1]{LR}.

In the remainder of this section, we show that the four difference equations in  \eqref{difference1} and \eqref{difference2} satisfied by the Virasoro fusion kernel survive in the confluent limit \eqref{confluentlimit}; this leads to four difference equations satisfied by $\mathcal{C}_k$ for each $k$. 
Since we seek to use (\ref{confluentlimit}), we rewrite the difference equations \eqref{difference1} and \eqref{difference2} for $F$ in terms of $M$ using (\ref{Mdef}). Define the difference operators $H_M$ and $\tilde H_M$ by
\begin{subequations}\begin{align}\nonumber
H_M\left[\substack{\theta_0\;\;\theta_t\vspace{0.1cm}\\ \theta_\infty \;\theta_1};\substack{b,\;\sigma_t}\right] = &\; e^{2\pi b(\sigma_t+\frac{ib}2)} H_F^+\left[\substack{\theta_0\;\;\theta_t\vspace{0.1cm}\\ \theta_{\infty}\;\theta_1};\substack{b,\;\sigma_t}\right] e^{i b \partial_{\sigma_t}} 
+ e^{2\pi b(-\sigma_t+\frac{ib}2)} H_F^+\left[\substack{\theta_0\;\;\theta_t\vspace{0.1cm}\\ \theta_{\infty}\;\theta_1};\substack{b,\;-\sigma_t}\right] e^{-i b \partial_{\sigma_t}}
	\\ \label{HM}  
& + H_F^0\left[\substack{\theta_0\;\;\theta_t\vspace{0.1cm}\\ \theta_{\infty}\;\theta_1};\substack{b,\;\sigma_t}\right], 
	\\ \label{HMtilde}  
\tilde H_M\left[\substack{\theta_1\;\;\theta_t\vspace{0.1cm}\\ \theta_\infty \;\theta_0};\substack{b,\;\sigma_s}\right] = &\; \tilde H_F\left[\substack{\theta_1\;\;\theta_t\vspace{0.1cm}\\ \theta_{\infty}\;\theta_0};\substack{b,\;\sigma_s}\right].
\end{align} \end{subequations}
It follows from \eqref{difference1} and \eqref{difference2}  that $M$ satisfies the pair of difference equations
\begin{subequations}\label{differenceM}\begin{align}
\label{differenceM1} & H_M\left[\substack{\theta_0\;\;\theta_t\vspace{0.1cm}\\ \theta_\infty \;\theta_1};\substack{b,\;\sigma_t}\right]~M \left[\substack{\theta_0\;\;\theta_t\vspace{0.1cm}\\ \theta_\infty\;\theta_1};\substack{\sigma_t\vspace{0.15cm} \\  \sigma_s}\right] = 2\cosh{(2\pi b \sigma_s)}~M \left[\substack{\theta_0\;\;\theta_t\vspace{0.1cm}\\ \theta_\infty\;\theta_1};\substack{\sigma_t\vspace{0.15cm} \\  \sigma_s}\right], \\
\label{differenceM2} & \tilde H_M\left[\substack{\theta_1 \;\;\theta_t\vspace{0.1cm}\\ \theta_\infty \;\theta_0};\substack{b,\;\sigma_s}\right]~M \left[\substack{\theta_0\;\;\theta_t\vspace{0.1cm}\\ \theta_\infty\;\theta_1};\substack{\sigma_t\vspace{0.15cm} \\  \sigma_s}\right] = 2\cosh{(2\pi b \sigma_t)}~M \left[\substack{\theta_0\;\;\theta_t\vspace{0.1cm}\\ \theta_\infty\;\theta_1};\substack{\sigma_t\vspace{0.15cm} \\  \sigma_s}\right],
\end{align}\end{subequations}
as well as the pair of difference equations obtained by replacing $b \to b^{-1}$ in \eqref{differenceM}.

\subsection{First pair of difference equations}\label{section6p2}  The first pair of difference equations for $\mathcal{C}_k$ is found by applying the limit \eqref{confluentlimit} to the difference equation \eqref{differenceM1}. Equation \eqref{differenceM1} can be written as
\beq\label{4p8}\begin{split}
\lb L_k(\epsilon\Lambda,\nu,\sigma_s) H_M\left[\substack{\theta_0\;\;\theta_t\vspace{0.1cm}\\ \theta_\infty \;\theta_1};\substack{b,\;\sigma_t}\right]  L_k(\epsilon\Lambda,\nu,\sigma_s)^{-1}\rb 
&L_k(\epsilon\Lambda,\nu,\sigma_s) M \left[\substack{\theta_0\;\;\theta_t\vspace{0.1cm}\\ \theta_\infty\;\theta_1};\substack{\sigma_t\vspace{0.15cm} \\  \sigma_s}\right] 
	\\
     & = 2\cosh{\lb 2\pi b \sigma_s\rb} ~ L_k(\epsilon\Lambda,\nu,\sigma_s) M \left[\substack{\theta_0\;\;\theta_t\vspace{0.1cm}\\ \theta_\infty\;\theta_1};\substack{\sigma_t\vspace{0.15cm} \\  \sigma_s}\right].
\end{split}\eeq
Introduce the difference operator $H_{\mathcal{C}_k}$ by
\beq\label{Hcn}
H_{\mathcal{C}_k}(b,\nu) = H_{\mathcal{C}_k}^{+}(\nu) e^{ib \partial_{\nu}} + H^{-}_{\mathcal{C}_k}(\nu) e^{-ib \partial_{\nu}}+H^{0}_{\mathcal{C}_k}(\nu),
\eeq
where
\beq \label{Hcnplusminus} \begin{split}
& H_{\mathcal{C}_k}^{+}(\nu) = \frac{4 \pi ^2 e^{-4 \pi b \lfloor \frac{k-1}2 \rfloor (i b+2\nu)}}{\prod _{\epsilon =\pm1} \Gamma \lb \frac{bQ}2 +ib \lb \epsilon \theta_0+\frac{\theta_*}2-\nu \rb \rb \Gamma \lb \frac{bQ}2+ib\lb \epsilon \theta_t-\frac{\theta_*}2-\nu \rb \rb}, \\
& H_{\mathcal{C}_k}^{-}(\nu) = \frac{4 \pi ^2 e^{-4\pi b\lb \lfloor \frac{k}2 \rfloor-\frac12\rb (ib-2\nu)}}{\prod _{\epsilon =\pm1} \Gamma \lb \frac{bQ}2 +ib \lb \epsilon \theta_0-\frac{\theta_*}2+\nu \rb \rb \Gamma \lb \frac{bQ}2+ib\lb \epsilon \theta_t+\frac{\theta_*}2+\nu \rb \rb}, 
\end{split} \eeq
and
\beq \label{Hn0} \begin{split}
 H_{\mathcal{C}_k}^{0}(\nu) = & ~ 4 e^{(-1)^k\pi  b (\theta_0+\theta_t+2 \nu )} \cosh(\pi  b(\tfrac{i b}{2}+\theta_0+\tfrac{\theta_*}{2}-\nu)) \cosh(\pi  b(\tfrac{i b}{2}-\tfrac{\theta_*}{2}+\theta_t-\nu)) \\ 
&+4 e^{(-1)^{k+1}\pi b (\theta_0+\theta_t-2 \nu )} \cosh(\pi  b(\tfrac{i b}{2}+\theta_t+\tfrac{\theta_*}{2}+\nu)) \cosh(\pi  b(\tfrac{i b}{2}-\tfrac{\theta_*}{2}+\theta_0+\nu))\\
& -2 \cosh(2 \pi  b(\tfrac{i b}{2}+\theta_0+\theta_t)).
\end{split}\eeq
The next lemma shows that $H_{\mathcal{C}_k}$ is the limit as $\Lambda \to + \infty$ of the operator in round brackets on the left-hand side of \eqref{4p8} evaluated with
\beq\label{paramconf}
\theta_\infty = \frac{\epsilon\Lambda-\theta_*}2, \quad \theta_1 = \frac{\epsilon\Lambda+\theta_*}2, \quad \sigma_t = \frac{\epsilon\Lambda}2-\nu.
\eeq

\begin{lemma}\label{confluentlimHMlemma}
For each integer $k \geq 1$,
\beq\label{confluentlimHM}
\lim_{\Lambda\to +\infty}\lb L_k(\epsilon\Lambda,\nu,\sigma_s) H_M\left[\substack{\theta_0\;\;\;\;\;\;\;\theta_t\vspace{0.1cm}\\\frac{\epsilon \Lambda-\theta_*}2  \;\;\frac{\epsilon \Lambda+\theta_*}2};\substack{b,\;\frac{\epsilon \Lambda}2-\nu}\right]  L_k(\epsilon\Lambda,\nu,\sigma_s)^{-1}\rb=\begin{cases} H_{\mathcal{C}_{2k}}(b,\nu), & \epsilon=+1, \\ H_{\mathcal{C}_{2k-1}}(b,\nu),& \epsilon=-1. \end{cases}
\eeq
\end{lemma}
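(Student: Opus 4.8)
The plan is to compute the $\Lambda \to +\infty$ limit of the conjugated operator $L_k(\epsilon\Lambda,\nu,\sigma_s)\, H_M\, L_k(\epsilon\Lambda,\nu,\sigma_s)^{-1}$ term by term, matching each of the three pieces of $H_M$ (the two shift terms and the multiplication term $H_F^0$) against the corresponding piece of $H_{\mathcal{C}_k}$ in \eqref{Hcn}. The first observation I would record is that the conjugation acts simply on the shift operators: since $L_k$ depends on $\sigma_t$ only through $\nu$ via the substitution $\sigma_t = \tfrac{\epsilon\Lambda}{2}-\nu$, the operator $e^{ib\partial_{\sigma_t}}$ becomes $e^{-ib\partial_{\nu}}$ and $e^{-ib\partial_{\sigma_t}}$ becomes $e^{+ib\partial_{\nu}}$, so conjugating by $L_k$ multiplies the $e^{\pm ib\partial_{\nu}}$ term by the ratio $L_k(\epsilon\Lambda,\nu,\sigma_s)/L_k(\epsilon\Lambda,\nu\mp ib,\sigma_s)$. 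Because $L_k$ is (up to the $\sigma_s$-dependent phase) an exponential of a quadratic in $\nu$ times a power of $ib\Lambda$, this ratio is an explicit elementary expression in $\nu$, $b$, and $\Lambda$, and I would compute it in closed form first.

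Next I would substitute the parametrization \eqref{paramconf} into the explicit formulas \eqref{K} for $H_F^+$ and \eqref{H0} for $H_F^0$ and take the $\Lambda\to\infty$ asymptotics. The multiplication term $H_F^0\bigl[\substack{\theta_0\;\theta_t\\ \theta_\infty\;\theta_1};\substack{b,\sigma_t}\bigr]$ contains $\cosh$ factors whose arguments grow linearly in $\Lambda$; after writing each $\cosh$ as a sum of exponentials and combining them with the exponential prefactors $e^{2\pi b(\pm\sigma_t+\frac{ib}{2})}$ built into $H_M$, the dominant balance should produce exactly the three terms of $H_{\mathcal{C}_k}^0(\nu)$ in \eqref{Hn0}, with the sign factors $(-1)^k$ emerging from the branch-of-power convention $(e^{2i\pi(\lfloor k/2\rfloor - 1/2)}b)^\alpha$ recorded in the footnote and from the parity split $\epsilon=\pm1 \leftrightarrow k$ even/odd. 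For the shift terms, I would use the reflection/duplication behavior of the Gamma functions in $H_F^+$ together with Stirling's asymptotic formula $\Gamma(z)\sim \sqrt{2\pi}\,z^{z-1/2}e^{-z}$ to extract the leading behavior of $H_F^+[\cdots;\substack{b,\sigma_t}]$ as $\sigma_t = \tfrac{\epsilon\Lambda}{2}-\nu \to \pm\infty$; the $\Lambda$-dependent growth here is precisely what the power of $ib\Lambda$ in $L_k$ is designed to cancel, leaving the finite coefficients $H^{\pm}_{\mathcal{C}_k}(\nu)$ of \eqref{Hcnplusminus} together with their exponential factors $e^{-4\pi b\lfloor\frac{k-1}{2}\rfloor(ib+2\nu)}$ and $e^{-4\pi b(\lfloor\frac{k}{2}\rfloor-\frac12)(ib-2\nu)}$.

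The main obstacle I expect is bookkeeping of the exponential factors and the branch conventions: there are several competing sources of exponentials growing in $\Lambda$ (the prefactors in $H_M$, the Stirling tails of the four $\Gamma$'s in $H_F^+$, the $L_k$-ratio, and the quadratic $\Delta$-terms in $L_k$), and verifying that the $\Lambda$-growth and the $\nu$-independent phases cancel exactly — so that a finite limit survives — requires careful tracking of the floor functions $\lfloor k/2\rfloor$ and $\lfloor (k-1)/2\rfloor$ and the sign $(-1)^k$. A convenient way to organize this is to treat the two cases $\epsilon=+1$ (giving $\mathcal{C}_{2k}$) and $\epsilon=-1$ (giving $\mathcal{C}_{2k-1}$) in parallel, noting that changing $\epsilon$ flips $\sigma_t\to\pm\infty$ and thereby exchanges the roles of the two shift coefficients, which is exactly what distinguishes the floor-function exponents in $H^{+}_{\mathcal{C}_k}$ from those in $H^{-}_{\mathcal{C}_k}$. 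Once the three limits are shown to match \eqref{Hcn}, the lemma follows by linearity, since the translation operators $e^{\pm ib\partial_\nu}$ and the identity are linearly independent as operators on meromorphic functions.
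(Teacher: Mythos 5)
Your route is essentially the paper's: the paper likewise conjugates $H_M$ by $L_k$, uses $\sigma_t=\tfrac{\epsilon\Lambda}{2}-\nu$ to turn the $\sigma_t$-shifts into $\nu$-shifts of opposite sign, collects the shift coefficients into explicit factors $X_{\pm1}(\Lambda,\nu)$ (Eq.~\eqref{Xk}) shown to tend to $1$ via the Gamma asymptotics \eqref{asympgamma} (Stirling), and obtains $\lim_{\Lambda\to+\infty}H_F^0=H^0_{\mathcal{C}_{2k}}$ by rewriting the hyperbolic functions as exponentials (Eqs.~\eqref{limH01}--\eqref{HF0limit}), treating $\epsilon=\pm1$ in parallel.

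However, two concrete bookkeeping errors in your sketch would derail the computation if carried out as written. First, the prefactors $e^{2\pi b(\pm\sigma_t+\frac{ib}{2})}$ in the definition \eqref{HM} of $H_M$ multiply the two shift terms only; the multiplication term $H_F^0$ carries no prefactor, and its limit is finite on its own, because the $\cosh$ factors in \eqref{H0} whose arguments grow linearly in $\Lambda$ are compensated by the two $\sinh$ factors in the denominators --- this is precisely the computation of the coefficients $J_{\pm1}$ in \eqref{Jk} and the limit \eqref{limitjk}. Attaching those prefactors to $H_F^0$, as you propose, would make that term diverge like $e^{\pi b\Lambda}$ and would simultaneously strip the shift terms of the exponential factor they need: the power $(ib\Lambda)^{\cdots}$ in $L_k$ cancels only the power-law part of the Stirling asymptotics of the Gamma quotient, while the factors $e^{\pm\pi b\Lambda}$ coming from $e^{2\pi b(\pm\sigma_t+\frac{ib}{2})}$ are cancelled by the Gaussian factor $e^{-i\pi\Delta(\frac{\Lambda}{2}-\nu)}$ contained in $L_k$ (via the ratio of $L_k$'s at shifted $\nu$), not by the power of $ib\Lambda$. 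Second, the conjugation pairing has a sign slip: $L_k(\epsilon\Lambda,\nu,\sigma_s)\,C(\nu)\,e^{\pm ib\partial_\nu}\,L_k(\epsilon\Lambda,\nu,\sigma_s)^{-1}=C(\nu)\,\frac{L_k(\epsilon\Lambda,\nu,\sigma_s)}{L_k(\epsilon\Lambda,\nu\pm ib,\sigma_s)}\,e^{\pm ib\partial_\nu}$, so the ratio involves $\nu\pm ib$, not $\nu\mp ib$ (compare \eqref{6p14}). Both errors are local and repairable, and with them corrected your argument coincides with the paper's proof.
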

\begin{proof}
Let us consider the case $\epsilon=+1$. Using \eqref{HM}, we can write
\beq\label{6p14}\begin{split}
& L_k(\Lambda,\nu,\sigma_s)  H_M\left[\substack{\theta_0\;\;\;\;\;\;\;\theta_t\vspace{0.1cm}\\\frac{ \Lambda-\theta_*}2  \;\;\frac{\Lambda+\theta_*}2};\substack{b,\;\frac{\Lambda}2-\nu}\right]  L_k(\Lambda,\nu,\sigma_s)^{-1} = H_F^0\left[\substack{\theta_0 \quad\;\;\;\theta_t\vspace{0.1cm}\\ \frac{\Lambda-\theta_*}2 \;\; \frac{\Lambda+\theta_*}2 };\substack{b,\;\frac{\Lambda}2-\nu}\right]  \\
& + e^{\pi b \Lambda} \frac{L_k(\Lambda,\nu,\sigma_s)}{L_k(\Lambda,\nu-ib,\sigma_s)} H_F^+\left[\substack{\theta_0 \quad\;\;\;\theta_t\vspace{0.1cm}\\ \frac{\Lambda-\theta_*}2 \;\; \frac{\Lambda+\theta_*}2 };\substack{b,\;\frac{\Lambda}2-\nu}\right] e^{\pi b(ib-2\nu)} e^{-i b \partial_\nu} \\
& + e^{-\pi b \Lambda} \frac{L_k(\Lambda,\nu,\sigma_s)}{L_k(\Lambda,\nu+ib,\sigma_s)} H_F^+\left[\substack{\theta_0 \quad\;\;\;\theta_t\vspace{0.1cm}\\ \frac{\Lambda-\theta_*}2 \;\; \frac{\Lambda+\theta_*}2 };\substack{b,\;-\frac{\Lambda}2+\nu}\right] e^{\pi b(ib+2\nu)} e^{i b \partial_\nu},
\end{split}\eeq
where the coefficients $H_F^+$ and $L_k$ are given in \eqref{K} and \eqref{Lj}, respectively, and $H_0$ is defined in \eqref{H0}. It is straightforward to verify that \eqref{6p14} can be brought to the following form:
 \beq\label{6p15}\begin{split}
& L_k(\Lambda,\nu,\sigma_s)  H_M\left[\substack{\theta_0\;\;\;\;\;\;\;\theta_t\vspace{0.1cm}\\\frac{ \Lambda-\theta_*}2  \;\;\frac{\Lambda+\theta_*}2};\substack{b,\;\frac{\Lambda}2-\nu}\right] L_k(\Lambda,\nu,\sigma_s)^{-1}   \\
&= H_F^0\left[\substack{\theta_0 \quad\;\;\;\theta_t\vspace{0.1cm}\\ \frac{\Lambda-\theta_*}2 \;\; \frac{\Lambda+\theta_*}2 };\substack{b,\;\frac{\Lambda}2-\nu}\right]  + X_{+1}(\Lambda,\nu) ~H_{\mathcal{C}_{2k}}^{+}(\nu)~e^{ib\partial_\nu} + X_{-1}(\Lambda,\nu)~H_{\mathcal{C}_{2k}}^{-}(\nu)~e^{-ib\partial_\nu},
\end{split}\eeq
where $H_{\mathcal{C}_k}^{\pm}(\nu)$ are given in \eqref{Hcnplusminus} and the coefficients $X_j(\Lambda,\nu)$, $j = \pm 1$, are defined by
\beq\label{Xk}\begin{split}
&X_j(\Lambda,\nu)=(j i b \Lambda )^{-2 b (b-2 i j \nu )}\\
&\times\frac{\Gamma (b (b+i j (\Lambda -2 \nu ))) \Gamma \left(b^2+i j (\Lambda -2 \nu ) b+1\right) \Gamma \left(2 b^2+i j (\Lambda -2 \nu ) b+1\right) \Gamma (j i b (\Lambda -2 \nu ))}{\prod_{\epsilon=\pm1}\Gamma \left(\frac{b Q}{2}-j i b \left(\epsilon \theta_0+\frac{\theta_*}{2}-\Lambda +\nu \right)\right) \Gamma \left(\frac{b Q}{2}+b i j \left(\frac{\theta_*}{2}+\epsilon  \theta_t+\Lambda -\nu \right)\right)}.
\end{split}\eeq
It remains to compute the limit $\Lambda \to +\infty$ of \eqref{6p15}. First, the asymptotic formula
\beq \label{asympgamma}
\Gamma(z+a) \sim \sqrt{2\pi}~z^{z+a-\frac12}~e^{-z}, \qquad z \to \infty, ~ z + a \in \mathbb{C} \setminus \mathbb{R}_{\leq 0}, ~  |a| < |z|
\eeq
shows that
\beq\label{Xpmlimit}
\lim_{\Lambda \to +\infty} X_{\pm1}(\Lambda,\nu) = 1.
\eeq
Second, the first term on the right-hand side of \eqref{6p15} takes the form
\beq\label{limH01}\begin{split}
H_F^0\left[\substack{\theta_0 \quad\;\;\;\theta_t\vspace{0.1cm}\\ \frac{\Lambda-\theta_*}2 \;\; \frac{\Lambda+\theta_*}2 };\substack{b,\;\frac{\Lambda}2-\nu}\right]  
= & -2 \cosh(2 \pi  b(\tfrac{i b}{2}+\theta_0+\theta_t)) 
	\\
& + J_{+1}(\Lambda,\nu) \cosh(\pi  b(-\tfrac{ib}2-\theta_0-\tfrac{\theta_*}{2}+\nu)) \cosh(\pi  b(-\tfrac{ib}2+\tfrac{\theta_*}{2}-\theta_t+\nu)) \\
& + J_{-1}(\Lambda,\nu) \cosh(\pi b(-\tfrac{ib}2-\theta_0+\tfrac{\theta_*}{2}-\nu)) \cosh(\pi  b(\tfrac{i b}{2}+\tfrac{\theta_*}{2}+\theta_t+\nu)),
\end{split} \eeq
where the coefficients $J_j(\Lambda,\nu)$, $j=\pm1$, are defined by
\beq\label{Jk}
J_j(\Lambda,\nu) = 4j \frac{\cosh \left(\pi  b \left(-\frac{ib}2-\theta_0+\frac{\theta_* j}{2}-j \Lambda +j \nu \right)\right) \cosh \left(\pi  b \left(-\frac{ib}2-\theta_t-\frac{\theta_* j}{2}-j \Lambda +j \nu \right)\right)}{\sinh (\pi  b (\Lambda -2 \nu )) \sinh (\pi  b (i b+j \Lambda -2 j \nu ))}.
\eeq
The limit of $J_j$ as $\Lambda \to +\infty$ is easily computed by expressing the hyperbolic functions in terms of exponentials:
\beq\label{limitjk}
\lim_{\Lambda\to+\infty}J_j(\Lambda,\nu) = 4 e^{j \pi b(\theta_0+\theta_t+2j\nu)}, \qquad j = \pm 1.
\eeq
It follows from \eqref{limH01} and \eqref{limitjk} that
\beq\label{HF0limit}
\lim_{\Lambda\to+\infty} H_F^0\left[\substack{\theta_0 \quad\;\;\;\theta_t\vspace{0.1cm}\\ \frac{\Lambda-\theta_*}2 \;\; \frac{\Lambda+\theta_*}2 };\substack{b,\;\frac{\Lambda}2-\nu}\right]  = H_{\mathcal{C}_{2k}}^{0}(\nu). 
\eeq
Using (\ref{Xpmlimit}) and (\ref{HF0limit}), we can compute the limit of (\ref{6p15}) as $\Lambda \to +\infty$. Comparing the result with (\ref{Hcn}), we obtain \eqref{confluentlimHM} for $\epsilon=+1$. The case $\epsilon=-1$ is treated in a similar way.
\end{proof}

We can now state the first pair of difference equations for $\mathcal{C}_k$.

\begin{theorem}[First pair of difference equations for $\mathcal{C}_k$]\label{thm6p2} 
For each integer $k \geq 1$, the confluent fusion kernel $\mathcal{C}_k \lb b,\boldsymbol{\theta},\nu, \sigma_s\rb$ defined in \eqref{gnm} satisfies the following pair of difference equations: \begin{subequations} \label{eigenvaluecn} 
\begin{align} \label{eigenvaluecn1} & H_{\mathcal{C}_k}(b,\nu) ~ \mathcal{C}_k \lb b,\boldsymbol{\theta},\nu, \sigma_s\rb = 2\cosh{(2\pi b \sigma_s)} ~ \mathcal{C}_k \lb b,\boldsymbol{\theta}, \nu, \sigma_s\rb, \\
\label{eigenvaluecn2} & \lb b^{-4\nu^2} H_{\mathcal{C}_k}(b^{-1},\nu) b^{4\nu^2}\rb ~ \mathcal{C}_k \lb b,\boldsymbol{\theta},\nu, \sigma_s\rb = 2\cosh{(2\pi b^{-1} \sigma_s)} ~ \mathcal{C}_k \lb b, \boldsymbol{\theta},\nu, \sigma_s\rb,
\end{align}
\end{subequations}
where $H_{\mathcal{C}_k}$ is given in \eqref{Hcn}.
\end{theorem}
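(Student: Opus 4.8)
The plan is to obtain both equations by passing to the confluent limit $\Lambda \to +\infty$ in the two difference equations satisfied by $M$, using Lemma \ref{confluentlimHMlemma} to control the operators and \eqref{confluentlimit} to control the function. For \eqref{eigenvaluecn1} I would start from \eqref{4p8}, which is just \eqref{differenceM1} rewritten after conjugating $H_M$ by $L_k$ and evaluated at the confluent values \eqref{paramconf}. The right-hand side equals $2\cosh(2\pi b\sigma_s)$ times $L_k M$, whose limit is $2\cosh(2\pi b\sigma_s)\,\mathcal{C}_k$ by \eqref{confluentlimit}. On the left, Lemma \ref{confluentlimHMlemma} identifies the limit of the conjugated operator as $H_{\mathcal{C}_{2k}}(b,\nu)$ when $\epsilon=+1$ and as $H_{\mathcal{C}_{2k-1}}(b,\nu)$ when $\epsilon=-1$, while \eqref{confluentlimit} gives the limit of $L_k M$ as the corresponding $\mathcal{C}_k$. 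Combining these yields \eqref{eigenvaluecn1}; since every integer $k\geq 1$ is of the form $2k$ or $2k-1$, the two choices of $\epsilon$ exhaust all cases.

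For \eqref{eigenvaluecn2} I would repeat the argument starting from the $b\to b^{-1}$ companion of \eqref{differenceM1} mentioned after \eqref{differenceM}, but keeping the \emph{same} normalization $L_k=L_k^{(b)}$, so that the function limit is again $\mathcal{C}_k$ by \eqref{confluentlimit}. The only new point is the operator limit of $L_k^{(b)}\,H_M[b^{-1}]\,(L_k^{(b)})^{-1}$, where $H_M[b^{-1}]$ denotes the operator \eqref{HM} with $b$ replaced by $b^{-1}$; Lemma \ref{confluentlimHMlemma} does not apply verbatim because $L_k^{(b)}$ is adapted to $b$ rather than to $b^{-1}$. I would write $L_k^{(b)}\,H_M[b^{-1}]\,(L_k^{(b)})^{-1} = R\,\big(L_k^{(b^{-1})}\,H_M[b^{-1}]\,(L_k^{(b^{-1})})^{-1}\big)\,R^{-1}$ with $R=L_k^{(b)}/L_k^{(b^{-1})}$ viewed as a multiplication operator in $\nu$. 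Because $\Delta$, and hence the exponent $E=\Delta(\theta_0)+\Delta(\theta_t)-\Delta(\sigma_s)+\tfrac{\theta_*^2}{2}-2\nu^2$ of $L_k$, is invariant under $b\to b^{-1}$ (as $Q=b+b^{-1}$ is symmetric), both the powers of $\Lambda$ and the exponential prefactor cancel in $R$, which therefore is $\Lambda$-independent and equals $R=b^{2E}=C\,b^{-4\nu^2}$ with $C$ independent of $\nu$; the $\nu$-dependence comes solely from the $-2\nu^2$ term in $E$. Applying Lemma \ref{confluentlimHMlemma} with $b\to b^{-1}$ to the bracketed operator gives $H_{\mathcal{C}_k}(b^{-1},\nu)$ in the limit, and conjugation by $R$ produces exactly $b^{-4\nu^2}H_{\mathcal{C}_k}(b^{-1},\nu)b^{4\nu^2}$, which is the operator in \eqref{eigenvaluecn2}.

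The main obstacle is justifying that the limit $\Lambda\to+\infty$ commutes with the difference operators, i.e.\ that the limit of (operator applied to $L_kM$) equals (limit operator) applied to (limit function). Writing each operator as $\sum_j c_j^{\Lambda}(\nu)\,e^{jib\partial_\nu}$, this reduces to knowing that the coefficients $c_j^\Lambda$ converge, which is exactly the content of Lemma \ref{confluentlimHMlemma}, together with $L_kM\to \mathcal{C}_k$ \emph{locally uniformly} in a complex neighborhood of $\nu$, so that the shifted values $L_kM|_{\nu+jib}$ converge to $\mathcal{C}_k|_{\nu+jib}$. I would establish this local uniform convergence directly from the integral representation \eqref{gnm} combined with the uniform decay estimate \eqref{Ikestimate}, either by dominated convergence or by Vitali's theorem using local boundedness, exploiting that all the functions involved are meromorphic in $\nu$. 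Once this interchange is justified, the two equations follow exactly as sketched above, and the same estimates handle the $b\to b^{-1}$ shifts appearing in \eqref{eigenvaluecn2}.
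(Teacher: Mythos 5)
Your proposal is correct, and for \eqref{eigenvaluecn1} it is essentially identical to the paper's first proof: take $\Lambda\to+\infty$ in \eqref{4p8}, with Lemma \ref{confluentlimHMlemma} controlling the conjugated operator and \eqref{confluentlimit} controlling the function, the cases $\epsilon=\pm1$ exhausting even and odd indices. (The paper also gives a second, direct proof via the integral representation and the $s_b$-identities, which you do not attempt.) Where you genuinely differ is \eqref{eigenvaluecn2}: the paper does no further limit computation there, but instead invokes the symmetry \eqref{4p25}, $\mathcal{C}_k(b^{-1},\boldsymbol{\theta},\nu,\sigma_s)=b^{-2E}\,\mathcal{C}_k(b,\boldsymbol{\theta},\nu,\sigma_s)$ with $E=\Delta(\theta_0)+\Delta(\theta_t)-\Delta(\sigma_s)+\tfrac{\theta_*^2}{2}-2\nu^2$, and deduces \eqref{eigenvaluecn2} from \eqref{eigenvaluecn1} with $b\to b^{-1}$, the conjugation by $b^{\pm4\nu^2}$ arising because only the $-2\nu^2$ part of $E$ fails to commute with the $\nu$-shifts. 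You instead rerun the confluent limit on the $b\to b^{-1}$ companion of \eqref{differenceM1} keeping the $b$-adapted normalization, and your computation is sound: since $\Delta$, hence $E$ and the exponential prefactor in \eqref{Lj}, are invariant under $b\leftrightarrow b^{-1}$, the ratio $R=L_k^{(b)}/L_k^{(b^{-1})}=b^{2E}$ is indeed $\Lambda$-independent, and conjugating the limit operator $H_{\mathcal{C}_k}(b^{-1},\nu)$ (Lemma \ref{confluentlimHMlemma} with $b\to b^{-1}$, legitimate since the lemma only assumes $b>0$) by $R$ yields exactly $b^{-4\nu^2}H_{\mathcal{C}_k}(b^{-1},\nu)\,b^{4\nu^2}$. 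The two routes rest on the same structural fact --- the $b\leftrightarrow b^{-1}$ invariance of $E$ is precisely what underlies \eqref{4p25} --- so your version is self-contained (the paper quotes \eqref{4p25} without proof) at the modest cost of repeating the limit. One simplification: the interchange-of-limits issue you raise is lighter than you suggest, since \eqref{4p8} is an identity among finitely many terms of the form coefficient times function evaluated at $\nu$ or $\nu\pm ib$; pointwise convergence of the coefficients together with \eqref{confluentlimit} applied at the shifted arguments (available by analytic continuation) already suffices, whereas a dominated-convergence argument based on \eqref{Ikestimate} alone would not, as that estimate bounds the limiting integrand $I^{(k)}$ rather than the pre-limit family --- your Vitali alternative is the cleaner of the two options you mention.
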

\begin{proof} 
We only have to prove \eqref{eigenvaluecn1}, because the confluent fusion kernel satisfies
\beq\label{4p25}
\mathcal{C}_k \lb b^{-1},\boldsymbol{\theta},\nu, \sigma_s\rb = b^{-2(\Delta(\theta_0)+\Delta(\theta_t)-\Delta(\sigma_s)+\frac{\theta_*^2}2-2\nu^2)} \mathcal{C}_k \lb b,\boldsymbol{\theta},\nu, \sigma_s\rb.
\eeq
We will present two different proofs of (\ref{eigenvaluecn1}). The first proof has the advantage of being constructive and is based on Lemma \ref{confluentlimHMlemma}; this is the approach we first used to arrive at (\ref{eigenvaluecn}). The second proof is more direct and avoids the use of Lemma \ref{confluentlimHMlemma} and of confluent limits, but it assumes that the structure of (\ref{eigenvaluecn}) is already known.

{\it First proof of (\ref{eigenvaluecn1}).} With the help of \eqref{confluentlimit} and (\ref{confluentlimHM}), we can easily compute the limit $\Lambda \to +\infty$ of equation \eqref{4p8} for any integer $k \geq 1$ and $\epsilon = \pm1$. This gives (\ref{eigenvaluecn1}).

{\it Second proof of (\ref{eigenvaluecn1}).} 
Let us rewrite the integral representation \eqref{gnm} for $\mathcal{C}_k$ as follows:
\beq \label{cn}
\mathcal{C}_k \lb b,\boldsymbol{\theta},\nu, \sigma_s\rb=\displaystyle \int_{\mathsf{C}} dx ~ X_k(x,\nu) Y_k(x,\sigma_s) Z_k(x),
\eeq
where the dependence of the functions $X_k,Y_k,Z_k$ on $\boldsymbol{\theta}$ is omitted for simplicity. Performing the change of variables $x \to x-\nu$ in \eqref{gnm}, we find that $X_k,Y_k,Z_k$ are given by
\begin{align} \nonumber 
X_k(x,\nu)= &\; b^{-2\nu^2} e^{-4i\pi \nu^2(\lfloor \frac{k}2\rfloor-\frac12)}e^{(-1)^k i \pi \nu(\frac{iQ}2-x+\frac{\theta_*}2+\theta_t+\nu)} \tfrac{s_b(x+\frac{iQ}2-\nu)^{-1}}{\prod_{\epsilon=\pm1}g_b\left(\nu-\frac{\theta_*}{2} +\epsilon\theta_0\right) g_b\left(\epsilon \left(\frac{\theta_*}{2}+\nu \right)-\theta_t\right)},
	\\\nonumber
Y_k(x,\sigma_s) = &\; b^{-\Delta (\sigma_s)} e^{-2 i \pi  \Delta (\sigma_s) \left(\left\lfloor \frac{k}{2}\right\rfloor -\frac{1}{2}\right)}  \prod_{\epsilon=\pm1} \tfrac{g_b \lb \epsilon \sigma_s-\theta_* \rb g_b \lb \epsilon \sigma_s-\theta_0-\theta_t \rb g_b \lb \epsilon \sigma_s+\theta_0-\theta_t \rb}{g_b \lb 2\epsilon \sigma_s-\frac{iQ}2 \rb s_b \lb x+\frac{iQ}2-\frac{\theta_*}2-\theta_t+\epsilon \sigma_s \rb}, 
	\\\nonumber
Z_k(x) = &\; e^{i \pi  (-1)^{k+1} x \left(\frac{\theta_*}{2}+\theta_t+\frac{i Q}{2}\right)} b^{\Delta (\theta_0)+\Delta (\theta_t)+\frac{\theta_*^2}{2}} e^{2 i \pi  \left(\left\lfloor \frac{k}{2}\right\rfloor -\frac{1}{2}\right) \left(\Delta (\theta_0)+\Delta (\theta_t)+\frac{\theta_*^2}{2}\right)} 
	\\ \label{XYZ}
& \times s_b(x-\theta_0-\tfrac{\theta_*}2) s_b(x+\theta_0-\tfrac{\theta_*}2) s_b(x+\tfrac{\theta_*}2-\theta_t). 
\end{align}
As a consequence of the relations
\beq \label{gbsbdifferenceeqs}
\frac{g_b \lb z+\frac{ib}{2}\rb}{g_b\lb z-\frac{ib}{2}\rb}=\frac{b^{-ibz}\sqrt{2\pi}}{\Gamma \lb \frac{1}{2}-ibz \rb}, \qquad
\frac{s_b(z+\frac{ib}{2})}{s_b(z-\frac{ib}{2})}=2\cosh{\pi b z},
\eeq
the following identity follows from long but straightforward computations:
\beq \label{HnX} 
\frac{H_{\mathcal{C}_k}(b,\nu) X_k \lb x, \nu\rb}{X_k \lb x, \nu\rb} = 2\cosh{\lb \pi b (2x-\theta_*-2\theta_t)\rb} + \psi_k(x,\nu), \eeq
where
\beq
\psi_k(x,\nu) = -4 i e^{(-1)^k \pi b\lb \theta_t+\nu+\frac{iQ}2+\tfrac{\theta_*}2 \rb} \frac{\cosh{\lb \pi b (x+\frac{ib}2+\tfrac{\theta_*}2-\theta_t) \rb}}{\operatorname{sinh}{\lb \pi b(x+ib-\nu)\rb}} \displaystyle \prod_{\epsilon=\pm1} \cosh{(\pi b( x+\tfrac{ib}2+\epsilon \theta_0-\tfrac{\theta_*}2))}.\eeq
Using \eqref{HnX}, we obtain
\begin{align} \nonumber
H_{\mathcal{C}_k}(b,\nu) ~ \mathcal{C}_k \lb b,\boldsymbol{\theta},\nu, \sigma_s\rb = &  \int_{\mathsf{C}} dx ~ 2\cosh{\lb \pi b (2x-\theta_*-2\theta_t)\rb}   X_k(x,\nu) Y_k(x,\sigma_s) Z_k(x) 
	\\ \label{HnCn} 
& +  \int_{\mathsf{C}} dx ~ \psi_k(x,\nu) X_k(x,\nu) Y_k(x,\sigma_s) Z_k(x).
\end{align}
Moreover, the identity satisfied by $s_b$ in \eqref{gbsbdifferenceeqs} implies that the building blocks $X_k,Y_k,Z_k$ of the integral possess the following properties:
\begin{subequations} \label{identities} 
\begin{align}\label{identitiesX}
&\frac{X_k(x-ib,\nu)}{X_k(x,\nu)}=2ie^{(-1)^{k+1}\pi  b \nu  } \sinh (\pi  b (x-\nu )),
	\\\label{identitiesY}
&\frac{Y_k(x-ib,\sigma_s)}{Y_k(x,\sigma_s)}=2\cosh{\lb 2\pi b \sigma_s \rb} - 2\cosh{\lb \pi b(2x-\theta_*-2\theta_t) \rb},
	\\\label{identitiesZ}
&\frac{Z_k(x-ib)}{Z_k(x)} = \frac{e^{(-1)^{k+1}\frac{\pi b}{2} (\theta_*+2 \theta_t+i Q)}}{8\cosh{\lb \pi b(x-\frac{ib}2-\theta_0-\frac{\theta_*}2) \rb} \cosh{\lb \pi b(x-\frac{ib}2+\theta_0-\frac{\theta_*}2) \rb} \cosh{\lb \pi b (x-\frac{ib}2+\frac{\theta_*}2-\theta_t \rb}}.
\end{align}
\end{subequations}
Performing the change of variables $x \to x-ib$ in the second integral in \eqref{HnCn} and using \eqref{identitiesY}, we obtain
\begin{align}\nonumber 
H_{\mathcal{C}_k}(b,\nu) \mathcal{C}_k \lb\nu, \sigma_s\rb = & \int_{\mathsf{C}} dx ~ 2\cosh{\lb \pi b (2x-\theta_*-2\theta_t)\rb}  X_k(x,\nu) Y_k(x,\sigma_s) Z_k(x) 
	\\ \nonumber
& - \displaystyle \int_{\mathsf{C}} dx ~ 2\cosh{\lb \pi b (2x-\theta_*-2\theta_t)\rb} \psi_k(x-ib,\nu) X_k(x-ib,\nu) Y_k(x,\sigma_s) Z_k(x-ib) 
	\\ \label{4p33} 
& + 2\cosh{\lb 2\pi b \sigma_s\rb} \displaystyle \int_{\mathsf{C}} dx ~ \psi_k(x-ib,\nu) X_k(x-ib,\nu) Y_k(x,\sigma_s) Z_k(x-ib).
\end{align}
On the other hand, the identities \eqref{identitiesX} and \eqref{identitiesZ} imply that 
\beq\label{psinXnZn}
\psi_k(x-ib,\nu) X_k(x-ib,\nu) Z_k(x-ib) = X_k(x,\nu) Z_k(x).
\eeq
Equation \eqref{psinXnZn} ensures that the first two lines in \eqref{4p33} cancel and thus \eqref{eigenvaluecn1} follows from \eqref{4p33}. 
 \end{proof}

\subsection{Second pair of difference equations}
To derive the second pair of difference equations satisfied by $\mathcal{C}_k$, we rewrite \eqref{differenceM2} as
\beq\label{conflim2}\begin{split}
\lb e^{-\pi b \Lambda}  L_k(\Lambda,\nu,\sigma_s) \tilde H_M\left[\substack{\theta_1\;\;\theta_t\vspace{0.1cm}\\ \theta_\infty\;\theta_0};\substack{b,\;\sigma_s}\right]  L_k(\Lambda,\nu,\sigma_s)^{-1}\rb 
& L_k(\Lambda,\nu,\sigma_s)
 M\left[\substack{\theta_0\;\;\theta_t\vspace{0.1cm}\\ \theta_\infty\;\theta_1};\substack{\sigma_t\vspace{0.15cm} \\  \sigma_s}\right] 
 	\\
&  = 2e^{-\pi b \Lambda}\cosh{(2\pi b \sigma_t)}L_k(\Lambda,\nu,\sigma_s)M \left[\substack{\theta_0\;\;\theta_t\vspace{0.1cm}\\ \theta_\infty\;\theta_1};\substack{\sigma_t\vspace{0.15cm} \\  \sigma_s}\right].
 \end{split} \eeq
  It is easy to observe that
 \beq
 \lim_{\Lambda \to +\infty} 2e^{-\pi b \Lambda}\cosh{(2\pi b \sigma_t)}|_{\sigma_t = \frac{\epsilon\Lambda}2-\nu}=\begin{cases} e^{-2\pi b \nu}, & \epsilon=+1, \\  e^{2\pi b \nu}, & \epsilon = -1.  \end{cases}
 \eeq
Introduce the dual difference operator $\tilde{H}_{\mathcal{C}_k}$ by
\beq\label{Hcntilde}
\tilde{H}_{\mathcal{C}_k}(b,\sigma_s) =  \tilde{H}^+_{\mathcal{C}_k}(\sigma_s) e^{ib \partial_{\sigma_s}} +  \tilde{H}^+_{\mathcal{C}_k}(-\sigma_s) e^{-ib \partial_{\sigma_s}} + \tilde{H}^0_{\mathcal{C}_k}(\sigma_s),
\eeq
where
\beq \begin{split}
\tilde{H}^+_{\mathcal{C}_k}(\sigma_s) =&\; 2\pi e^{-2 \pi  b \left(\sigma_s+\frac{i b}{2}\right) \left(\left\lfloor \frac{k-1}{2}\right\rfloor +\left\lfloor \frac{k}{2}\right\rfloor -\frac{1}{2}\right)} \\
& \times \frac{\Gamma (1+2ib \sigma_s) \Gamma \left(1-b^2+2 ib \sigma_s\right) \Gamma (-2 b (b-i \sigma_s)) \Gamma (-b (b-2 i \sigma_s))}{\prod_{\epsilon_1=\pm1}\left\{\Gamma \left(\frac{1-b^2}2-i b \left(\epsilon_1 \theta_*-\sigma_s\right)\right) \prod_{\epsilon_2=\pm1} \Gamma \left(\frac{1-b^2}2-i b \left(\epsilon_1 \theta_0+\epsilon_2 \theta_t-\sigma_s\right)\right)\right\}},
\end{split} \eeq
and 
\beq \label{Htilden0} \begin{split}
\tilde{H}^0_{\mathcal{C}_k}(\sigma_s) &=-e^{(-1)^k \pi b(ib+\theta_*+2\theta_t)}+V_k(\sigma_s,\theta_t)+V_k(-\sigma_s,\theta_t),
\end{split} \eeq
with
\beq\label{Vkdef}
V_k(\sigma_s,\theta_t) = 2 e^{(-1)^{k+1}\pi b\left(\sigma_s-\frac{i b}{2}\right)}\cosh(\pi  b (\tfrac{i b}{2}+\theta_*-\sigma_s))\frac{\prod _{\epsilon =\pm1} \cosh \left(\pi  b \left(-\frac{ib}{2}-\theta_t+\sigma_s+\epsilon\theta_0 \right)\right)}{\sinh (\pi  b (2 \sigma_s-i b)) \sinh (2 \pi  b \sigma_s)}.
\eeq
The next lemma shows that $\tilde{H}_{\mathcal{C}_k}$ is the limit as $\Lambda \to + \infty$ of the operator in round brackets on the left-hand side of \eqref{conflim2} with the parameters chosen as in (\ref{paramconf}).

\begin{lemma}\label{confluentlimHMlemma2}
For each integer $k \geq 1$,
\beq\label{limtilde2}
\lim_{\Lambda \to +\infty} \lb  e^{-\pi b \Lambda}  L_k(\Lambda,\nu,\sigma_s) \tilde H_M\left[\substack{\frac{\epsilon\Lambda+\theta_*}2 \;\;\theta_t\vspace{0.1cm}\\ \frac{\epsilon\Lambda-\theta_*}2 \;\; \theta_0};\substack{b,\;\sigma_s}\right]  L_k(\Lambda,\nu,\sigma_s)^{-1}\rb = \begin{cases}
     \tilde{H}_{\mathcal{C}_{2k}}(b,\sigma_s), & \epsilon=+1, \\ 
     \tilde{H}_{\mathcal{C}_{2k-1}}(b,\sigma_s), & \epsilon=-1.
     \end{cases} 
\eeq
\begin{proof}
The proofs for $\epsilon=+1$ and $\epsilon=-1$ involve similar computations, so we only give the proof for $\epsilon=+1$. Using \eqref{HMtilde} and \eqref{Lj}, we  can write 
\beq \begin{split}
& e^{-\pi b \Lambda}  L_k(\Lambda,\nu,\sigma_s) \tilde H_M\left[\substack{\frac{\Lambda+\theta_*}2 \;\;\theta_t\vspace{0.1cm}\\ \frac{\Lambda-\theta_*}2 \;\; \theta_0};\substack{b,\;\sigma_s}\right]   L_k(\Lambda,\nu,\sigma_s)^{-1} \\
     & = e^{-\pi  b \Lambda } \chi(\Lambda,\sigma_s) \tilde{H}^+_{\mathcal{C}_{2k}}(\sigma_s) + e^{-\pi  b \Lambda} \chi(\Lambda,-\sigma_s) \tilde{H}^+_{\mathcal{C}_{2k}}(-\sigma_s) + e^{-\pi b \Lambda} H_F^0\left[\substack{\frac{\Lambda+\theta_*}2\;\;\theta_t\vspace{0.1cm}\\ \frac{\Lambda-\theta_*}2 \;\; \theta_0 };\substack{b,\;\sigma_s}\right],
\end{split}\eeq
where $H_0$ is given in \eqref{H0} and
\beq\label{chi}
\chi(\Lambda,\sigma_s)=\frac{2\pi e^{\pi  b \left(\sigma_s+\frac{i b}{2}\right)} (i b \Lambda )^{-b (b-2 i \sigma_s)}}{\Gamma \left(\frac{1-b^2}2-i b (\Lambda -\sigma_s)\right) \Gamma \left(\frac{1-b^2}2+ib (\Lambda +\sigma_s)\right)}.
\eeq
The asymptotics \eqref{asympgamma} of the gamma function implies that $\lim_{\Lambda\to+\infty} \chi(\Lambda,\pm\sigma_s)=e^{\pi b \Lambda}$.
Moreover, using that
\beq\begin{split}
& e^{-\pi b \Lambda} H_F^0\left[\substack{\frac{\Lambda+\theta_*}2\;\;\theta_t\vspace{0.1cm}\\ \frac{\Lambda-\theta_*}2 \;\; \theta_0 };\substack{b,\;\sigma_s}\right] = -2e^{-\pi b \Lambda} \cosh (\pi  b (i b+\theta_*+2 \theta_t+\Lambda ))  \\
     & +2e^{-\pi b \Lambda}e^{\pi b(\sigma_s-\frac{ib}2)}\cosh{(\pi b(\tfrac{ib}2+\Lambda-\sigma_s)}V_{2k}(\sigma_s,\theta_t) + 2e^{-\pi b \Lambda}e^{-\pi b(\sigma_s+\frac{ib}2)}\cosh{(\pi b(\tfrac{ib}2+\Lambda+\sigma_s)}V_{2k}(-\sigma_s,\theta_t),
\end{split}\eeq
where $V_k$ is given in \eqref{Vkdef}, it is easy to verify that
\beq
\lim_{\Lambda\to+\infty} e^{-\pi b \Lambda} H_F^0\left[\substack{\frac{\Lambda+\theta_*}2\;\;\theta_t\vspace{0.1cm}\\ \frac{\Lambda-\theta_*}2 \;\; \theta_0 };\substack{b,\;\sigma_s}\right] = \tilde{H}^0_{\mathcal{C}_{2k}}(\sigma_s).
\eeq
Recalling the definition (\ref{Hcntilde}) of $\tilde{H}_{\mathcal{C}_k}$, this proves (\ref{limtilde2}) for $\epsilon = +1$.
\end{proof}
\end{lemma}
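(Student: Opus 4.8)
The plan is to mirror the three-step argument used for the first pair in Lemma~\ref{confluentlimHMlemma}, now applied to the dual operator and to shifts in $\sigma_s$ rather than $\nu$. Since $\tilde H_M=\tilde H_F$ by \eqref{HMtilde}, the bracketed operator is a three-term difference operator in $\sigma_s$ whose outer coefficients come from $\tilde H_F^+$ in \eqref{tildeK} and whose central coefficient is $H_F^0$ in \eqref{H0}, all evaluated at the confluent parameters \eqref{paramconf} with the spectral variable relabelled $\sigma_t\to\sigma_s$. I would treat the case $\epsilon=+1$ in detail; the case $\epsilon=-1$ is analogous, the sign flip of $\Lambda$ inside the operator parameters being exactly what flips the relevant exponents and converts $\mathcal{C}_{2k}$ into $\mathcal{C}_{2k-1}$. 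The key structural observation is that $L_k$ in \eqref{Lj} depends on $\sigma_s$ only through the factor $(e^{2i\pi(k-1)}ib\Lambda)^{-\Delta(\sigma_s)}$. Hence conjugating $e^{\pm ib\partial_{\sigma_s}}$ by $L_k$ multiplies its coefficient by $L_k(\Lambda,\nu,\sigma_s)/L_k(\Lambda,\nu,\sigma_s\pm ib)$, which, because $\Delta(\sigma_s\pm ib)-\Delta(\sigma_s)=\pm 2ib\sigma_s-b^2$, is the pure power $(e^{2i\pi(k-1)}ib\Lambda)^{\pm 2ib\sigma_s-b^2}$. No $\nu$-shift enters, so no $\nu$-ratio of $L_k$ is produced; this is what makes the computation tractable.

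Next I would isolate the $\Lambda$-dependence of the outer coefficients. With the parameters \eqref{paramconf}, only the pair $(\tfrac{\Lambda+\theta_*}2,\tfrac{\Lambda-\theta_*}2)$ carries $\Lambda$, so exactly two of the eight denominator Gammas of $\tilde H_F^+$ become $\Gamma(\tfrac{1-b^2}2+ib(\sigma_s\pm\Lambda))$; the remaining Gammas are $\Lambda$-free and, after matching arguments, reproduce the denominator of $\tilde H^+_{\mathcal{C}_{2k}}(\sigma_s)$, while the numerator matches using $-2b(b-i\sigma_s)=2ib\sigma_s-2b^2$ and $-b(b-2i\sigma_s)=2ib\sigma_s-b^2$. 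I would thus write the $e^{ib\partial_{\sigma_s}}$-coefficient as $\chi(\Lambda,\sigma_s)\,\tilde H^+_{\mathcal{C}_{2k}}(\sigma_s)$ and similarly for the $-\sigma_s$ term, with $\chi$ absorbing the two $\Lambda$-dependent Gammas, the power of $ib\Lambda$ from the $L_k$-ratio, and the numerical constant. Applying the Stirling-type asymptotics \eqref{asympgamma} to $\Gamma(\tfrac{1-b^2}2+ib(\sigma_s\pm\Lambda))$, the exponentials $e^{\mp ib\Lambda}$ cancel, the factors $\Lambda^{\pm ib\Lambda}$ combine into $e^{-\pi b\Lambda}$, and the residual powers of $b\Lambda$ cancel exactly against the $L_k$-power. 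This yields $\lim_{\Lambda\to+\infty}\chi(\Lambda,\pm\sigma_s)=e^{\pi b\Lambda}$, so that $e^{-\pi b\Lambda}\chi(\Lambda,\pm\sigma_s)\to1$ and the two outer terms converge to $\tilde H^+_{\mathcal{C}_{2k}}(\pm\sigma_s)\,e^{\pm ib\partial_{\sigma_s}}$.

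For the central term I would expand $H_F^0$ via \eqref{H0} at the parameters \eqref{paramconf}. The first summand $-2\cosh(2\pi b(\tfrac{ib}2+\theta_1+\theta_t))$, evaluated at $\theta_1=\tfrac{\Lambda+\theta_*}2$, behaves like $\cosh(\pi b\Lambda+\text{const})$, and the $\sigma_s$-sum splits into two contributions carrying $\cosh(\cdots\pm\Lambda)$ in the numerator. After multiplying by $e^{-\pi b\Lambda}$ and rewriting every hyperbolic function in terms of exponentials, each $\Lambda$-dependent factor has a well-defined finite limit; collecting these reproduces $-e^{(-1)^k\pi b(ib+\theta_*+2\theta_t)}+V_{2k}(\sigma_s,\theta_t)+V_{2k}(-\sigma_s,\theta_t)$, i.e.\ $\tilde H^0_{\mathcal{C}_{2k}}(\sigma_s)$ in \eqref{Htilden0}, with $V_k$ as in \eqref{Vkdef}. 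Combining the three limits with the definition \eqref{Hcntilde} of $\tilde H_{\mathcal{C}_k}$ then gives \eqref{limtilde2}.

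The main obstacle is the bookkeeping in the second step: one must verify that the product $\Gamma(\tfrac{1-b^2}2+ib(\sigma_s+\Lambda))\,\Gamma(\tfrac{1-b^2}2+ib(\sigma_s-\Lambda))$, whose leading asymptotics is $\sim 2\pi(b\Lambda)^{2ib\sigma_s-b^2}e^{-\pi b\Lambda}$, cancels precisely against the $L_k$-power $(ib\Lambda)^{2ib\sigma_s-b^2}$ and the prefactor $e^{-\pi b\Lambda}$ to leave a finite constant. This requires choosing the branch of $(ib\Lambda)^{\alpha}$ consistently with the footnote convention and carefully tracking the phase $e^{2i\pi(k-1)}$ together with the floor functions; it is exactly this phase and branch accounting that produces the exponential prefactor $e^{-2\pi b(\sigma_s+\frac{ib}2)(\lfloor\frac{k-1}2\rfloor+\lfloor\frac{k}2\rfloor-\frac12)}$ in $\tilde H^+_{\mathcal{C}_k}$ and distinguishes the case $\epsilon=+1$ (giving $\mathcal{C}_{2k}$) from $\epsilon=-1$ (giving $\mathcal{C}_{2k-1}$).
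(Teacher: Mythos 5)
Your proposal is correct and follows essentially the same route as the paper's proof: the same decomposition of the conjugated operator into $\chi(\Lambda,\pm\sigma_s)\,\tilde{H}^+_{\mathcal{C}_{2k}}(\pm\sigma_s)\,e^{\pm ib\partial_{\sigma_s}}$ plus the central term $H_F^0$, the same use of the gamma asymptotics \eqref{asympgamma} to get $e^{-\pi b\Lambda}\chi(\Lambda,\pm\sigma_s)\to 1$, and the same rewriting of $e^{-\pi b\Lambda}H_F^0$ in terms of $V_{2k}(\pm\sigma_s,\theta_t)$ to recover $\tilde{H}^0_{\mathcal{C}_{2k}}(\sigma_s)$. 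The extra detail you supply (that $L_k$ enters only through the power $(e^{2i\pi(k-1)}ib\Lambda)^{-\Delta(\sigma_s)}$, which Gammas carry the $\Lambda$-dependence, and the exact Stirling cancellation $\Gamma(\tfrac{1-b^2}{2}+ib(\sigma_s+\Lambda))\Gamma(\tfrac{1-b^2}{2}+ib(\sigma_s-\Lambda))\sim 2\pi(b\Lambda)^{2ib\sigma_s-b^2}e^{-\pi b\Lambda}$) is accurate and merely makes explicit what the paper leaves as ``straightforward to verify.''
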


The following theorem follows from Lemma \ref{confluentlimHMlemma2} and (\ref{conflim2}) in the same way that Theorem \ref{thm6p2} followed from Lemma \ref{confluentlimHMlemma} and (\ref{4p8}). 

\begin{theorem}[Second pair of difference equations for $\mathcal{C}_k$]\label{thm6p4} 
For each integer $k \geq 1$, the confluent fusion kernel $\mathcal{C}_k \lb b,\nu, \sigma_s\rb$ defined in \eqref{cn} satisfies the following pair of difference equations:
\begin{subequations}\label{eigenvaluetildecn} \begin{align}
 & \tilde{H}_{\mathcal{C}_k}(b,\sigma_s) ~ \mathcal{C}_k \lb b,\boldsymbol{\theta},\nu, \sigma_s\rb = e^{(-1)^{k+1}2\pi b \nu}\mathcal{C}_k \lb b,\boldsymbol{\theta},\nu, \sigma_s\rb, \\
 & \lb b^{-2\Delta(\sigma_s)}  \tilde{H}_{\mathcal{C}_k}(b^{-1},\sigma_s) b^{2\Delta(\sigma_s)} \rb \mathcal{C}_k \lb b,\boldsymbol{\theta},\nu, \sigma_s\rb = e^{(-1)^{k+1}2\pi b^{-1} \nu}\mathcal{C}_k \lb b,\boldsymbol{\theta},\nu, \sigma_s\rb,
\end{align}\end{subequations}
where $\tilde{H}_{\mathcal{C}_k}$ is given in \eqref{Hcntilde}.
\end{theorem}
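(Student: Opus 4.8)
The plan is to obtain the two equations in \eqref{eigenvaluetildecn} exactly in the manner indicated just before the statement: the first by passing to the confluent limit in \eqref{conflim2}, and the second from the first by exploiting the $b \leftrightarrow b^{-1}$ symmetry \eqref{4p25} of $\mathcal{C}_k$. This mirrors the first proof of Theorem \ref{thm6p2}, with Lemma \ref{confluentlimHMlemma2} and \eqref{confluentlimit} playing the roles that Lemma \ref{confluentlimHMlemma} and \eqref{confluentlimit} played there.

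For the first equation I would take $\Lambda \to +\infty$ in \eqref{conflim2} with the parameters specialized as in \eqref{paramconf}. Three factors must be controlled simultaneously. The operator in round brackets on the left-hand side converges, by Lemma \ref{confluentlimHMlemma2}, to $\tilde{H}_{\mathcal{C}_{2k}}(b,\sigma_s)$ when $\epsilon = +1$ and to $\tilde{H}_{\mathcal{C}_{2k-1}}(b,\sigma_s)$ when $\epsilon = -1$. The factor $L_k(\epsilon\Lambda,\nu,\sigma_s)\,M[\cdots]$ converges, by \eqref{confluentlimit}, to $\mathcal{C}_{2k}(b,\boldsymbol{\theta},\nu,\sigma_s)$ respectively $\mathcal{C}_{2k-1}(b,\boldsymbol{\theta},\nu,\sigma_s)$. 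Finally, the right-hand eigenvalue factor $2 e^{-\pi b \Lambda}\cosh(2\pi b \sigma_t)$ evaluated at $\sigma_t = \tfrac{\epsilon\Lambda}{2}-\nu$ tends to $e^{-2\pi b \nu}$ when $\epsilon=+1$ and to $e^{2\pi b \nu}$ when $\epsilon=-1$, as already recorded above the theorem. Combining the three limits and noting that for $\epsilon=+1$ the index $2k$ is even (so $(-1)^{2k+1}=-1$) while for $\epsilon=-1$ the index $2k-1$ is odd (so $(-1)^{(2k-1)+1}=+1$), the eigenvalue $e^{\mp 2\pi b \nu}$ is in both parity classes precisely $e^{(-1)^{k+1}2\pi b \nu}$. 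This yields the first equation of \eqref{eigenvaluetildecn} for every $k\geq 1$ and every $b>0$.

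For the second equation I would apply the first equation, now known for all $b>0$, with $b$ replaced by $b^{-1}$, giving $\tilde{H}_{\mathcal{C}_k}(b^{-1},\sigma_s)\,\mathcal{C}_k(b^{-1},\boldsymbol{\theta},\nu,\sigma_s) = e^{(-1)^{k+1}2\pi b^{-1}\nu}\,\mathcal{C}_k(b^{-1},\boldsymbol{\theta},\nu,\sigma_s)$. Inserting \eqref{4p25} one writes $\mathcal{C}_k(b^{-1},\boldsymbol{\theta},\nu,\sigma_s) = b^{-2\delta}\,\mathcal{C}_k(b,\boldsymbol{\theta},\nu,\sigma_s)$ with $\delta = \Delta(\theta_0)+\Delta(\theta_t)-\Delta(\sigma_s)+\tfrac{\theta_*^2}{2}-2\nu^2$, and factorizes $b^{-2\delta} = D(\nu)\,b^{2\Delta(\sigma_s)}$, where $D(\nu)$ gathers the constant and $\nu$-dependent powers and is independent of $\sigma_s$. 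Since $\tilde{H}_{\mathcal{C}_k}(b^{-1},\sigma_s)$ acts only on $\sigma_s$ (through the shifts $e^{\pm i b^{-1}\partial_{\sigma_s}}$ and multiplication by functions of $\sigma_s$), it commutes past $D(\nu)$, and the $\sigma_s$-power can be pulled through it at the cost of the conjugation $b^{-2\Delta(\sigma_s)}\tilde{H}_{\mathcal{C}_k}(b^{-1},\sigma_s)\,b^{2\Delta(\sigma_s)}$. Dividing the resulting identity by the common factor $D(\nu)\,b^{2\Delta(\sigma_s)}$ removes all reference to $\mathcal{C}_k(b^{-1},\cdots)$ and produces the second equation; this is the exact analogue of the passage from \eqref{eigenvaluecn1} to \eqref{eigenvaluecn2}.

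I do not expect a serious obstacle, since all the analytic work has been front-loaded into Lemma \ref{confluentlimHMlemma2} and \eqref{confluentlimit}. The only points requiring care are bookkeeping: matching the two parity classes $\epsilon=\pm1$ to the single sign $(-1)^{k+1}$ in the eigenvalue, and checking in the conjugation step that the operator genuinely acts on $\sigma_s$ alone so that $D(\nu)$ cancels freely. As an independent check one could instead give a direct proof paralleling the second proof of Theorem \ref{thm6p2}: express the action of $\tilde{H}_{\mathcal{C}_k}(b,\sigma_s)$ on the integrand of \eqref{cn} via the building blocks $X_k, Y_k, Z_k$ and the functional equations \eqref{gbsbdifferenceeqs}, then cancel the superfluous terms by the contour shift $x \to x-ib$ together with analogues of \eqref{identitiesX}--\eqref{identitiesZ}; this would bypass confluent limits entirely but requires rederiving the building-block identities adapted to the $\sigma_s$-operator.
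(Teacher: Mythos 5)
Your proposal is correct and follows essentially the same route as the paper, which proves the theorem by taking the confluent limit of \eqref{conflim2} via Lemma \ref{confluentlimHMlemma2} and \eqref{confluentlimit} (with the same $\epsilon=\pm1$ parity bookkeeping you describe) and obtains the second equation from the first via the symmetry \eqref{4p25}, exactly as in the reduction of \eqref{eigenvaluecn2} to \eqref{eigenvaluecn1}. Your closing sketch of a direct proof is likewise consistent with the paper, which notes in a remark that such a confluent-limit-free argument is possible but does not carry it out.
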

%\begin{proof}
%We assume that $k = 2j$ is even; the proof when $k$ is odd is similar.
%The proof relies on the identity
%\beq \label{H2jY} \begin{split}
%\frac{\tilde{H}_{\mathcal{C}_{2j}} Y_{2j}(x,\sigma_s)}{Y_{2j}(x,\sigma_s)} & = \phi_{2j}(x,\sigma_s) - e^{\pi  b (i b-i Q-2 x)},
%\end{split}\eeq 
%where
%\begin{align*}
%\phi_{2j}(x,\sigma_s)=&\; 2 e^{-\pi b \left(\frac{i b}{2}-\frac{\theta_*}{2}-\theta_t+x\right)} \cosh\lb\frac{\pi b}{2}\lb i b+\theta_*-2 \theta_t+2 x\rb\rb 
%	\\
%& \times \prod _{\epsilon=\pm 1} \frac{\cosh \left(\pi  b \left(\tfrac{i b}{2}-\frac{\theta_*}{2}+x+\epsilon\theta_0 \right)\right)}{\cosh \left(\pi  b \left(\tfrac{i b}{2}-\frac{\theta_*}{2}-\theta_t+\frac{i Q}{2}+x+\epsilon\sigma_s \right)\right)}.
%\end{align*}
%The third equation in \eqref{identities} has to be used when the contour shifts occur. Proceeding as in the proof of Theorem \ref{thm6p2}, the desired equation is obtained after using the following identity which is analogous to \eqref{psinXnZn}:
%\beq
%\phi_{2j}(x-ib,\sigma_s)\frac{Y_{2j}(x-ib,\sigma_s)}{Y_{2j}(x,\sigma_s)} \frac{Z_{2j}(x-ib)}{Z_{2j}(x)} = e^{-\pi b(x-\frac{ib}2-\frac{\theta_*}2-\theta_t)},
%\eeq
%where the functions $Y_k$ and $Z_k$ are given in \eqref{XYZ}. 
%\end{proof}

\begin{remark}
As in the case of Theorem \ref{thm6p2}, it is possible to give a direct proof of Theorem \ref{thm6p4} which avoids the use of confluent limits.
\end{remark}

\begin{remark}
The two difference operators \eqref{Hcn} and \eqref{Hcntilde} possess different analytic properties: the coefficients in \eqref{Hcn} are holomorphic, while the coefficients in \eqref{Hcntilde} are meromorphic. It is therefore nontrivial that the confluent fusion kernels are eigenfunctions of both of them.
\end{remark}

\section{From the Virasoro fusion kernel to the Askey--Wilson polynomials}\label{FtoAW}

\subsection{A renormalized version of $F$} It was shown in \cite[Theorem 1]{R20} that a renormalized version of the Virasoro fusion kernel is equal to Ruijsenaars' hypergeometric function. In what follows, we rewrite the result of \cite{R20} in a form suitable for our present needs. Introduce the normalization factor $N$ by
\beq
N=K~\frac{g_b\left(-2 \sigma_t-\frac{i Q}{2}\right) g_b\left(2 \sigma_t-\frac{i Q}{2}\right)}{g_b\left(-2 \sigma_s+\frac{i Q}{2}\right) g_b\left(2 \sigma_s+\frac{i Q}{2}\right)} \prod_{\epsilon_1=\pm 1} \prod_{\epsilon_2=\pm 1} \frac{g_b\left(-\theta_t+\epsilon_1\theta_0+\epsilon_2 \sigma_s\right) g_b\left(-\theta_1+\epsilon_1 \theta_\infty +\epsilon_2 \sigma_s\right)}{g_b\left(\theta_0+\epsilon_1\theta_\infty+\epsilon_2 \sigma_t \right) g_b\left(\theta_t+\epsilon_1\theta_1 +\epsilon_2 \sigma_t\right)},
\eeq
where
\beq
K=s_b\lb \tfrac{iQ}2+2\theta_t \rb s_b\lb \tfrac{iQ}2+\theta_0+\theta_1+\theta_\infty+\theta_t \rb s_b\lb \tfrac{iQ}2+\theta_0+\theta_1-\theta_\infty+\theta_t \rb.
\eeq
It was shown in \cite[Theorem 1]{R20} that the renormalized Virasoro fusion kernel $F_{\ren}$ defined by
\beq \label{Fren}
F_{\ren}\left[\substack{\theta_1\;\;\theta_t\vspace{0.1cm}\\ \theta_{\infty}\;\theta_0};\substack{\sigma_s \vspace{0.15cm} \\  \sigma_t}\right] = N F\left[\substack{\theta_1\;\;\theta_t\vspace{0.1cm}\\ \theta_{\infty}\;\theta_0};\substack{\sigma_s \vspace{0.15cm} \\  \sigma_t}\right]
 \eeq
is equal to Ruijsenaars' hypergeometric function under a certain parameter correspondence. One advantage of renormalizing $F$ is that $F_{\ren}$ is symmetric under the exchange $(\sigma_s, \theta_0) \leftrightarrow (\sigma_t, \theta_1)$. Therefore, the four difference equations satisfied by $F_{\ren}$ can be written in a more symmetric form as follows. Define the difference operator $H_\ren$ by
\beq\label{3p18}
H_\ren \left[\substack{\theta_1\;\;\theta_t\vspace{0.1cm}\\ \theta_{\infty}\;\theta_0};\substack{b,\;\sigma_s}\right] = C\left[\substack{\theta_1\;\;\theta_t\vspace{0.1cm}\\ \theta_{\infty}\;\theta_0};\substack{b,\;\sigma_s}\right] e^{-ib \partial_{\sigma_s}} + C\left[\substack{\theta_1\;\;\theta_t\vspace{0.1cm}\\ \theta_{\infty}\;\theta_0};\substack{b,\;-\sigma_s}\right] e^{ib \partial_{\sigma_s}} + H_F^0\left[\substack{\theta_1\;\;\theta_t\vspace{0.1cm}\\ \theta_{\infty}\;\theta_0};\substack{b,\;\sigma_s}\right],
\eeq
where $H_F^0$ is defined by \eqref{H0} and
\beq
C\left[\substack{\theta_1\;\;\theta_t\vspace{0.1cm}\\ \theta_{\infty}\;\theta_0};\substack{b,\;\sigma_s}\right] = \frac{4 \prod _{\epsilon = \pm 1} \cosh \left(\pi  b \left(-\frac{ib}{2} -\theta_t+\sigma_s+\epsilon\theta_0\right)\right) \cosh \left(\pi  b \left(-\frac{ib}{2} -\theta_1+\sigma_s+\epsilon\theta_\infty\right)\right)}{\sinh (2 \pi  b \sigma_s) \sinh (\pi  b (-2 \sigma_s+i b))}.
\eeq
It follows from (\ref{difference1}), (\ref{difference2}), and (\ref{Fren}) that the renormalized Virasoro fusion kernel $F_\ren$ satisfies the following four difference equations \cite{R20}:
\begin{subequations}\label{hreneq}\begin{align}
\label{hreneq1} & H_\ren\left[\substack{\theta_1\;\;\theta_t\vspace{0.1cm}\\ \theta_{\infty}\;\theta_0};\substack{b,\;\sigma_s}\right] F_\ren\left[\substack{\theta_1\;\;\theta_t\vspace{0.1cm}\\ \theta_{\infty}\;\theta_0};\substack{\sigma_s \vspace{0.15cm} \\ \sigma_t}\right] = 2\cosh{(2\pi b \sigma_t)} F_\ren\left[\substack{\theta_1\;\;\theta_t\vspace{0.1cm}\\ \theta_{\infty}\;\theta_0};\substack{\sigma_s \vspace{0.15cm} \\  \sigma_t}\right], \\
\label{hreneq2} & H_\ren\left[\substack{\theta_1\;\;\theta_t\vspace{0.1cm}\\ \theta_{\infty}\;\theta_0};\substack{b^{-1},\;\sigma_s}\right] F_\ren\left[\substack{\theta_1\;\;\theta_t\vspace{0.1cm}\\ \theta_{\infty}\;\theta_0};\substack{\sigma_s \vspace{0.15cm} \\  \sigma_t}\right] = 2\cosh{(2\pi b^{-1} \sigma_t)} F_\ren\left[\substack{\theta_1\;\;\theta_t\vspace{0.1cm}\\ \theta_{\infty}\;\theta_0};\substack{\sigma_s \vspace{0.15cm} \\  \sigma_t}\right], \\
\label{hreneq3} & H_\ren\left[\substack{\theta_0\;\;\theta_t\vspace{0.1cm}\\ \theta_{\infty}\;\theta_1};\substack{b,\;\sigma_t}\right] F_\ren\left[\substack{\theta_1\;\;\theta_t\vspace{0.1cm}\\ \theta_{\infty}\;\theta_0};\substack{\sigma_s \vspace{0.15cm} \\  \sigma_t}\right] = 2\cosh{(2\pi b \sigma_s)} F_\ren\left[\substack{\theta_1\;\;\theta_t\vspace{0.1cm}\\ \theta_{\infty}\;\theta_0};\substack{\sigma_s \vspace{0.15cm} \\  \sigma_t}\right], \\
\label{hreneq4} & H_\ren\left[\substack{\theta_0\;\;\theta_t\vspace{0.1cm}\\ \theta_{\infty}\;\theta_1};\substack{b^{-1},\;\sigma_t}\right] F_\ren\left[\substack{\theta_1\;\;\theta_t\vspace{0.1cm}\\ \theta_{\infty}\;\theta_0};\substack{\sigma_s \vspace{0.15cm} \\  \sigma_t}\right] = 2\cosh{(2\pi b^{-1} \sigma_s)} F_\ren\left[\substack{\theta_1\;\;\theta_t\vspace{0.1cm}\\ \theta_{\infty}\;\theta_0};\substack{\sigma_s \vspace{0.15cm} \\  \sigma_t}\right].
\end{align}\end{subequations}

%The operator $H_\ren$ becomes the hyperbolic $BC_1$ Ruijsenaars--van Diejen Hamiltonian under a certain correspondence of the parameters \cite{R20}. 

\subsection{From $F_\text{ren}$ to $A_n$}
Let $A_n$ be the Askey--Wilson polynomials defined in (\ref{AW}).
In this subsection, we show that $F_\text{ren}$ reduces to the polynomials $A_n$ in a certain limit. As a consequence, the Virasoro fusion kernel can be viewed as a non-polynomial generalization of the Askey--Wilson polynomials with quantum deformation parameter $q$ related to the central charge $c$ of the Virasoro algebra according to (\ref{qcQdef}).
%The result is a consequence of a recently established link \cite{R20} between $F$ and Ruijsenaars' hypergeometric function $R$ and a result of \cite{R1999} which relates $R$ to $A_n$, but we give an independent proof. 
In addition to Assumption \ref{assumption}, we need the following assumption.
%Ruijsenaars has shown that the Askey--Wilson polynomials arise as limits of Ruijsenaars' hypergeometric function $R$ in two different ways \cite{R1999}. This is because Ruijsenaars' hypergeometric function (or equivalently, $F_\ren$) is symmetric under the exchange $(\sigma_s, \theta_0) \leftrightarrow (\sigma_t, \theta_1)$. 
\begin{assumption}[Restriction on the parameters]\label{assumptionAW}
Assume that $b > 0$ is such that $b^2$ is irrational, and that, for $\epsilon,\epsilon' = \pm 1,$
\beq\label{restrictionAW}\begin{split}
& \sigma_s, \sigma_t, \theta_1, \theta_0 \neq 0, \qquad \theta_\infty - \theta_t + \epsilon \sigma_s + \epsilon' \sigma_t \neq 0, \qquad \theta_\infty+\theta_t+\epsilon\theta_0+\epsilon'\theta_1 \neq 0.
\end{split}\eeq
\end{assumption}
Assumption \ref{assumptionAW} implies that the four increasing and the four decreasing sequences of poles of the integrand in \eqref{fusion01} are vertical and do not overlap. The assumption that $b^2$ is irrational implies that all the poles of the integrand are simple.
It is necessary to assume that $b^2$ is irrational because otherwise $q=e^{2i\pi b^2}$ is a root of unity and then the Askey--Wilson polynomials are not well-defined in general, see Remark \ref{rootofunityremark}.
 
As described in the introduction, the next theorem follows by combining one of the results in \cite{R1999} with the observation of \cite{R20} that $F_\ren = R$.

\begin{theorem}[Virasoro fusion kernel $\to$ Askey--Wilson polynomials]\label{FAWthm}
Suppose that Assumptions \ref{assumption} and \ref{assumptionAW} are satisfied. Define $\sigma_s^{(n)} \in \mathbb{C}$ for $n \geq 0$, by
\beq\label{limsigmas}
  \sigma_s^{(n)} = \tfrac{iQ}2+\theta_0+\theta_t+ibn.
\eeq
Under the parameter correspondence
\beq\label{paramAW}
\alpha =-e^{2 \pi  b \left(\frac{i b}{2}+\theta_1+\theta_t\right)}, \quad \beta =-e^{2 \pi  b \left(\frac{i b}{2}+\theta_0-\theta_\infty\right)}, \quad \gamma=-e^{2 \pi  b \left(\frac{i b}{2}-\theta_1+\theta_t\right)}, \quad \delta=-e^{2 \pi  b \left(\frac{i b}{2}+\theta_0+\theta_\infty\right)}, \quad q=e^{2i\pi b^2},
\eeq
the renormalized fusion kernel defined in \eqref{Fren} satisfies, for each integer $n \geq 0$,
\beq\label{limitAW}
\lim\limits_{\sigma_s \to \sigma_s^{(n)}} F_{\text{\upshape ren}}\left[\substack{\theta_1\;\;\;\theta_t\vspace{0.1cm}\\ \theta_{\infty}\;\;\theta_0};\substack{\sigma_s \vspace{0.15cm} \\  \sigma_t}\right] 
= A_n(e^{2\pi b \sigma_t};\alpha,\beta,\gamma,\delta,q),
%= {}_4 \phi_3\lb \left. \begin{split}\begin{matrix}  q^{-n}, \alpha\beta\gamma\delta q^{n-1}, \alpha e^{2\pi b \sigma_t}, \alpha e^{-2\pi b \sigma_t} \\ \alpha \beta, \alpha\gamma, \alpha\delta \end{matrix} \end{split}\right| q;q \rb,
\eeq
where $A_n$ are the Askey--Wilson polynomials defined in (\ref{AW}).
\end{theorem}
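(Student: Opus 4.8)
The plan is to single out, among the four difference equations \eqref{hreneq} satisfied by $F_\ren$, the one in \eqref{hreneq1} whose operator $H_\ren$ acts on $\sigma_s$ and whose eigenvalue $2\cosh(2\pi b \sigma_t) = x + x^{-1}$, with $x = e^{2\pi b \sigma_t}$, is exactly the argument of the target polynomials. I claim that when $\sigma_s$ is restricted to the lattice $\{\sigma_s^{(n)}\}_{n\geq 0}$ defined in \eqref{limsigmas}, this equation degenerates into the three-term recurrence relation obeyed by the Askey--Wilson polynomials. Writing $F^{(n)} := \lim_{\sigma_s \to \sigma_s^{(n)}} F_\ren$, the strategy is then to verify that (i) the recurrence truncates correctly at $n = 0$, (ii) its coefficients coincide with the Askey--Wilson recurrence coefficients under the correspondence \eqref{paramAW}, and (iii) the normalization satisfies $F^{(0)} = 1 = A_0$; a standard induction on $n$ then yields $F^{(n)} = A_n$ for all $n \geq 0$. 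Since the eigenvalue $2\cosh(2\pi b\sigma_t)$ is independent of $\sigma_s$, it passes to the limit trivially and becomes the recurrence multiplier $x + x^{-1}$.

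First I would record that the translations $e^{\pm ib\partial_{\sigma_s}}$ map $\sigma_s^{(n)}$ to $\sigma_s^{(n\pm 1)}$, since $\sigma_s^{(n)} = \frac{iQ}{2} + \theta_0 + \theta_t + ibn$. Evaluating \eqref{hreneq1} at $\sigma_s = \sigma_s^{(n)}$ therefore couples $F^{(n-1)}, F^{(n)}, F^{(n+1)}$ through the coefficients $C[\cdots; b, \sigma_s^{(n)}]$, $C[\cdots; b, -\sigma_s^{(n)}]$ and $H_F^0[\cdots; b, \sigma_s^{(n)}]$, giving a three-term recurrence in $n$. The crucial truncation is the vanishing of the coefficient of $F^{(n-1)}$ at $n = 0$: in the coefficient $C$ appearing in \eqref{3p18}, the factor $\cosh(\pi b(-\frac{ib}{2} - \theta_t + \sigma_s - \theta_0))$ evaluated at $\sigma_s^{(0)} = \frac{iQ}{2} + \theta_0 + \theta_t$ has argument $\pi b \cdot \frac{i}{2b} = \frac{i\pi}{2}$ and hence vanishes. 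Thus the $F^{(-1)}$ term drops out of the $n = 0$ equation, matching the convention $A_{-1} = 0$ and leaving a relation that determines $F^{(1)}$ in terms of $F^{(0)}$.

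Next I would substitute the parameter correspondence \eqref{paramAW} into $C[\pm\sigma_s^{(n)}]$ and $H_F^0[\cdots; \sigma_s^{(n)}]$, rewrite all hyperbolic functions as exponentials, and express the results in terms of $q = e^{2i\pi b^2}$, the powers $q^n$, and $\alpha, \beta, \gamma, \delta$. The objective is to identify these---possibly after conjugating $F^{(n)}$ by an explicit $n$-dependent gauge factor---with the recurrence coefficients of the Askey--Wilson polynomials recalled in Appendix \ref{appendixB}. In parallel, I would establish the normalization $F^{(0)} = 1$; this is where the renormalization factor $N$ in \eqref{Fren} plays its role, and it should follow from the explicit evaluation of the prefactor and integral in \eqref{fusion01} at $\sigma_s = \sigma_s^{(0)}$ combined with the definition of $N$.

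With the truncation at $n = 0$, the matched coefficients, and $F^{(0)} = A_0 = 1$, both $(F^{(n)})_{n\geq 0}$ and $(A_n)_{n\geq 0}$ satisfy the same three-term recurrence with the same initial data (the coefficient of $F^{(n+1)}$ being nonzero for each $n$ under the assumptions), so they coincide by induction. I expect the main obstacle to be analytic rather than algebraic: one must show that the limit $\sigma_s \to \sigma_s^{(n)}$ exists and is finite and that \eqref{hreneq1} passes to this limit term by term. This requires controlling the pole and zero structure of $F_\ren$ and of the coefficients $C[\pm\sigma_s]$ (whose $\sinh$ denominators vanish on parts of the lattice) near each $\sigma_s^{(n)}$, which is where Assumption \ref{assumptionAW} and the properties of $g_b$ and $s_b$ enter. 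The coefficient-matching computation is lengthy but routine; the delicate point is to fix the branch and gauge conventions so that the eigenvalue becomes exactly $x + x^{-1}$ with $x = e^{2\pi b \sigma_t}$ and the recurrence appears in the standard Askey--Wilson normalization.
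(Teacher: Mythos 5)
Your proposal follows essentially the same route as the paper's proof: establish the base case $F^{(0)} = A_0 = 1$ by explicit evaluation at $\sigma_s^{(0)}$ (the paper does this by deforming the contour $\mathsf{F}$ past the pole that crosses it and pairing the resulting residue's simple pole against the simple zero of the prefactor), then take the limit of the difference equation \eqref{hreneq1} to get the Askey--Wilson recurrence $R_{A_n}P_n = (z+z^{-1})P_n$ and induct on $n$, with your observation that the $\cosh$ factor in $C$ vanishes at $\sigma_s^{(0)}$ being exactly the mechanism by which the paper's coefficient $a_0^-$ in (\ref{bndef}) vanishes. The items you defer (existence of the limit via the pole/zero bookkeeping controlled by Assumption \ref{assumptionAW}, and the coefficient matching that yields \eqref{eqMn}) are precisely the computations the paper carries out, and no $n$-dependent gauge factor turns out to be needed, so your plan is correct and essentially identical to the paper's argument.
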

\begin{proof}
It is easier to give a direct proof than to explain how the assertion follows from \cite{R1999} and \cite{R20}.
In fact, there are two different ways to prove \eqref{limitAW}. The first approach consists of taking the limit $\sigma_s \to \sigma_s^{(n)}$ in the integral representation \eqref{fusion01} for $F$ for each $n$; the second approach only computes this limit for $n = 0$ and then instead uses the limit of one of the four difference equations \eqref{hreneq} to extend the result to other values of $n$. We choose to use the second approach. 

We first prove (\ref{limitAW}) for $n = 0$. The definition of $A_n$ as a hypergeometric series involves a $q$-Pochhammer symbol of the form $(q^{-n};q)_k$ in the numerator, see (\ref{AW}) and (\ref{hypergeometricphidef}). If $n = 0$, only the first term of this $q$-hypergeometric series is nonzero, because $(1;q)_k = 0$ for each $k\geq 1$. Since $(x;q)_0=1$ for all $x$ and $q$ by definition, we conclude that $A_0 = 1$. Thus, to prove (\ref{limitAW}) for $n = 0$, we need to show that the left-hand side of (\ref{limitAW}) equals $1$ when $n = 0$.
By (\ref{fusion01}) and (\ref{Fren}), we have
\beq \label{Fren2}
\begin{split}
F_\ren\left[\substack{\theta_1\;\;\;\theta_t\vspace{0.1cm}\\ \theta_{\infty}\;\;\theta_0};\substack{\sigma_s \vspace{0.15cm} \\  \sigma_t}\right] =  P_1(\sigma_s) \int_{\mathsf{F}} dx~I_1(x,\sigma_s),
\end{split}
\eeq
where
\begin{align*}
& P_1(\sigma_s) = K \prod_{\epsilon_1=\pm 1} \lb \frac{s_b(\epsilon_1\sigma_t-\theta_0-\theta_\infty)}{s_b(\epsilon_1\sigma_s+\theta_1-\theta_\infty)}\prod_{\epsilon_2=\pm 1} s_b(\epsilon_1\sigma_s+\epsilon_2\theta_0-\theta_t) \rb,
	\\
& I_1(x,\sigma_s)=\prod_{\epsilon=\pm1} \frac{s_b \lb x+ \epsilon \theta_1 \rb s_b \lb x+\epsilon\theta_0+\theta_\infty+\theta_t \rb}{s_b \lb x+\frac{iQ}{2}+\theta_\infty+\epsilon \sigma_s \rb s_b \lb x+\frac{i Q}{2}+\theta_t+\epsilon \sigma_t \rb}.
\end{align*}
The function $P_1(\sigma_s)$ has a simple zero at $\sigma_s^{(0)}= \tfrac{iQ}2+\theta_0+\theta_t$ originating from the factor $s_b\left(\sigma_s-\theta_0-\theta_t\right)$. Let us consider the integrand $I_1$. In the limit $\sigma_s \to \sigma_s^{(0)}$, the pole of $s_b(x+\tfrac{iQ}2+\theta_\infty+\sigma_s)^{-1}$ located at $x=-\theta_\infty-\sigma_s$ moves downwards, crosses the contour of integration $\mathsf{F}$, and collides with the pole of $s_b(x+\theta_0+\theta_\infty+\theta_t)$ located at $x = x_f :=-\tfrac{iQ}2-\theta_0-\theta_\infty-\theta_t$. Hence, before taking the limit $\sigma_s \to \sigma_s^{(0)}$, we choose to deform the contour of integration $\mathsf{F}$ into a contour $\mathsf{F}'$ which passes below $x_f$; this gives
\beq\label{2p24}
F_\ren\left[\substack{\theta_1\;\;\;\theta_t\vspace{0.1cm}\\ \theta_{\infty}\;\;\theta_0};\substack{\sigma_s \vspace{0.15cm} \\  \sigma_t}\right] = -2i\pi P_1(\sigma_s) \underset{x=x_f}{\text{Res}}\lb I_1(x,\sigma_s)\rb + P_1(\sigma_s) \int_{\mathsf{F}'} dx~I_1(x,\sigma_s).
\eeq
Using the relation
\beq \label{ressb}
\underset{z=-\frac{iQ}{2}}{\text{Res}} s_b(z) =\frac{i}{2\pi},
%\qquad \underset{z=-\frac{iQ}{2}}{\text{Res}}  g_b(z) =\frac{i}{2\pi} g_b\lb \frac{i Q}{2}\rb,
\eeq 
a straightforward computation yields
\begin{align}\label{xfres}
-2i\pi \underset{x=x_f}{\text{Res}} I_1(x,\sigma_s) = \frac{s_b\left(-2 \theta_0-\frac{i Q}{2}\right) s_b\left(-\theta_0-\theta_1-\theta_\infty-\theta_t-\frac{i Q}{2}\right) s_b\left(-\theta_0+\theta_1-\theta_\infty-\theta_t-\frac{i Q}{2}\right)}{s_b(-\theta_0-\theta_\infty-\sigma_t) s_b(-\theta_0-\theta_\infty+\sigma_t) s_b(-\theta_0-\theta_t-\sigma_s) s_b(-\theta_0-\theta_t+\sigma_s)}.
\end{align}
The right-hand side of (\ref{xfres}) has a simple pole at $\sigma_s=\sigma_s^{(0)}$ due to the factor $s_b(-\theta_0-\theta_t+\sigma_s)^{-1}$. Moreover, in the limit $\sigma_s \to \sigma_s^{(0)}$, the second term in \eqref{2p24} vanishes thanks to the zero of $P_1(\sigma_s)$. Thus,
\beq
\lim\limits_{\sigma_s\to \sigma_s^{(0)}} F_\ren\left[\substack{\theta_1\;\;\;\theta_t\vspace{0.1cm}\\ \theta_{\infty}\;\;\theta_0};\substack{\sigma_s \vspace{0.15cm} \\  \sigma_t}\right] = -2i\pi \lim\limits_{\sigma_s\to \sigma_s^{(0)}}  P_1(\sigma_s) \underset{x=x_f}{\text{Res}} I_1(x,\sigma_s).
\eeq
A straightforward computation shows that the right-hand side equals $1$; this proves (\ref{limitAW}) for $n = 0$.

For each integer $n \geq 0$, let $P_n$ denote the left-hand side of (\ref{limitAW}), i.e., 
\beq\label{2p20}
P_n = \lim\limits_{\sigma_s \to \sigma_s^{(n)}} F_{\ren}\left[\substack{\theta_1\;\;\;\theta_t\vspace{0.1cm}\\ \theta_{\infty}\;\;\theta_0};\substack{\sigma_s \vspace{0.15cm} \\  \sigma_t}\right].
\eeq
The same kind of contour deformation argument used to establish the case $n = 0$ shows that the limit in (\ref{2p20}) exists for all $n$. 
The function $P_n$ depends on $\sigma_t$ as well as the four parameters $\theta_0, \theta_1, \theta_t, \theta_\infty$. To show that $P_n$ equals the Askey--Wilson polynomial $A_n$ on the right-hand side of (\ref{limitAW}) for $n \geq 1$, we consider the limit of the difference equation (\ref{hreneq1}).
Using the parameter correspondence \eqref{paramAW}, it is straightforward to verify that
\beq\label{eqMn}
\lim\limits_{\sigma_s \to \sigma_s^{(n)}} H_\ren\left[\substack{\theta_1\;\;\;\theta_t\vspace{0.1cm}\\ \theta_{\infty}\;\;\theta_0};\substack{b,\;\sigma_s}\right] = R_{A_n}, \qquad n \geq 0,
\eeq
where $R_{A_n}$ is the recurrence operator defined in \eqref{Mn}. 
Hence, taking the limit of the first difference equation \eqref{hreneq1} for $F_\ren$ as $\sigma_s \to \sigma_s^{(n)}$, we see that $P_n$ satisfies 
\begin{align}\label{MnPn}
R_{A_n} P_n = (z+z^{-1}) P_n, \qquad n \geq 0,
\end{align}
where $z=e^{2\pi b \sigma_t}$.
Thus the $P_n$ satisfy the same recurrence relation \eqref{recurrenceAW} as the Askey--Wilson polynomials evaluated at $z=e^{2\pi b \sigma_t}$. 
Since we have already shown that $P_0=A_0 = 1$, equation (\ref{MnPn}) with $n = 0$ implies that $P_1 = A_1$ (note that there is no term with $P_{-1}$ in (\ref{MnPn}) for $n = 0$ because the coefficient $a^{-}_n$ defined in (\ref{bndef}) vanishes for $n = 0$). Assuming that $P_n = A_n$ for all $n \leq N$, equation (\ref{MnPn}) with $n = N$ shows that $P_{N+1} = A_{N+1}$; thus $P_n = A_n$ for all $n \geq 0$ by induction, where $A_n$ is evaluated at $z=e^{2\pi b \sigma_t}$. This completes the proof of (\ref{limitAW}).
\end{proof}

\begin{remark}\label{sigmanmremark}
The result of Theorem \ref{FAWthm} can be generalized as follows. Instead of considering the limit of $F_\ren$ as $\sigma_s$ approaches one of the points $\sigma_s^{(n)}$ defined in (\ref{limsigmas}), we can consider the limit 
\beq\label{sigmasnm}
\sigma_s \to \sigma_s^{(n,m)} := \sigma_s^{(n)} + \frac{im}b,
\eeq
for any integers $n,m \geq 0$. In this limit, the Virasoro fusion kernel reduces to a product of two Askey--Wilson polynomials of the form $A_n \times A_m$. The first polynomial $A_n$ is expressed in terms of the quantum deformation parameter $q=e^{2i\pi b^2}$, while the second polynomial $A_m$ is expressed in terms of $\tilde{q}=e^{2i\pi b^{-2}}$. In the case $m = 0$ treated in Theorem \ref{FAWthm}, the second polynomial reduces to $A_0 = 1$.  
%Then, when we send $\theta_t$ (and eventually all the conformal dimensions) to degenerate values, we obtain a product of two $q$-Racah polynomials. This limit gives the fusion matrix of the Virasoro minimal models \cite{PSS}. In \cite{PSS} the fusion matrix of the minimal models was computed from the Virasoro fusion kernel, and it was expressed as a product of two finite dimensional 6j symbols of $U_q(sl_2)$ \cite{PSS}. In the mathematics literature it is known that these 6j symbols coincide with the $q$-Racah polynomials \cite{Ro}. So I don't think this result is very interesting.
%Minimal models found a lot of applications in physics, for instance in the 2d Ising model. Moreover, Liouville theory reduces to minimal models when all the fields become degenerate. So if we construct the confluent Liouville correlation functions, we can in principle construct "confluent minimal models" by taking the limit where all the fields become degenerate. The fusion matrix of these confluent minimal models will be expressed in terms of another element of the $q$-Askey scheme.
\end{remark}

\begin{remark}[Limits of the other three difference equations]
We saw in the proof of Theorem \ref{FAWthm} that the first difference equation \eqref{hreneq1} for $F_\ren$ reduces to the recurrence relation \eqref{recurrenceAW} for the Askey--Wilson polynomials as $\sigma_s \to \sigma_s^{(n)}$. A similar argument using that, under the parameter correspondence \eqref{paramAW},
\beq\label{2p28}
-e^{2\pi b(\tfrac{ib}2+\theta_0+\theta_t)} H_\ren\left[\substack{\theta_0\;\;\;\theta_t\vspace{0.1cm}\\ \theta_{\infty}\;\;\theta_1};\substack{b,\;\sigma_t}\right] = \Delta_{A_n},
\eeq
where $\Delta_{A_n}$ is the operator defined in \eqref{L}, shows that the third difference equation \eqref{hreneq3} reduces to the difference equation (\ref{differenceAW}) for the Askey--Wilson polynomials as $\sigma_s \to \sigma_s^{(n)}$. 
On the other hand, the fourth difference equation \eqref{hreneq4} is trivially satisfied in the limit $\sigma_s \to \sigma_s^{(n)}$. Indeed, let $P_n$ be the limit of  $F_\ren$ as in (\ref{2p20}). Recalling that $A_n$ is a polynomial of order $n$ in $z + z^{-1}$, we deduce from (\ref{limitAW}) that $P_n$ is an $n$th order polynomial in $\cosh(2\pi b \sigma_t)$. In particular, $e^{\pm ib^{-1} \partial_{\sigma_t}} P_n = P_n$ so that the operator on the left-hand side of (\ref{hreneq4}) becomes a multiplication operator in the limit $\sigma_s \to \sigma_s^{(n)}$. The identity
$$C\left[\substack{\theta_0\;\;\;\theta_t\vspace{0.1cm}\\ \theta_{\infty}\;\;\theta_1};\substack{b^{-1},\;\sigma_t}\right] 
+ C\left[\substack{\theta_0\;\;\;\theta_t\vspace{0.1cm}\\ \theta_{\infty}\;\;\theta_1};\substack{b^{-1},\;-\sigma_t}\right] 
+ H_F^0\left[\substack{\theta_0\;\;\;\theta_t\vspace{0.1cm}\\ \theta_{\infty}\;\;\theta_1};\substack{b^{-1},\;\sigma_t}\right]
= 2\cosh{(2\pi b^{-1} \sigma_s^{(n)})}
$$
then shows that (\ref{hreneq4}) is trivially satisfied in the limit $\sigma_s \to \sigma_s^{(n)}$.
Finally, the limit of the second difference equation \eqref{hreneq2} is of a different nature: since it involves the shifts $\sigma_s \to \sigma_s \pm ib^{-1}$ induced by the operators $e^{\pm ib^{-1} \partial_{\sigma_s}}$, a proper description of its limit involves the more general family of functions $A_n \times A_m$ mentioned in Remark \ref{sigmanmremark}.
\end{remark}

\section{From $\mathcal{C}_k$ to the continuous dual $q$-Hahn polynomials}\label{CktoHahnsec}
%The renormalized Virasoro fusion kernel \eqref{Fren} degenerates to the Askey--Wilson polynomials in a limit where one of $\sigma_s,\sigma_t$ is discretized. One of these limits was described in Theorem \ref{FAWthm}. 
%This unique polynomial limit comes from the fact that the function $F_\ren$ is symmetric under the exchange $\sigma_s \to \sigma_t$ and $\theta_0 \to \theta_1$. 
In this section, we show that (up to normalization) the confluent fusion kernel $\mathcal{C}_k\lb b,\boldsymbol{\theta},\nu, \sigma_s\rb$ degenerates, for each $k \geq 1$, to the continuous dual $q$-Hahn polynomials $H_n$ when $\nu$ is suitably discretized. 

\subsection{A renormalized version of $\mathcal{C}_k$}
Define the renormalized version $\mathcal{C}_k^\ren$ of the confluent fusion kernel $\mathcal{C}_k$ by
\beq\label{ckren}
\mathcal{C}_k^\ren\lb b,\boldsymbol{\theta},\nu, \sigma_s\rb = N_1(\nu,\sigma_s) N_2(\boldsymbol{\theta})~\mathcal{C}_k \lb b,\boldsymbol{\theta},\nu, \sigma_s\rb,
\eeq
where
\beq\label{N1}\begin{split}
N_1(\nu,\sigma_s) = &\; e^{i \pi  \nu  (-1)^k (-\theta_0+\theta_t-\nu -i Q)} \lb b~e^{2 i \pi  \left(\left\lfloor \frac{k}{2}\right\rfloor -\frac{1}{2}\right)} \rb^{-\Delta (\theta_0)-\Delta (\theta_t)+\Delta (\sigma_s)-\frac{\theta_*^2}{2}+2 \nu ^2} \\
& \times \prod_{\epsilon=\pm 1} \frac{g_b\left(2\epsilon\sigma_s-\frac{i Q}{2}\right) g_b\left(\epsilon \left(\frac{\theta_*}{2}-\nu \right)-\theta_0\right) g_b\left(\theta_t+\epsilon \left(\frac{\theta_*}{2}+\nu \right)\right)}{g_b\left(\theta_*+\epsilon\sigma_s \right) g_b\left(-\theta_t-\sigma_s+\epsilon\theta_0 \right) g_b\left(-\theta_t+\sigma_s+\epsilon\theta_0 \right)}
\end{split}\eeq
and
\beq
N_2(\boldsymbol{\theta}) = e^{(-1)^{k+1} i \pi \lb\tfrac{\theta_*}2-\theta_0-\tfrac{iQ}2\rb \lb \theta_t-\frac{\theta_*}2-\tfrac{iQ}2\rb} \tfrac{s_b\lb \tfrac{iQ}2-2\theta_t\rb}{s_b\lb-\tfrac{iQ}2-\theta_0-\theta_*+\theta_t\rb}.
\eeq
Using the representation \eqref{gnm} and the identity $s_b(x)=g_b(x)/g_b(-x)$, we find that $\mathcal{C}^\ren_k$ is given by the following expression:
\beq\label{renormcn}
\mathcal{C}^\ren_k \lb b,\boldsymbol{\theta},\nu, \sigma_s\rb = \mathcal{P}_k\lb\boldsymbol{\theta},\nu, \sigma_s\rb \displaystyle \int_{\mathsf{C}} dx ~ I^{(k)}\lb x,\boldsymbol{\theta},\nu, \sigma_s\rb,
\eeq
where $I^{(k)}$ is given in \eqref{I} and
\beq\label{prefcn}
\mathcal{P}_k\lb\boldsymbol{\theta},\nu, \sigma_s\rb = N_2(\boldsymbol{\theta})~e^{i \pi  \nu  (-1)^k (-\theta_0+\theta_t-\nu -i Q)} \frac{s_b(\sigma_s-\theta_*) s_b(-\sigma_s-\theta_*)}{s_b\left(\theta_0-\frac{\theta_*}{2}+\nu \right) s_b\left(\tfrac{\theta_*}{2}-\theta_t+\nu \right) s_b\left(-\tfrac{\theta_*}{2}-\theta_t-\nu \right)}.
\eeq
It is easy to see that $\mathcal{C}^\ren_k = \mathcal{C}^\ren_{k+2}$ for each integer $k \geq 1$ and that
\beq
\mathcal{C}^\ren_k \lb b,\boldsymbol{\theta},\nu, \sigma_s\rb = \mathcal{C}^\ren_k \lb b^{-1},\boldsymbol{\theta},\nu, \sigma_s\rb, \qquad k \geq 1.
\eeq 

\subsection{From $\mathcal{C}^\ren_k$ to $H_n$}\label{CrenkHnsubsec}
Define $\{\nu_n\}_{n=0}^\infty \subset \mathbb{C}$ by
\beq \label{nun}
\nu_n = \theta_t-\tfrac{iQ}2-\tfrac{\theta_*}2-inb.
\eeq
The main result of this section (Theorem \ref{thhahn}) states that the continuous dual $q$-Hahn polynomials $H_n$ defined in \eqref{qhahn} emerge from $\mathcal{C}^\ren_k$ when $\nu$ is discretized according to (\ref{nun}).
We will need the following two lemmas for the proof.

\begin{lemma}\label{sbdiff} 
For any integer $m \geq 0$,  the following identities hold: 
\begin{subequations}\label{identitysb}\begin{align}
\label{sb1} & \frac{s_b(x+imb)}{s_b(x)} = e^{\frac{m^2i\pi b^2}{2}} e^{\pi b mx} \lb -e^{-i\pi b^2} e^{-2\pi b x};e^{-2i\pi b^2}\rb_m, \\
\label{sb2} & \frac{s_b(x+\frac{im}{b})}{s_b(x)} = e^{\frac{m^2i\pi b^{-2}}{2}} e^{\frac{\pi  mx}b} \lb -e^{-i\pi b^{-2}} e^{-\frac{2\pi x}b};e^{-2i\pi b^{-2}}\rb_m,
\end{align}\end{subequations}
where $(a;q)_m$ denotes the $q$-Pochhammer symbol defined in (\ref{qpochhammerdef}).
\end{lemma}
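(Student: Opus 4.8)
The plan is to prove \eqref{sb1} by iterating the elementary one-step shift identity for $s_b$ recorded in the second relation of \eqref{gbsbdifferenceeqs}, and then to deduce \eqref{sb2} from the modular invariance $s_b = s_{b^{-1}}$. First I would put the basic shift equation into single-step form: setting $z = x + \tfrac{ib}{2}$ in $s_b(z+\tfrac{ib}{2})/s_b(z-\tfrac{ib}{2}) = 2\cosh(\pi b z)$ gives
\[
\frac{s_b(x+ib)}{s_b(x)} = 2\cosh\left(\pi b\left(x+\tfrac{ib}{2}\right)\right) = 2\cosh\left(\pi b x + \tfrac{i\pi b^2}{2}\right).
\]
Telescoping over $m$ consecutive shifts, with the $j$-th factor obtained by applying this relation at the point $x + ijb$, yields
\[
\frac{s_b(x+imb)}{s_b(x)} = \prod_{j=0}^{m-1}\frac{s_b(x+i(j+1)b)}{s_b(x+ijb)} = \prod_{j=0}^{m-1} 2\cosh\left(\pi b x + i\pi b^2\left(j+\tfrac12\right)\right).
\]

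The key step is then to factor each hyperbolic cosine so as to expose the $q$-Pochhammer structure. Using $2\cosh(A) = e^{A}\left(1 + e^{-2A}\right)$ with $A = \pi b x + i\pi b^2(j+\tfrac12)$ and $2(j+\tfrac12) = 2j+1$, one obtains
\[
2\cosh\left(\pi b x + i\pi b^2(j+\tfrac12)\right) = e^{\pi b x}\,e^{i\pi b^2(j+\frac12)}\left(1 + e^{-2\pi b x}\,e^{-i\pi b^2(2j+1)}\right).
\]
The product of the last factors over $j = 0,\dots,m-1$ is exactly $\left(-e^{-i\pi b^2}e^{-2\pi b x}; e^{-2i\pi b^2}\right)_m$, since $(a;q)_m = \prod_{j=0}^{m-1}(1 - aq^j)$ with $a = -e^{-i\pi b^2}e^{-2\pi b x}$ and $q = e^{-2i\pi b^2}$. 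The product of the exponential prefactors collapses to $e^{\pi b m x}\,e^{i\pi b^2\sum_{j=0}^{m-1}(j+\frac12)}$, and the elementary evaluation $\sum_{j=0}^{m-1}(j+\tfrac12) = \tfrac{m^2}{2}$ produces the Gaussian prefactor $e^{\frac{m^2 i\pi b^2}{2}}\,e^{\pi b m x}$. Assembling the two products gives precisely \eqref{sb1}.

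Finally, \eqref{sb2} follows from \eqref{sb1} upon replacing $b$ by $b^{-1}$. The function $s_b$ is invariant under $b \leftrightarrow b^{-1}$ — this is immediate from the identification $s_b(z) = G(b,b^{-1};z)$ in \eqref{sbgbGE} together with the symmetry of the hyperbolic gamma function $G$ in its first two arguments — so the single-step relation, and hence the entire telescoping argument, holds verbatim with every occurrence of $b$ replaced by $b^{-1}$, which is exactly the content of \eqref{sb2}.

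As for difficulty, there is no genuine obstacle: the argument is a routine telescoping computation valid as an identity of meromorphic functions on the common domain of analyticity. The only points demanding care are the bookkeeping that matches the phase $e^{-i\pi b^2(2j+1)}$ against the $q$-Pochhammer base $q = e^{-2i\pi b^2}$ and first argument $-e^{-i\pi b^2}e^{-2\pi b x}$, and the correct evaluation of the arithmetic sum that yields the quadratic exponent $\tfrac{m^2}{2}$ in the Gaussian prefactor.
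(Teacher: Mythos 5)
Your proof is correct and follows exactly the paper's route: the paper likewise proves \eqref{sb1} by applying the shift relation for $s_b$ in \eqref{gbsbdifferenceeqs} recursively, and obtains \eqref{sb2} by sending $b \to b^{-1}$ and using the modular symmetry $s_{b^{-1}}(x) = s_b(x)$. The only difference is that the paper leaves the telescoping and $q$-Pochhammer bookkeeping implicit, whereas you carry it out explicitly (and correctly, including the evaluation $\sum_{j=0}^{m-1}(j+\tfrac12)=\tfrac{m^2}{2}$).
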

\begin{proof}
The identity \eqref{sb1} follows by applying the difference equation for $s_b$ in (\ref{gbsbdifferenceeqs}) recursively. The identity \eqref{sb2} is obtained by sending $b\to b^{-1}$ in \eqref{sb1} and using the symmetry $s_{b^{-1}}(x)=s_b(x)$.
\end{proof}

\begin{lemma}\label{Sigmanklemma}
Let $\Sigma_{n,k}$ denote the sum
\begin{align}\label{Sigmankdef}
\Sigma_{n,k} = \sum_{m=0}^n \alpha^{-m} \beta ^{-m} q^{-m n} \left(\alpha^{m} \beta ^{m} q^{m(n-1)}\right)^{\delta_{k,1}} \frac{\big(\frac{q^{1-m}}{q^{-n}}, \frac{q^{1-m}}{\gamma  z},  \frac{q^{1-m}}{\gamma z^{-1}} ;q\big){}_m}{\big(\frac{q^{1-m}}{q},\frac{q^{1-m}}{\beta  \gamma },\frac{q^{1-m}}{\alpha  \gamma };q\big){}_m},
\end{align}
where $\delta_{k,1}=1$ if $k=1$ and $\delta_{k,1}=0$ if $k\neq 1$. Then, for any integer $n \geq 1$, 
\begin{align}\label{Sigmankcomputed}
\Sigma_{n,k} = \begin{cases} 
{}_3\phi_2\left( \left. \begin{matrix} q^{-n},\gamma z ,\gamma z^{-1} \\ \beta \gamma, \alpha\gamma \end{matrix} \right|q; \alpha \beta q^n \right), & k = 1,
	\\
{}_3\phi_2\left( \left. \begin{matrix} q^{-n}, \gamma z, \gamma z^{-1} \\ \beta \gamma, \alpha \gamma \end{matrix}  \right| q ; q \right), & k = 2.
\end{cases}
\end{align}
\end{lemma}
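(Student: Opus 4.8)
The plan is to reduce $\Sigma_{n,k}$ to the defining series of a terminating ${}_3\phi_2$ by a termwise manipulation resting on a single reversal identity for the $q$-Pochhammer symbol. The decisive observation is that \emph{every} $q$-Pochhammer symbol occurring in the summand of \eqref{Sigmankdef} has the special shape $\big(q^{1-m}/a;q\big)_m$, and such a symbol can be converted into an ordinary $q$-Pochhammer symbol through
\beq\label{reversalSigma}
\Big(\frac{q^{1-m}}{a};q\Big)_m = \Big(-\frac{1}{a}\Big)^m q^{-\binom{m}{2}} (a;q)_m .
\eeq
This identity is immediate: writing out $\prod_{j=0}^{m-1}\big(1-a^{-1}q^{1-m+j}\big)$ and reindexing $j\mapsto m-1-j$ turns the product into $\prod_{i=0}^{m-1}\big(1-a^{-1}q^{-i}\big)$, and pulling the factor $-a^{-1}q^{-i}$ out of each term produces the prefactor $(-1/a)^m q^{-\sum_i i}=(-1/a)^m q^{-\binom{m}{2}}$ together with $\prod_{i=0}^{m-1}(1-aq^i)=(a;q)_m$.

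First I would apply \eqref{reversalSigma} to the three numerator symbols, with $a=q^{-n},\gamma z,\gamma z^{-1}$, and to the three denominator symbols, with $a=q,\beta\gamma,\alpha\gamma$. Because there are three symbols on top and three on the bottom, the six copies of $q^{-\binom{m}{2}}$ cancel in pairs, and the ratio collapses to
\[
\frac{\big(\tfrac{q^{1-m}}{q^{-n}},\tfrac{q^{1-m}}{\gamma z},\tfrac{q^{1-m}}{\gamma z^{-1}};q\big)_m}{\big(\tfrac{q^{1-m}}{q},\tfrac{q^{1-m}}{\beta\gamma},\tfrac{q^{1-m}}{\alpha\gamma};q\big)_m}
= \alpha^m\beta^m q^{m(n+1)}\,\frac{(q^{-n},\gamma z,\gamma z^{-1};q)_m}{(q,\beta\gamma,\alpha\gamma;q)_m},
\]
where the powers of $\gamma$ (namely $\gamma^{-2m}$) and of $z$ (namely $z^{-m}\cdot z^{m}$) generated by the prefactors $(-1/a)^m$ cancel between numerator and denominator, so that only a monomial in $\alpha,\beta,q$ survives in the coefficient.

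Next I would substitute this ratio back into \eqref{Sigmankdef} and collect the resulting powers against the explicit prefactor $\alpha^{-m}\beta^{-m}q^{-mn}\big(\alpha^m\beta^m q^{m(n-1)}\big)^{\delta_{k,1}}$. The factors of $\alpha$ and $\beta$ lying outside the $\delta_{k,1}$-power cancel completely, and the surviving power of $q$ is $q^{m}$; the extra factor $\big(\alpha^m\beta^m q^{m(n-1)}\big)^{\delta_{k,1}}$ then promotes $q^m$ to $(\alpha\beta q^n)^m$ when $k=1$ and leaves it as $q^m$ when $k=2$. Recognizing the remaining sum $\sum_{m=0}^{n}\frac{(q^{-n},\gamma z,\gamma z^{-1};q)_m}{(q,\beta\gamma,\alpha\gamma;q)_m}\,w^m$, which terminates at $m=n$ since $(q^{-n};q)_m=0$ for $m>n$, as the definition of ${}_3\phi_2$ with argument $w=\alpha\beta q^n$ or $w=q$, I obtain precisely \eqref{Sigmankcomputed}.

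I do not expect a genuine obstacle here: the reversal identity \eqref{reversalSigma} does all the structural work, and the two values $k=1,2$ differ only through the exponent $\delta_{k,1}$, so they are handled within a single computation. The only place calling for care is the bookkeeping of the monomial prefactors, in particular checking that the powers of $\gamma$ and of $z$ cancel \emph{exactly} in the coefficient, so that no spurious $z$-dependence contaminates the ratio of $q$-Pochhammer symbols.
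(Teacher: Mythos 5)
Your proposal is correct and follows essentially the same route as the paper: your reversal identity \eqref{reversalSigma} is exactly the paper's Eq.~\eqref{pochhammeridentity1} (cited there from \cite[Eq.~17.2.9]{NIST}) in rearranged form, and the subsequent bookkeeping — cancellation of the $\gamma$, $z$, and $q^{-\binom{m}{2}}$ factors, reduction to argument $\alpha\beta q^n$ or $q$, and recognition of the terminating ${}_3\phi_2$ — matches the paper's computation step for step. The only cosmetic difference is that you prove the reversal identity directly by reindexing the product rather than citing it.
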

\begin{proof}
Let $n \geq 1$ be an integer.
Using the general identity (see \cite[Eq. 17.2.9]{NIST})
\begin{align}\label{pochhammeridentity1}
 \left(\frac{q^{1-m}}{a};q\right)_{m} = \frac{(a;q)_{m}}{(-a)^{m} q^{\frac{m(m-1)}{2}}}
\end{align}
with, in turn, $a = q^{-n}$, $a = \gamma z$, and $a = \gamma z^{-1}$, we find
\begin{align*}
& \Big(\frac{q^{1-m}}{q^{-n}}, \frac{q^{1-m}}{\gamma z}, \frac{q^{1-m}}{\gamma z^{-1}};q\Big)_m
= \frac{\left(q^{-n};q\right)_m\left(\gamma z;q\right)_m\left(\gamma z^{-1};q\right)_m}{(-q^{-n})^m(-\gamma z)^m(-\gamma z^{-1})^m q^{\frac{3m(m-1)}{2}}}.
\end{align*}
Similarly, applying (\ref{pochhammeridentity1}) with $a = q$, $a = \beta \gamma$, and $a = \alpha \gamma$, we find
$$\frac{1}{\left(\frac{q^{1-m}}{q},\frac{q^{1-m}}{\beta \gamma },\frac{q^{1-m}}{\alpha \gamma };q\right){}_m}
= \frac{(-q)^m (-\beta \gamma)^m (-\alpha \gamma)^m q^{\frac{3m(m-1)}{2}}}{(q;q)_m(\beta \gamma;q)_m(\alpha \gamma;q)_m}.$$
Thus
\begin{align}\label{bigquotient}
\frac{\big(\frac{q^{1-m}}{q^{-n}}, \frac{q^{1-m}}{\gamma  z},  \frac{q^{1-m}}{\gamma z^{-1}} ;q\big){}_m}{\left(\frac{q^{1-m}}{q},\frac{q^{1-m}}{\beta\gamma },\frac{q^{1-m}}{\alpha\gamma };q\right){}_m}
= \alpha^m \beta^m q^{m(n+1)} \frac{\left(q^{-n}, \gamma z, \gamma z^{-1};q\right)_m}{(q, \beta \gamma, \alpha \gamma;q)_m}.
\end{align}
It follows that
\begin{align*}
\Sigma_{n,1} = \sum_{m=0}^n 
 \frac{\left(q^{-n}, \gamma z, \gamma z^{-1};q\right)_m}{(q, \beta \gamma, \alpha \gamma;q)_m} \alpha^{m} \beta^m q^{mn}, \qquad
\Sigma_{n,2}
 =  \sum_{m=0}^n 
 \frac{\left(q^{-n}, \gamma z, \gamma z^{-1};q\right)_m}{(q, \beta \gamma, \alpha \gamma;q)_m} q^m.	
\end{align*}
Since one of the entries of the Pochhammer symbols in the numerators is $q^{-n}$, we can replace the upper limit of summation with infinity without changing the values of the sums. Hence (\ref{Sigmankcomputed}) follows from the definition (\ref{hypergeometricphidef}) of {}$_3\phi_2$.
\end{proof}
In addition to Assumption \ref{assumption}, we make the following assumption.

\begin{assumption}[Restriction on the parameters]\label{assumptionhahnjacobi}
Assume that $b > 0$ is such that $b^2$ is irrational, and that
\beq\label{restrictionhahnjacobi}\begin{split}
& \sigma_s, \theta_0 \neq 0, \qquad \tfrac{\theta_*}{2}-\nu+\theta_t\pm \sigma_s \neq 0, \qquad \theta_t - \theta_* \pm \theta_0 \neq 0.
\end{split}\eeq
\end{assumption}
Assumption \ref{assumptionhahnjacobi} implies that the three increasing and the three decreasing sequences of poles of the integrand in \eqref{renormcn} are vertical and do not overlap. It also implies that all the poles of the integrand are simple. The assumption that $b^2$ is irrational ensures that $q=e^{2i\pi b^2}$ is not a root of unity; this is needed in order for the continuous dual $q$-Hahn polynomials to be well defined, see Remark \ref{rootofunityremark}.

We now present the main result of this section.

\begin{theorem}[confluent Virasoro fusion kernels $\to$ continuous dual $q$-Hahn polynomials] \label{thhahn}
Suppose that Assumptions \ref{assumption} and \ref{assumptionhahnjacobi} are satisfied. 
Let $n \geq 0$ be an integer.
For each integer $k \geq 1$, the renormalized confluent fusion kernel $\mathcal{C}^\ren_k$ defined in \eqref{renormcn} reduces to the continuous dual $q$-Hahn polynomial $H_n$ defined in \eqref{qhahn} in the limit $\nu \to \nu_n$ as follows:
%\beq\label{case1}
%\mathcal{C}^{\ren}_{2j}(\boldsymbol{\theta},\nu_n,\sigma_s) = {}_3\phi_2\left( \left. \begin{split} \begin{matrix} q^{-n},\alpha e^{2\pi b \sigma_s} ,\alpha e^{-2\pi b \sigma_s} \\ \alpha\beta, \alpha\gamma \end{matrix} \end{split} \right|q;q\right),
%\eeq
%and 
%\beq\label{case2}
%\mathcal{C}^{\ren}_{2j+1}(\boldsymbol{\theta},\nu_n,\sigma_s) = {}_3\phi_2\left( \left. \begin{split}\begin{matrix} q^{n},\alpha^{-1} e^{-2\pi b \sigma_s} ,\alpha^{-1} e^{2\pi b \sigma_s} \\ \alpha^{-1}\beta^{-1}, \alpha^{-1}\gamma^{-1} \end{matrix} \end{split} \right|q^{-1};q^{-1}\right),
%\eeq
\beq\label{degenqhahn}
\lim\limits_{\nu \to \nu_n} \mathcal{C}^{\ren}_{k}(b,\boldsymbol{\theta},\nu,\sigma_s) = \begin{cases}
H_n(e^{2\pi b \sigma_s};\alpha^{-1},\beta^{-1},\gamma^{-1},q^{-1}), & \text{$k$ odd},
	\\
H_n(e^{2\pi b \sigma_s};\alpha,\beta,\gamma,q), & \text{$k$ even},
\end{cases}
\eeq
where $\nu_n \in \mathbb{C}$ is defined in (\ref{nun}) and
\beq\label{paramhahn}
\alpha=-e^{-2\pi b(\theta_0-\theta_t+\tfrac{ib}2)}, \quad \beta=-e^{2\pi b (\theta_0+\theta_t-\tfrac{ib}2)}, \quad \gamma=-e^{-2\pi b (\theta_*+\tfrac{ib}2)}, \qquad q=e^{-2i\pi b^2}.
\eeq
\end{theorem}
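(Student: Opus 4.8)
The plan is to compute the limit $\nu \to \nu_n$ directly in the integral representation \eqref{renormcn}, showing that the integral degenerates into a finite sum of residues which, via Lemmas \ref{sbdiff} and \ref{Sigmanklemma}, assembles into the ${}_3\phi_2$ series defining $H_n$. (An alternative route, paralleling the second proof of Theorem \ref{FAWthm}, would establish the case $n=0$ directly and then propagate it using the limit of the first difference equation \eqref{eigenvaluecn1}: as $\nu \to \nu_n$ the shifts $e^{\pm ib\partial_\nu}$ send $\nu_n$ to $\nu_{n\mp 1}$, so the operator $H_{\mathcal{C}_k}(b,\nu)$, conjugated by $N_1 N_2$, should reduce to the three-term recurrence operator of the continuous dual $q$-Hahn polynomials. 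I describe the direct route, since the two lemmas are tailored to it.)

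First I would examine the prefactor $\mathcal{P}_k$ in \eqref{prefcn}. By the pole structure \eqref{polesb} of $s_b$, the factor $s_b(\tfrac{\theta_*}2-\theta_t+\nu)^{-1}$ has a simple zero at $\nu=\nu_n$, because $\tfrac{\theta_*}2-\theta_t+\nu_n=-\tfrac{iQ}2-inb$ is a pole of $s_b$ (simple, since $b^2$ is irrational by Assumption \ref{assumptionhahnjacobi}). Hence $\mathcal{P}_k$ vanishes to first order as $\nu \to \nu_n$, and for the limit \eqref{degenqhahn} to be finite and nonzero the integral in \eqref{renormcn} must develop a compensating simple pole in $\nu-\nu_n$.

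Next I would locate this pole. As $\nu \to \nu_n$, the leading pole of the downward sequence of the numerator factor $s_b(x+\tfrac{\theta_*}2-\theta_t+\nu)$ in \eqref{I} rises to $x=inb$, so this sequence pinches the contour $\mathsf{C}$ against the upward sequence emanating from the denominator factor $s_b(x+\tfrac{iQ}2)^{-1}$ (which starts at $x=0$) at the $n+1$ points $x=ijb$, $j=0,1,\dots,n$. Writing $\nu=\nu_n+\delta$, each pinch of two simple poles contributes a term of order $\delta^{-1}$, so the integral acquires a simple pole in $\delta$ whose residue is the sum over $j$ of the trapped residues of $I^{(k)}$ at $x=ijb$; meanwhile the integral over a contour deformed away from the pinches stays bounded and is annihilated by the prefactor zero. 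Consequently $\lim_{\nu\to\nu_n}\mathcal{C}^\ren_k$ is finite and equals the leading Taylor coefficient of $\mathcal{P}_k$ at $\nu_n$ times this sum of $n+1$ residues.

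It then remains to evaluate and resum the residues. Using the residue \eqref{ressb} of $s_b$ at $-\tfrac{iQ}2$ together with the difference relations \eqref{gbsbdifferenceeqs}, I would express each trapped residue at $x=ijb$ as the residue at $x=0$ times ratios of the form $s_b(ijb+\cdot)/s_b(\cdot)$; Lemma \ref{sbdiff} turns every such ratio into a $q$-Pochhammer symbol of length $j$ with base $q=e^{-2i\pi b^2}$ as in \eqref{paramhahn}. Under the correspondence \eqref{paramhahn} and with $z=e^{2\pi b\sigma_s}$, the resulting sum over $j$ matches the sum $\Sigma_{n,k}$ of Lemma \ref{Sigmanklemma}, whose value is the desired ${}_3\phi_2$; matching the leftover normalization against the definition \eqref{qhahn} then yields $H_n$. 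The parity of $k$ enters only through the signs $(-1)^k$ in \eqref{I} and \eqref{prefcn}: these select between the $k=1$ and $k=2$ cases of Lemma \ref{Sigmanklemma} and, via the symmetry relating the continuous dual $q$-Hahn polynomials at $q$ and $q^{-1}$, produce the inverted parameters $(\alpha^{-1},\beta^{-1},\gamma^{-1},q^{-1})$ for odd $k$ and $(\alpha,\beta,\gamma,q)$ for even $k$. The main obstacle I anticipate is the pinching/contour-deformation bookkeeping of the third step: one must verify that exactly the $n+1$ pole pairs identified pinch $\mathsf{C}$ (and that the two remaining downward sequences, as well as the $\sigma_s$-dependent denominator sequences, do not), that each pinch yields a simple pole in $\delta$, and that the deformed remainder integral is uniformly bounded so the prefactor zero kills it. The subsequent algebra—evaluating the residues via \eqref{ressb} and \eqref{gbsbdifferenceeqs}, applying Lemma \ref{sbdiff}, and reconciling the overall constant and the $k$-parity with \eqref{qhahn}—is lengthy but routine once the residue sum is identified with $\Sigma_{n,k}$.
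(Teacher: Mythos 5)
Your proposal follows essentially the same route as the paper's proof: deform $\mathsf{C}$ past the $\nu$-dependent pole sequence of $s_b(x+\tfrac{\theta_*}2-\theta_t+\nu)$ colliding with the poles from $s_b(x+\tfrac{iQ}2)^{-1}$, observe that only the $n+1$ colliding residues develop simple poles in $\nu-\nu_n$ that survive the simple zero of $\mathcal{P}_k$ (the non-colliding crossings, excluded by Assumption \ref{assumptionhahnjacobi}, and the $\mathsf{C}'$ remainder being killed by that zero), then evaluate via \eqref{ressb}, Lemma \ref{sbdiff}, and Lemma \ref{Sigmanklemma}. The only point you understate is the last step: matching $\Sigma_{n,k}$ to $H_n$ is not mere normalization but requires the ${}_3\phi_2$ transformation identities \cite[Eqs. 17.9.8, 17.9.10]{NIST}, which is exactly how the paper closes the argument.
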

\begin{proof}
Since $\mathcal{C}^\ren_k = \mathcal{C}^\ren_{k+2}$, it is enough to prove the result for $k = 1$ and $k = 2$. Thus let $k \in \{1, 2\}$. Let $m,l \geq 0$ be integers and define $x_{m,l}\in \mathbb{C}$ by
\beq\label{xml}
x_{m,l} = -\tfrac{iQ}2-\tfrac{\theta_*}2+\theta_t-imb - \frac{il}{b}-\nu.
\eeq 
%Moreover, in the limit $\nu \to \nu_n$ the contour integral \eqref{renormcn} produces two kinds of terms: terms which have simple poles located at $\nu=\nu_n$, and terms which are regular at $\nu=\nu_n$. 
The integrand $I^{(k)}$ defined in \eqref{I} contains the factor
\beq\label{sboversb}
\frac{s_b(x+\tfrac{\theta_*}2-\theta_t+\nu)}{s_b(x+\tfrac{iQ}2)}.
\eeq
The function $s_b(x+\tfrac{\theta_*}2-\theta_t+\nu)$ has a simple pole located at $x=x_{m,l}$ for any integers $m,l \geq 0$. 
In the limit $\nu \to \nu_n$, the pole $x_{n,0}$ moves upwards, crosses the contour $\mathsf{C}$, and collides with the pole of $s_b(x+\tfrac{iQ}2)$ located at $x=0$. Therefore, before taking the limit $\nu \to \nu_n$, we deform $\mathsf{C}$ into a contour $\mathsf{C}'$ which passes just below $x_{n,0}$. As $\mathsf{C}$ is deformed into $\mathsf{C}'$, the integral in \eqref{renormcn} picks up residue contributions from all the poles $x=x_{m,l}$ which satisfy $\im{x_{m,l}}\geq \im{x_{n,0}}$, i.e., from all the poles $x_{m,l}$ such that $(m,l)$ satisfies $mb+\tfrac{l}{b} \leq nb$. We find
\beq\label{5p18}
\int_{\mathsf{C}} dx ~ I^{(k)}\lb x,\boldsymbol{\theta},\nu, \sigma_s\rb =  -2i\pi \sum_{\substack{m,l \geq 0 \\ mb+\tfrac{l}{b} \leq nb}}\underset{x=x_{m,l}}{\text{Res}}\lb I^{(k)}\lb x,\boldsymbol{\theta},\nu, \sigma_s\rb\rb + \int_{\mathsf{C}'} dx~I^{(k)}\lb x,\boldsymbol{\theta},\nu, \sigma_s\rb.
\eeq
Using Lemma \ref{sbdiff} and the residue of the function $s_b$ in \eqref{ressb}, a straightforward computation shows that residue of $I^{(k)}$ at the simple pole $x = x_{m,l}$ is given by
\begin{align}\nonumber
-2i\pi \underset{x=x_{m,l}}{\text{Res}} &I^{(k)}\lb x,\boldsymbol{\theta},\nu, \sigma_s\rb
=  e^{\frac{i\pi}{2} \left(b^2 m (m+1)+\frac{l (l+1)}{b^2}+2 ml+m+l\right)}e^{\frac{i \pi (-1)^k}{2} \left(\frac{\theta_*}{2}+\theta_t+\nu +\frac{i Q}{2}\right) \left(2 i b m+\frac{i (2 l+1)}{b}+i b+\theta_*-2 \theta_t+2 \nu \right)}  
	\\ \nonumber
& \times \frac{s_b\left(-i b m-\frac{i l}{b}-\theta_0-\theta_*+\theta_t-\frac{i Q}{2}\right) s_b\left(-i b m-\frac{i l}{b}+\theta_0-\theta_*+\theta_t-\frac{i Q}{2}\right)}{s_b\left(-i b m-\frac{i l}{b}-\theta_*+\sigma_s\right) s_b\left(-i b m-\frac{i l}{b}-\theta_*-\sigma_s\right) s_b\left(-i b m-\frac{i l}{b}-\frac{\theta_*}{2}+\theta_t-\nu \right)} 
	\\ \label{5p20}
& \times \frac{1}{\left(e^{\frac{2 i \pi  l}{b^2}};e^{-\frac{2 i \pi }{b^2}}\right){}_l \left(e^{2i\pi m b^2};e^{-2 i\pi b^2}\right){}_m}.
\end{align}
Because of the factor $s_b(-i b m-\frac{il}b-\frac{\theta_*}{2}+\theta_t-\nu)^{-1}$ in \eqref{5p20}, we deduce from the properties (\ref{polesb}) of $s_b$ that the function $\text{Res}_{x=x_{m,l}} I^{(k)}\lb x,\boldsymbol{\theta},\nu, \sigma_s\rb$ has a simple pole at $\nu=\nu_n$ if the pair $(m,l)$ satisfies $m\in [0,n]$ and $l=0$, but is regular at $\nu=\nu_n$ for all other choices of $m \geq 0$ and $l \geq 0$. 
On the other hand, because of the factor $s_b(\tfrac{\theta_*}2-\theta_t+\nu)^{-1}$ appearing in \eqref{prefcn}, $\mathcal{P}_k(\boldsymbol{\theta},\nu,\sigma_s)$ has a simple zero at $\nu = \nu_n$.
Hence the product $P_k(\boldsymbol{\theta},\nu,\sigma_s) \text{Res}_{x=x_{m,l}} I^{(k)}( x,b,\boldsymbol{\theta},\nu, \sigma_s)$ is nonzero in the limit $\nu \to \nu_n$ only if $m\in [0,n]$ and $l=0$. We deduce that 
\beq\label{CkrenResIk}
\lim\limits_{\nu \to \nu_n} \mathcal{C}^\ren_k \lb b,\boldsymbol{\theta},\nu, \sigma_s\rb = \mathcal{C}^\ren_k \lb b,\boldsymbol{\theta},\nu_n, \sigma_s\rb = -2i\pi \lim\limits_{\nu \to \nu_n} \mathcal{P}_k\lb\boldsymbol{\theta},\nu, \sigma_s\rb \sum_{m=0}^n \underset{x=x_{m,0}}{\text{Res}} I^{(k)} \lb x, b, \boldsymbol{\theta},\nu, \sigma_s\rb.
\eeq
More explicitly, for $k=1,2$, we find that 
\begin{align}\nonumber
& \mathcal{C}^\ren_k ( b,\boldsymbol{\theta},\nu_n, \sigma_s) =e^{\pi b n (\theta_*+\theta_t-\theta_0)} e^{-2\pi b n\delta_{k,1}( \theta_*+\theta_t-\theta_0 )}\frac{s_b(\theta_0-\theta_*+\theta_t-\frac{i Q}{2})}{s_b(-i b n+\theta_0-\theta_*+\theta_t-\frac{i Q}{2})} \frac{s_b(\frac{i Q}{2}-2 \theta_t)}{s_b(i b n+\frac{i Q}{2}-2 \theta_t)} 
	\\ \nonumber
& \times \displaystyle \sum_{m=0}^n \frac{e^{-2\pi b m\delta_{k,1}( i b n-2 \theta_t )} e^{i \pi b m (\frac{Q}2+2 i \theta_t+b(n+\frac{m}2))}}{(e^{2 i \pi  b^2 m};e^{-2 i \pi  b^2}){}_m}  \frac{s_b(-i b m-\theta_0-\theta_*+\theta_t-\tfrac{i Q}{2})}{s_b(-\theta_0-\theta_*+\theta_t-\frac{i Q}{2})} \frac{s_b(-i b m+\theta_0-\theta_*+\theta_t-\frac{i Q}{2})}{s_b(\theta_0-\theta_*+\theta_t-\frac{i Q}{2})} 
	\\
& \times \frac{s_b(-\theta_*-\sigma_s)}{s_b(-i b m-\theta_*-\sigma_s)} \frac{s_b(\sigma_s-\theta_*)}{s_b(-i b m-\theta_*+\sigma_s)} \frac{s_b(i b n+\frac{i Q}{2})}{s_b(-i b m+i b n+\frac{i Q}{2})}.
\end{align}
Using \eqref{identitysb}, a long but straightforward computation gives
\beq\label{4p28}\begin{split}
\mathcal{C}^\ren_k &( b,\boldsymbol{\theta},\nu_n, \sigma_s) 
=  e^{i\pi n^2} e^{\pi  b n (4 \theta_t-i (n+1) Q)} e^{-2\pi  b n \delta_{k,1} (-\theta_0+\theta_*+\theta_t)} \frac{ \big(e^{2 \pi  b (i b n-\theta_0+\theta_*-\theta_t)};e^{-2 i \pi  b^2}\big){}_n}{\big(e^{2 \pi  b (2 \theta_t-i b)};e^{-2 i \pi  b^2}\big){}_n} 
	\\
& \times \sum_{m=0}^n e^{2 \pi  b m (i b (n+1)-2 \theta_t)} e^{-2\pi bm\delta_{k,1}(i b n-2 \theta_t)} 
	\\
&\times \frac{\big(e^{-2 i \pi  b^2} e^{2 i \pi  b^2 (m-n)},-e^{\pi  b (2 (\theta_*-\sigma_s)+i b (2 m-1))}, -e^{\pi  b (2 (\theta_*+\sigma_s)+i b (2 m-1))} ;e^{-2 i \pi  b^2}\big){}_m}{\big(e^{2 i \pi  b^2 m},e^{2 \pi  b (i b m-\theta_0+\theta_*-\theta_t)},  e^{2 \pi  b (i b m+\theta_0+\theta_*-\theta_t)} ;e^{-2 i \pi  b^2}\big){}_m},
\end{split}\eeq
Recalling the parameter correspondence \eqref{paramhahn} and letting $z=e^{2\pi b \sigma_s}$, equation \eqref{4p28} can be rewritten as
\beq\label{4p29}\begin{split}
\mathcal{C}^\ren_k ( b,\boldsymbol{\theta},\nu_n, \sigma_s) = &\; e^{i \pi  n^2} \alpha ^n \beta ^n q^{\frac{n (n-1)}{2}} \frac{\big(\frac{q^{1-n}}{\gamma  \beta };q\big){}_n}{(\alpha  \beta ;q)_n} \big(\alpha ^{n/2} \gamma ^{-n/2}\big)^{-2\delta_{k,1}} \Sigma_{n,k},
\end{split}\eeq
where $\Sigma_{n,k}$ denotes the sum in (\ref{Sigmankdef}).
Utilizing (\ref{pochhammeridentity1}) with $a = \gamma \beta$, we can write (\ref{4p29}) as
\beq\begin{split}
\mathcal{C}^\ren_k ( b,\boldsymbol{\theta},\nu_n, \sigma_s) = &\;  \frac{\alpha^n\left(\gamma \beta;q\right)_{n}}{\gamma^n (\alpha  \beta ;q)_n} \big(\alpha ^{n/2} \gamma ^{-n/2}\big)^{-2\delta_{k,1}} \Sigma_{n,k},
\end{split}\eeq
Hence the proof of equation \eqref{degenqhahn} reduces to proving the following identity:
\begin{align}\label{4p24}
\frac{\alpha^n\left(\gamma \beta;q\right)_{n}}{\gamma^n (\alpha  \beta ;q)_n}  \big(\alpha ^{n/2} \gamma ^{-n/2}\big)^{-2\delta_{k,1}} \Sigma_{n,k} 
= \begin{cases}
H_n(z;\alpha^{-1},\beta^{-1},\gamma^{-1},q^{-1}), & k=1,
	\\
H_n(z;\alpha,\beta,\gamma,q), & k=2.
\end{cases}
% = {}_3\phi_2\left( \left. \begin{split} \begin{matrix} q^{(-1)^{k+1}n},\alpha^{(-1)^k} z ,\alpha^{(-1)^k} z^{-1} \\ (\alpha\beta)^{(-1)^k} , (\alpha\gamma)^{(-1)^k}  \end{matrix} \end{split} \right|q^{(-1)^k} ;q^{(-1)^k} \right).
\end{align}

Let us first prove (\ref{4p24}) for $k = 1$. Recalling the definition (\ref{qhahn}) of $H_n$ and applying the identity (see \cite[Eq. 17.2.7]{NIST})
\begin{align}\label{pochhammeridentity2}
(a; q^{-1})_n= (a^{-1}; q)_n (-a)^n q^{-\frac{n(n-1)}{2}}
\end{align}
multiple times, we find
\begin{align*}
H_n(z;\alpha^{-1},\beta^{-1},\gamma^{-1},q^{-1})
& = {}_3\phi_2\left( \left. \begin{matrix} q^{n},\alpha^{-1} z ,\alpha^{-1} z^{-1} \\ (\alpha\beta)^{-1} , (\alpha\gamma)^{-1}  \end{matrix} \right|q^{-1} ;q^{-1} \right)
 = \sum_{m=0}^\infty \frac{(q^{n}, \alpha^{-1} z, \alpha^{-1} z^{-1};q^{-1})_m}{( (\alpha \beta)^{-1}, (\alpha \gamma)^{-1},q^{-1}; q^{-1})_m}q^{-m}
%	\\
%& = \sum_{m=0}^\infty \frac{(q^{-n}, \alpha z, \alpha z^{-1};q)_m}{(\alpha \beta, \alpha \gamma,q; q)_m} q^{-m} \frac{(-q^n)^{m} (- \alpha^{-1} z)^{m}(-\alpha^{-1} z^{-1})^{m}}{(-(\alpha \beta)^{-1})^{m}(-(\alpha \gamma)^{-1})^{m}(-q^{-1})^{m}}
	\\
& = \sum_{m=0}^\infty \frac{(q^{-n}, \alpha z, \alpha z^{-1};q)_m}{(\alpha \beta, \alpha \gamma,q; q)_m} \frac{q^{mn}}
{\beta^{-m} \gamma^{-m}}
= {}_3\phi_2\left( \left. \begin{matrix} q^{-n},\alpha z^{-1},\alpha z \\ \alpha\beta, \alpha\gamma \end{matrix} \right|q; \beta \gamma q^n\right).
\end{align*}
In view of Lemma \ref{Sigmanklemma}, it follows that (\ref{4p24}) can be rewritten as follows when $k = 1$:
\begin{align}\label{identitykequals12}
\frac{(\gamma\beta; q)_n}{(\alpha \beta; q)_n} 
 {}_3\phi_2\left( \left. \begin{matrix} q^{-n},\gamma z ,\gamma z^{-1} \\ \beta \gamma, \alpha\gamma \end{matrix} \right|q; \alpha \beta q^n \right)
 = {}_3\phi_2\left( \left. \begin{matrix} q^{-n},\alpha z^{-1},\alpha z \\ \alpha\beta, \alpha\gamma \end{matrix} \right|q; \beta \gamma q^n\right).
\end{align}
Applying the identity (see \cite[Eq. 17.9.10]{NIST})
$$ {}_3\phi_2\left( \left. \begin{matrix} q^{-n}, b, c \\ d, e \end{matrix} \right|q; \frac{deq^n}{bc} \right)= \frac{(e/c;q)_n}{(e;q)_n}
 {}_3\phi_2\left( \left. \begin{matrix} q^{-n},c, d/b \\ d, c q^{1-n}/e \end{matrix} \right|q; q \right),$$
to the left-hand side of (\ref{identitykequals12}) with
 $$b = \gamma z,\quad c = \gamma z^{-1}, \quad d = \alpha \gamma, \quad e = \beta \gamma,$$ 
and to the right-hand side of (\ref{identitykequals12}) with
 $$b = \alpha z, \quad c = \alpha z^{-1}, \quad d = \alpha\gamma, \quad e = \alpha\beta,$$ 
we deduce that the two sides of (\ref{identitykequals12}) are indeed equal. This proves (\ref{4p24}) for $k = 1$.

For $k = 2$, Lemma \ref{Sigmanklemma} and the definition (\ref{qhahn}) of $H_n$ imply that (\ref{4p24}) can be written as
\begin{align}\label{identity3}
& \frac{\alpha^n (\gamma\beta; q)_n}{\gamma^n (\alpha \beta; q)_n} 
 {}_3\phi_2\left( \left. \begin{matrix} q^{-n}, \gamma z, \gamma z^{-1} \\ \beta \gamma, \alpha \gamma \end{matrix}  \right| q ; q \right)
 = {}_3\phi_2\left( \left. \begin{matrix} q^{-n},\alpha z ,\alpha z^{-1} \\ \alpha\beta , \alpha\gamma  \end{matrix} \right|q ; q \right).
\end{align}
This equation is obtained by setting $b = \alpha z$, $c = \alpha z^{-1}$, $d = \alpha \gamma$, and $e = \alpha \beta$ in the general identity (see \cite[Eq. 17.9.8]{NIST})
\begin{align}\label{NIST1798}
\frac{(de/(bc); q)_n}{(e;q)_n} \bigg(\frac{bc}{d}\bigg)^n 
{}_3\phi_2\left( \left. \begin{matrix} q^{-n}, d/b, d/c \\ d, de/(bc) \end{matrix} \right|q ; q \right)
= {}_3\phi_2\left( \left. \begin{matrix} q^{-n}, b, c \\ d, e \end{matrix} \right|q ; q \right).
\end{align}
This proves (\ref{4p24}) also for $k = 2$ and completes the proof of the theorem.
\end{proof}

\subsection{Limits of the difference equations for $\mathcal{C}_k$ as $\nu \to \nu_n$}
In this subsection, we explain how the recurrence relation and the difference equation for the continuous dual $q$-Hahn polynomials emerge from the difference equations for $\mathcal{C}_k$ in the limit $\nu \to \nu_n$. 

%\subsubsection{Difference equations for $\mathcal{C}^\ren_k$}
We first formulate the difference equations in terms of $\mathcal{C}^\ren_k$.
Using the relation \eqref{ckren} between $\mathcal{C}^\ren_k$ and $\mathcal{C}_k$, the four difference equations for $\mathcal{C}_k$ derived in Theorem \ref{thm6p2} and Theorem \ref{thm6p4} can be expressed as difference equations for  $\mathcal{C}^\ren_k$. Define the renormalized difference operators
\begin{subequations} \label{5p27} \begin{align} 
\label{5p27a} & H_{\mathcal{C}_k^\ren}(b,\nu) = N_1(\nu,\sigma_s)  H_{\mathcal{C}_k}(b,\nu)  N_1(\nu,\sigma_s)^{-1}, \\
\label{5p27b} & \tilde{H}_{\mathcal{C}_k^\ren}(b,\sigma_s) = N_1(\nu,\sigma_s)  \tilde{H}_{\mathcal{C}_k}(b,\sigma_s)  N_1(\nu,\sigma_s)^{-1},
\end{align} \end{subequations}
where the difference operators $H_{\mathcal{C}_k}$ and $\tilde{H}_{\mathcal{C}_k}$ are respectively defined in \eqref{Hcn} and \eqref{Hcntilde}, and $N_1$ is given in \eqref{N1}. Using the identity \eqref{gbsbdifferenceeqs} satisfied by the function $g_b$, it can be verified that $N_1$ satisfies the difference equations
\beq\label{N1nu}
\frac{N_1(\nu,\sigma_s)}{N_1(\nu+ib,\sigma_s)} = U(\nu,\theta_*), \qquad \frac{N_1(\nu,\sigma_s)}{N_1(\nu-ib,\sigma_s)}=e^{(-1)^k 2\pi b(ib+\theta_0-\theta_t)} U(-\nu,-\theta_*),
\eeq
where
\beq
U(\nu,\theta_*)=-e^{2 \pi  b (2 \nu +i b) \left(2 \left\lfloor \frac{k}{2}\right\rfloor -1\right)}e^{\pi  b (-1)^k (-2 i b-\theta_0+\theta_t-2 \nu )}\tfrac{\Gamma \left(\frac{b Q}{2}+ib \left(\theta_0+\frac{\theta_*}{2}-\nu \right)\right)}{\Gamma \left(\frac{1-b^2}{2}+ib \left(\theta_0+\nu -\frac{\theta_*}{2}\right)\right)} \tfrac{\Gamma \left(\frac{b Q}{2}-i b \left(\frac{\theta_*}{2}+\theta_t+\nu \right)\right)}{\Gamma \left(\frac{1-b^2}{2}+ib \left(\frac{\theta_*}{2}-\theta_t+\nu \right)\right)}.
\eeq
Using the identities \eqref{N1nu}, we find that $H_{\mathcal{C}_k^\ren}(b,\nu)$ is given explicitly by 
%\beq \label{Hnuren} \begin{split}
%H_{\mathcal{C}_n^\ren}(b,\nu) & = H_{\mathcal{C}_n}^{0}(\nu) + e^{-ib \partial_\nu} \\
%& + 16 e^{2 \pi  b (-1)^n (2 \nu +i b)} \prod _{\epsilon =\pm 1} \cosh \left(\pi  b \left(\tfrac{i b}{2}-\tfrac{\theta_*}{2}+\nu +\theta_0 \epsilon\right)\right) \cosh \left(\pi  b \left(\tfrac{i b}{2}+\tfrac{\theta_*}{2}+\nu +\theta_t \epsilon\right)\right) e^{ib\partial_\nu},
%\end{split}\eeq 
\begin{align} \label{Hnuren}
H_{\mathcal{C}_k^\ren}(b,\nu) = H^+_{\mathcal{C}_k^\ren}(\nu) e^{ib\partial_\nu}+ H^-_{\mathcal{C}_k^\ren}(\nu) e^{-ib\partial_\nu} + H_{\mathcal{C}_k}^{0}(\nu),
\end{align}
where $H_{\mathcal{C}_k}^{0}(\nu)$ is given in \eqref{Hn0} and 
\beq
H^\pm_{\mathcal{C}_k^\ren}(\nu) = -4 e^{\mp \pi  b (-1)^{k} (\theta_0-\theta_t\mp 2\nu )} \cosh \left(\pi  b \left(\tfrac{i b}{2}+\theta_0\mp \tfrac{\theta_*}{2}\pm\nu \right)\right) \cosh \left(\pi  b \left(\tfrac{i b}{2}-\theta_t\pm\tfrac{\theta_*}{2}\pm\nu \right)\right).
\eeq

The difference operator $\tilde{H}_{\mathcal{C}_k^\ren}(b,\sigma_s)$ is computed in a similar way. The normalization factor $N_1$ also satisfies the identities 
\beq\label{5p31}
\frac{N_1(\nu,\sigma_s)}{N_1(\nu,\sigma_s+ib)} = V(\sigma_s), \qquad \frac{N_1(\nu,\sigma_s)}{N_1(\nu,\sigma_s-ib)} = V(-\sigma_s),
\eeq
where
\beq\begin{split}
V(\sigma_s)= e^{\pi  b (2 \sigma_s+i b) \left(2 \left\lfloor \frac{k}{2}\right\rfloor -1\right)} & \tfrac{\Gamma \left(b^2-2 i b \sigma_s\right) \Gamma (-2 i b \sigma_s) \Gamma \left(\frac{1-b^2}{2}+ib (\sigma_s-\theta_*)\right)}{\Gamma \left(2 i b \sigma_s-2 b^2\right) \Gamma \left(2 i b \sigma_s-b^2\right) \Gamma \left(\frac{b Q}{2}-i b (\theta_*+\sigma_s)\right)} \prod _{\epsilon=\pm} \tfrac{\Gamma \left(\frac{1-b^2}{2}+ib \left(\epsilon \theta_0+\theta_t+\sigma_s\right)\right)}{\Gamma \left(\frac{b Q}{2}+ib \left(\epsilon \theta_0+\theta_t-\sigma_s\right)\right)}.
\end{split}\eeq
Using \eqref{5p31}, it is straightforward to verify that $\tilde{H}_{\mathcal{C}_k^\ren}(b,\sigma_s)$ takes the form
\beq\label{4p34}\begin{split}
\tilde{H}_{\mathcal{C}_k^\ren}(b,\sigma_s) & = -V_k(-\sigma_s, -\theta_t) e^{ib\partial_{\sigma_s}} -V_k(\sigma_s, -\theta_t) e^{-ib\partial_{\sigma_s}} + \tilde{H}_{\mathcal{C}_k}^{0}(\sigma_s),
 \end{split}\eeq
 where $\tilde{H}_{\mathcal{C}_k}^{0}(\sigma_s)$ is given in \eqref{Htilden0} and $V_k$ is defined in (\ref{Vkdef}). 
 
 The next lemma summarizes the difference equations satisfied by $\mathcal{C}_k^\text{ren}$.
%However, since the renormalized difference operator \eqref{4p34} is expressed in terms of $\tilde{H}_{\mathcal{C}_k}^{0}(\sigma_s)$ and $V_k(\sigma_s, -\theta_t)$, we would like to like to relate the two latter functions. 

\begin{lemma}[Difference equations for $\mathcal{C}_k^\ren$]\label{lemma4p1}
Let  $H_{\mathcal{C}_k^\ren}$ and $\tilde{H}_{\mathcal{C}_k^\ren}$ be the difference operators defined in \eqref{Hnuren} and \eqref{4p34}, respectively. For each integer $k \geq 1$, the renormalized confluent fusion kernel $\mathcal{C}_k^\ren$ satisfies the pair of difference equations
\begin{subequations} \label{differencecnren}\begin{align}
\label{37a} & H_{\mathcal{C}_k^\ren}(b,\nu) ~ \mathcal{C}^\ren_k \lb b,\boldsymbol{\theta},\nu, \sigma_s\rb = 2\cosh{(2\pi b \sigma_s)} ~ \mathcal{C}^\ren_k \lb b,\boldsymbol{\theta}, \nu, \sigma_s\rb, \\
\label{37b} & \tilde{H}_{\mathcal{C}_k^\ren}(b,\sigma_s) ~ \mathcal{C}^\ren_k \lb b,\boldsymbol{\theta},\nu, \sigma_s\rb = e^{(-1)^{k+1}2\pi b \nu}\mathcal{C}^\ren_k \lb b,\boldsymbol{\theta},\nu, \sigma_s\rb,
\end{align} \end{subequations}
as well as the pair of difference equations obtained by replacing $b\to b^{-1}$ in \eqref{differencecnren}. 
\end{lemma}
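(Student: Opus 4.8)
The plan is to deduce all four difference equations for $\mathcal{C}_k^\ren$ directly from the four equations for $\mathcal{C}_k$ in Theorem \ref{thm6p2} and Theorem \ref{thm6p4} by a conjugation argument. The crucial structural input is already in place: by \eqref{5p27a} and \eqref{5p27b} the operators $H_{\mathcal{C}_k^\ren}$ and $\tilde H_{\mathcal{C}_k^\ren}$ written in their explicit forms \eqref{Hnuren} and \eqref{4p34} coincide with the conjugates $N_1 H_{\mathcal{C}_k} N_1^{-1}$ and $N_1 \tilde H_{\mathcal{C}_k} N_1^{-1}$ (this identification is exactly what the computations using \eqref{N1nu} and \eqref{5p31} establish). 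Combined with the relation $\mathcal{C}_k^\ren = N_1 N_2\, \mathcal{C}_k$ from \eqref{ckren}, this reduces the lemma to pushing the eigenvalue equations through the conjugation.

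First I would prove \eqref{37a}. Substituting \eqref{5p27a} and \eqref{ckren} and cancelling the adjacent multiplication operators $N_1^{-1} N_1$ gives
\[
H_{\mathcal{C}_k^\ren}(b,\nu)\,\mathcal{C}_k^\ren = N_1\, H_{\mathcal{C}_k}(b,\nu)\, N_2\, \mathcal{C}_k.
\]
Since $N_2 = N_2(\boldsymbol{\theta})$ is independent of $\nu$, it commutes with the shift operators $e^{\pm ib\partial_\nu}$ and hence with $H_{\mathcal{C}_k}(b,\nu)$; applying \eqref{eigenvaluecn1} then turns the right-hand side into $2\cosh(2\pi b\sigma_s)\, N_1 N_2\, \mathcal{C}_k$, where I have used that the eigenvalue and the prefactor $N_1 N_2$ are multiplication operators and so commute. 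This equals $2\cosh(2\pi b\sigma_s)\,\mathcal{C}_k^\ren$, proving \eqref{37a}. The proof of \eqref{37b} is identical with the roles of $\nu$ and $\sigma_s$ interchanged: $\tilde H_{\mathcal{C}_k}(b,\sigma_s)$ shifts $\sigma_s$, the factor $N_2$ is independent of $\sigma_s$ and so commutes with it, and the first equation of \eqref{eigenvaluetildecn} supplies the eigenvalue $e^{(-1)^{k+1}2\pi b\nu}$, which again commutes with $N_1 N_2$.

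For the two remaining equations, obtained by $b\to b^{-1}$ in \eqref{differencecnren}, I would exploit the symmetry $\mathcal{C}^\ren_k(b,\boldsymbol{\theta},\nu,\sigma_s) = \mathcal{C}^\ren_k(b^{-1},\boldsymbol{\theta},\nu,\sigma_s)$ noted below \eqref{prefcn}, rather than re-running the conjugation on the second equations of the two theorems (which carry the additional conjugations by $b^{4\nu^2}$ and $b^{2\Delta(\sigma_s)}$). Since \eqref{37a} and \eqref{37b} have now been established for every $b>0$, replacing $b$ by $b^{-1}$ throughout yields the same identities with the operators $H_{\mathcal{C}_k^\ren}(b^{-1},\nu)$ and $\tilde H_{\mathcal{C}_k^\ren}(b^{-1},\sigma_s)$ acting on $\mathcal{C}^\ren_k(b^{-1},\boldsymbol{\theta},\nu,\sigma_s)$; the symmetry then lets me replace the latter by $\mathcal{C}^\ren_k(b,\boldsymbol{\theta},\nu,\sigma_s)$, which is precisely the $b\to b^{-1}$ form of \eqref{differencecnren}.

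There is no serious analytic obstacle: the only genuinely nontrivial work --- verifying that the explicitly written coefficients in \eqref{Hnuren} and \eqref{4p34} agree with the conjugates $N_1 H_{\mathcal{C}_k} N_1^{-1}$ and $N_1 \tilde H_{\mathcal{C}_k} N_1^{-1}$ --- has already been carried out before the lemma via the shift relations \eqref{N1nu} and \eqref{5p31}. The only points demanding care are the commutation bookkeeping: one must record that $N_2$ is genuinely independent of the variable being shifted (so it passes through the difference operator untouched) and that, as functions, the eigenvalues commute with the multiplicative prefactor $N_1 N_2$. Both facts are immediate from the explicit expressions for $N_1$, $N_2$ and the eigenvalues.
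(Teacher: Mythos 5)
Your proposal is correct and takes essentially the same route as the paper, whose proof is the one-line observation that the lemma follows immediately from the conjugation identities \eqref{5p27} together with Theorems \ref{thm6p2} and \ref{thm6p4} --- precisely the bookkeeping (cancelling $N_1^{-1}N_1$, commuting the $\nu$- resp. $\sigma_s$-independent factor $N_2$ and the multiplicative eigenvalues through) that you spell out. Your treatment of the $b\to b^{-1}$ pair via the symmetry $\mathcal{C}^\ren_k\lb b,\boldsymbol{\theta},\nu,\sigma_s\rb=\mathcal{C}^\ren_k\lb b^{-1},\boldsymbol{\theta},\nu,\sigma_s\rb$ is a clean, equally valid variant that sidesteps the extra conjugations by $b^{4\nu^2}$ and $b^{2\Delta(\sigma_s)}$ appearing in \eqref{eigenvaluecn2} and the second equation of \eqref{eigenvaluetildecn}.
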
 
\begin{proof}
The lemma follows immediately from \eqref{5p27} together with Theorems \ref{thm6p2} and \ref{thm6p4}.
\end{proof}
%\begin{itemize}
%\item In the limit $\nu \to \nu_n = \theta_t-\tfrac{iQ}2-\tfrac{\theta_*}2-inb$, the equation \eqref{37a} satisfied by the renormalized fusion kernel becomes the recurrence relation \eqref{recurrenceHn} satisfied by the continuous dual $q$-Hahn polynomials 
%\item In the limit $\nu \to \nu_n$, the equation \eqref{37b} becomes the difference equation \eqref{difference} satisfied by the same polynomials
%\item Therefore we have to show that the other pair of equations, given by sending $b \to b^{-1}$ in \eqref{differencecnren}, act trivially on the polynomials.
%
%\item A possible way to proceed: Show that $\mathcal{C}^\ren_k \lb b,\boldsymbol{\theta},\nu_n, \sigma_s\rb=P^{(k)}_n(\cosh{(2\pi b \sigma_s)})$, where $P_n(x)$ is a polynomial of degree $n$ in $x$. This would imply that the difference operator which shifts $\sigma_s$ by $\pm \frac{i}{b}$ acts trivially on $P_n(\cosh{(2\pi b \sigma_s)})$.
%\item In order to prove that $\mathcal{C}^\ren_k \lb b,\boldsymbol{\theta},\nu_n, \sigma_s\rb=P^{(k)}_n(\cosh{(2\pi b \sigma_s)})$, I think we have to first prove that $P^{(k)}_1$=1, and then prove it by recurrence.
%\item Finally, the difference operator which shifts $\nu$ by $\pm \frac{i}{b}$ does not act on the polynomials anymore when $\nu \to \nu_n$.
%\end{itemize}

%\subsubsection{Difference equations in the limit $\nu \to \nu_n$}
The next proposition describes how the recurrence relation and the difference equation for the continuous dual $q$-Hahn polynomials appear in the limit $\nu \to \nu_n$. 

\begin{proposition}
Let $k \geq 1$  and $n \geq 0$ be integers. In the limit $\nu \to \nu_n$, the difference equations \eqref{37a} and (\ref{37b}) for $\mathcal{C}^\ren_k$ reduce to the three-term recurrence relation (\ref{recurrenceHn}) and the difference equation \eqref{differencehahn} for the continuous dual $q$-Hahn polynomials, respectively.
More precisely,
\beq \label{recurrence1}
R_{H_n}\lb \alpha^{(-1)^k},\beta^{(-1)^k},\gamma^{(-1)^k};q^{(-1)^k}\rb~\mathcal{C}^\ren_k \lb b,\boldsymbol{\theta},\nu_n, \sigma_s\rb = 2 \cosh{(2\pi b \sigma_s)}~\mathcal{C}^\ren_k \lb b,\boldsymbol{\theta},\nu_n, \sigma_s\rb,
\eeq
and 
\beq\label{difference1hn}\begin{split}
\Delta_{H_n}\lb\alpha^{(-1)^k},\beta^{(-1)^k},\gamma^{(-1)^k};q^{(-1)^k},e^{(-1)^k2\pi b \sigma_s}\rb~\mathcal{C}^\ren_k \lb b,\boldsymbol{\theta},\nu_n, \sigma_s\rb = \lb q^{(-1)^{k+1} n}-1\rb~\mathcal{C}^\ren_k \lb b,\boldsymbol{\theta},\nu_n, \sigma_s\rb,
\end{split}\eeq
where the parameters are related according to \eqref{paramhahn}, $R_{H_n}$ is the recurrence operator defined in \eqref{sn}, and $\Delta_{H_n}$ is the difference operator defined in \eqref{deltahn}.
\end{proposition}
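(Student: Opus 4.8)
The plan is to feed the limit relation of Theorem \ref{thhahn} into the two families of difference equations collected in Lemma \ref{lemma4p1}. The decisive structural observation is that the operator $H_{\mathcal{C}_k^\ren}(b,\nu)$ in \eqref{Hnuren} shifts only the variable $\nu$, while $\tilde{H}_{\mathcal{C}_k^\ren}(b,\sigma_s)$ in \eqref{4p34} shifts only $\sigma_s$. Since Theorem \ref{thhahn} identifies $\lim_{\nu\to\nu_n}\mathcal{C}^\ren_k = H_n(z;\cdots)$ with $z=e^{2\pi b\sigma_s}$, equation \eqref{37a} will collapse to a relation in the discrete index $n$ (the three-term recurrence), whereas \eqref{37b} will survive as a genuine difference equation in the continuous variable $\sigma_s$. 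Throughout, the discretization points \eqref{nun} are equally spaced with step $ib$, which is what makes the shift operators act cleanly on the index.

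First I would treat the recurrence \eqref{recurrence1}, which comes from \eqref{37a}. Because $\nu_n+ib=\nu_{n-1}$ and $\nu_n-ib=\nu_{n+1}$, the shifts $e^{\pm ib\partial_\nu}$ move $\nu_n$ to neighbouring lattice points. The coefficients $H^{\pm}_{\mathcal{C}_k^\ren}(\nu)$ and $H^{0}_{\mathcal{C}_k}(\nu)$ depend on $\nu$ but not on $\sigma_s$, and $\mathcal{C}^\ren_k$ is analytic in $\nu$ near $\nu_{n}$ and $\nu_{n\pm1}$ (Theorem \ref{thhahn} shows each limit is finite), so I may pass to the limit termwise to obtain
\[
H^{+}_{\mathcal{C}_k^\ren}(\nu_n)\,H_{n-1} + H^{-}_{\mathcal{C}_k^\ren}(\nu_n)\,H_{n+1} + H^{0}_{\mathcal{C}_k}(\nu_n)\,H_n = 2\cosh(2\pi b\sigma_s)\,H_n,
\]
with $H_m = H_m(e^{2\pi b\sigma_s};\cdots)$. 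This is a three-term relation in $n$ with spectral variable $z+z^{-1}=2\cosh(2\pi b\sigma_s)$. It then remains to verify, under the correspondence \eqref{paramhahn}, that $H^{+}_{\mathcal{C}_k^\ren}(\nu_n)$, $H^{-}_{\mathcal{C}_k^\ren}(\nu_n)$, $H^{0}_{\mathcal{C}_k}(\nu_n)$ agree with the coefficients of $R_{H_n}$ in \eqref{sn}. This is a finite substitution once one rewrites $2\pi b\nu_n = 2\pi b(\theta_t-\tfrac{iQ}2-\tfrac{\theta_*}2) - 2\pi i n b^2$, so that the exponentials $e^{\pm2\pi b\nu_n}$ in \eqref{Hn0} and \eqref{Hnuren} become constants times $q^{\pm n}$ with $q=e^{-2i\pi b^2}$.

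Next I would treat the difference equation \eqref{difference1hn} coming from \eqref{37b}. Here the shifts are in $\sigma_s$ and commute with the limit $\nu\to\nu_n$, so \eqref{37b} becomes directly
\[
\tilde{H}_{\mathcal{C}_k^\ren}(b,\sigma_s)\,H_n(e^{2\pi b\sigma_s};\cdots) = e^{(-1)^{k+1}2\pi b\nu_n}\,H_n(e^{2\pi b\sigma_s};\cdots).
\]
Using \eqref{nun} the eigenvalue factors as $e^{(-1)^{k+1}2\pi b\nu_n}=c\,q^{(-1)^{k+1}n}$ for an $n$-independent constant $c=e^{(-1)^{k+1}(2\pi b\theta_t-i\pi bQ-\pi b\theta_*)}$. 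Since the difference operator $\Delta_{H_n}$ of \eqref{deltahn} has eigenvalue $q^{(-1)^{k+1}n}-1$, I would match the two by writing $\tilde{H}_{\mathcal{C}_k^\ren}(b,\sigma_s)=c\,\Delta_{H_n}+c$, i.e. absorbing the constant $-c$ of $c\Delta_{H_n}$ into the zeroth-order term $\tilde{H}^{0}_{\mathcal{C}_k}(\sigma_s)$. Concretely, I check that the off-diagonal coefficients $-V_k(\mp\sigma_s,-\theta_t)$ in \eqref{4p34} reproduce $c$ times the rational prefactors of $\Delta_{H_n}$ after setting $z=e^{2\pi b\sigma_s}$, and then that the diagonal identity $\tilde{H}^{0}_{\mathcal{C}_k}(\sigma_s)=V_k(\sigma_s,-\theta_t)+V_k(-\sigma_s,-\theta_t)+c$ forced by this matching holds identically in $\sigma_s$ via \eqref{Vkdef}.

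The main obstacle I expect is the bookkeeping in this second part. The coefficients $V_k(\pm\sigma_s,-\theta_t)$ are meromorphic in $\sigma_s$ and carry the $k$-dependent factor $e^{(-1)^{k+1}\pi b(\sigma_s-\frac{ib}{2})}$, so I must track simultaneously the constant $c$, the parity of $k$, and the passage $q\leftrightarrow q^{-1}$ (which accounts for the $(-1)^{k+1}$ in the exponent of the eigenvalue) before the coefficients line up with the standard Askey--Wilson-type form of $\Delta_{H_n}$; verifying the diagonal identity above is where the hyperbolic product expansions must be carried out carefully. By contrast, the recurrence part reduces, as indicated, to a single substitution of $q^{\pm n}$ and of $\alpha,\beta,\gamma$ into finitely many coefficients.
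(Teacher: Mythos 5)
Your proposal is correct and follows essentially the same route as the paper: you take the limit $\nu\to\nu_n$ of the two difference equations in Lemma \ref{lemma4p1}, match $H^{\pm}_{\mathcal{C}_k^\ren}(\nu_n)$ and $H^{0}_{\mathcal{C}_k}(\nu_n)$ with the coefficients of $R_{H_n}$ using $e^{\pm 2\pi b\nu_n}\propto q^{\mp n}$, and recast $\tilde{H}_{\mathcal{C}_k^\ren}(b,\sigma_s)$ as $c\,(\Delta_{H_n}+1)$ via the evenness identity \eqref{5p42} and the relation \eqref{4p50} (your constant $c$ indeed equals $-e^{(-1)^k 2\pi b(\frac{ib}{2}+\frac{\theta_*}{2}-\theta_t)}$, matching the paper's factor up to the sign coming from $e^{-i\pi bQ}=-e^{-i\pi b^2}$). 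The only cosmetic difference is that you invoke Theorem \ref{thhahn} to justify termwise limits and phrase the identities for $H_n$, whereas the paper works directly with $\mathcal{C}^\ren_k\lb b,\boldsymbol{\theta},\nu_n,\sigma_s\rb$, which keeps the proposition logically independent of Theorem \ref{thhahn} (as needed for the alternative proof in Section \ref{rk6p6}).
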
 
\begin{proof}
We first prove that the difference equation \eqref{37a} reduces to the three-term recurrence relation \eqref{recurrence1} in the limit $\nu \to \nu_n$. It follows from the definition \eqref{nun} of $\nu_n$ that
\beq
\lim\limits_{\nu \to \nu_n} e^{\pm ib \partial_\nu}\mathcal{C}^\ren_k \lb b,\boldsymbol{\theta},\nu, \sigma_s\rb = T_{n\mp 1}\mathcal{C}^\ren_k \lb b,\boldsymbol{\theta},\nu_n, \sigma_s\rb,
\eeq
where $T_{n\pm 1}$ acts on $\mathcal{C}_k^\ren$ as $T_{n\pm 1}\mathcal{C}^\ren_k \lb b,\boldsymbol{\theta},\nu_n, \sigma_s\rb = \mathcal{C}^\ren_k \lb b,\boldsymbol{\theta},\nu_{n\pm 1}, \sigma_s\rb$.
 Moreover, it can be verified that
\begin{align}\label{6p42}\nonumber
& \lim\limits_{\nu \to \nu_n} H^-_{\mathcal{C}_k^\ren}(\nu) e^{-ib \partial_\nu} \mathcal{C}^\ren_k \lb b,\boldsymbol{\theta},\nu, \sigma_s\rb= b^{+}_n\lb \alpha^{(-1)^k},\beta^{(-1)^k},\gamma^{(-1)^k};q^{(-1)^k}\rb T_{n+1}\mathcal{C}^\ren_k \lb b,\boldsymbol{\theta},\nu_n, \sigma_s\rb, \\
& \lim\limits_{\nu \to \nu_n} H^+_{\mathcal{C}_k^\ren}(\nu) e^{ib\partial_\nu} \mathcal{C}^\ren_k \lb b,\boldsymbol{\theta},\nu, \sigma_s\rb = b^{-}_n\lb \alpha^{(-1)^k},\beta^{(-1)^k},\gamma^{(-1)^k};q^{(-1)^k}\rb T_{n-1}\mathcal{C}^\ren_k \lb b,\boldsymbol{\theta},\nu_n, \sigma_s\rb,
\end{align}
and
\beq
\lim\limits_{\nu \to \nu_n} H_{\mathcal{C}_k}^{0}(\nu) = \alpha+\alpha^{-1}
- b^{+}_n\lb\alpha^{(-1)^k},\beta^{(-1)^k},\gamma^{(-1)^k};q^{(-1)^k}\rb
- b^{-}_n \lb\alpha^{(-1)^k},\beta^{(-1)^k},\gamma^{(-1)^k};q^{(-1)^k} \rb,
\eeq
where $b^{+}_n$ and $b^{-}_n$ are the coefficients defined in (\ref{cndef}).
Hence, 
\beq
H_{\mathcal{C}_k^\ren}(b,\nu_n)\mathcal{C}^\ren_k \lb b,\boldsymbol{\theta},\nu_n, \sigma_s\rb
 = R_{H_n} \lb \alpha^{(-1)^k},\beta^{(-1)^k},\gamma^{(-1)^k};q^{(-1)^k}\rb\mathcal{C}^\ren_k \lb b,\boldsymbol{\theta},\nu_n, \sigma_s\rb,
\eeq
so (\ref{recurrence1}) follows from (\ref{37a}).

We now prove that in the limit $\nu \to \nu_n$, the difference equation \eqref{37b} reduces to \eqref{difference1hn}. Observe that the function $\tilde{H}_{\mathcal{C}_k}^{0}(\sigma_s)$ in \eqref{Htilden0} is an even function of $\theta_t$, i.e., 
\beq\label{5p42} \begin{split}
 \tilde{H}_{\mathcal{C}_k}^{0}(\sigma_s) &= -e^{(-1)^k \pi b(ib+\theta_*+2\theta_t)}+V_k(\sigma_s,\theta_t)+V_k(-\sigma_s,\theta_t) \\
 & = -e^{(-1)^k \pi b(ib+\theta_*-2\theta_t)}+V_k(\sigma_s, -\theta_t)+V_k(-\sigma_s, -\theta_t).
\end{split} \eeq
Using \eqref{paramhahn}, \eqref{4p34} and \eqref{5p42}, we can rewrite the limit of \eqref{37b} as $\nu \to \nu_n$ as follows:
\begin{align}\nonumber
& V_k(\sigma_s, -\theta_t)~ \mathcal{C}^\ren_k \lb b,\boldsymbol{\theta},\nu_n, \sigma_s-ib\rb + V_k(-\sigma_s, -\theta_t)~\mathcal{C}^\ren_k \lb b,\boldsymbol{\theta},\nu_n, \sigma_s+ib\rb 
	\\ \nonumber
& - \lb V_k(\sigma_s, -\theta_t) + V_k(-\sigma_s, -\theta_t) -e^{(-1)^k 2\pi b\big(\tfrac{ib}2+\tfrac{\theta_*}2-\theta_t\big)} \rb ~ \mathcal{C}^\ren_k \lb b,\boldsymbol{\theta},\nu_n, \sigma_s\rb
	\\\label{5p41}
& = e^{(-1)^{k} 2\pi b\lb\tfrac{ib}2+\frac{\theta_*}2-\theta_t+ibn\rb} \mathcal{C}^\ren_k \lb b,\boldsymbol{\theta},\nu_n, \sigma_s\rb.
\end{align}
We notice that
\beq \label{4p50}
V_k(\sigma_s, -\theta_t)=e^{(-1)^k 2\pi b\lb\tfrac{ib}2+\tfrac{\theta_*}2-\theta_t\rb} h\lb\alpha^{(-1)^k},\beta^{(-1)^k},\gamma^{(-1)^k};q^{(-1)^k},e^{(-1)^k 2\pi b \sigma_s}\rb, \eeq
where $h(\alpha,\beta,\gamma;q,z)$ is defined in \eqref{hdef}. Moreover, writing $z=e^{2\pi b \sigma_s}$ and recalling that $q=e^{-2i\pi b^2}$, we observe that the translation $\sigma_s \to \sigma_s \pm ib$ corresponds to the multiplication $z \to q^{\mp 1} z$. Using these observations, a straightforward computation shows that \eqref{5p41} implies \eqref{difference1hn}. 
\end{proof}

\subsection{An alternative proof of Theorem \ref{thhahn}}\label{rk6p6}
The proof of Theorem \ref{thhahn} presented in Section \ref{CrenkHnsubsec} is based on a direct evaluation of the limit $\nu \to \nu_n$ in the integral representation \eqref{gnm} for $\mathcal{C}^\ren_k$.
This approach has the advantage that it is direct, but it employs several identities for $q$-functions and some rather involved algebra. 
An alternative approach is based on the recurrence relation (\ref{recurrence1}).  This approach, which is similar to the approach adopted in the proof of Theorem \ref{FAWthm}, is in fact the approach we  originally used to arrive at Theorem \ref{thhahn}. In this alternative approach, the proof of Theorem \ref{thhahn} involves two steps. First, the residue computation that led to (\ref{CkrenResIk}) is carried out in the special (and relatively simple) case of $n = 0$; this yields
\beq
\mathcal{C}^\ren_k \lb b,\boldsymbol{\theta},\nu_0, \sigma_s\rb = 1, \qquad k \geq 1,
\eeq
and shows that (\ref{degenqhahn}) holds for $n = 0$. 
Second, the functions 
$$\mathcal{C}^\ren_k\lb b,\boldsymbol{\theta},\nu_n, \sigma_s\rb \quad \text{and} \quad H_n(e^{2\pi b \sigma_s}; \alpha^{(-1)^k},\beta^{(-1)^k},\gamma^{(-1)^k}, q^{(-1)^k})$$ 
obey the same recurrence relation (cf. (\ref{recurrenceHn}) and (\ref{recurrence1})). Since (\ref{degenqhahn}) holds for $n = 0$, equation \eqref{recurrence1} with $n=0$ implies that 
\beq
\mathcal{C}^\ren_k\lb b,\boldsymbol{\theta},\nu_1, \sigma_s\rb = H_1(e^{2\pi b \sigma_s}; \alpha^{(-1)^k},\beta^{(-1)^k},\gamma^{(-1)^k}, q^{(-1)^k}),
\eeq
thus (\ref{degenqhahn}) holds for $n=1$. More generally, assuming that (\ref{degenqhahn}) holds for all $n \leq N$, equation \eqref{recurrence1} shows that (\ref{degenqhahn}) holds also for $n=N+1$. By induction, (\ref{degenqhahn}) holds for all $n \geq 0$. This completes the alternative proof of Theorem \ref{thhahn}.

\section{From $\mathcal{C}_k$ to the big $q$-Jacobi polynomials}\label{CktoJacobisec}
In this section, we show that (up to normalization) the confluent Virasoro fusion kernel $\mathcal{C}_k\lb b,\boldsymbol{\theta},\nu, \sigma_s\rb$ degenerates, for each $k \geq 1$, to the big $q$-Jacobi polynomials $J_n$ when $\sigma_s$ is suitably discretized. 

\subsection{Another renormalized version of $\mathcal{C}_k$}
Define the renormalized version $\hat{\mathcal{C}}_k^\ren$ of the confluent fusion kernel $\mathcal{C}_k$ by
\beq\label{renormcnhat}
\hat{\mathcal{C}}_k^\ren(b,\boldsymbol{\theta},\nu,\sigma_s) = N_3(\nu,\sigma_s) N_4(\boldsymbol{\theta})~\mathcal{C}_k(b,\boldsymbol{\theta},\nu,\sigma_s),
\eeq
where
\beq\label{N3}\begin{split}
N_3(\nu,\sigma_s) = &\; e^{i \pi  \nu  (-1)^k (\theta_0-\theta_t-\nu -i Q)} \lb b e^{2i \pi  \left(\left\lfloor \frac{k}{2}\right\rfloor -\frac12\right)}\rb^{-\Delta (\theta_0)-\Delta (\theta_t)+\Delta (\sigma_s)-\frac{\theta_*^2}{2}+2 \nu ^2}\\
& \times \prod _{\epsilon=\pm 1} \frac{g_b\left(2 \epsilon \sigma_s-\frac{i Q}{2}\right) g_b\left(\theta_0+\epsilon \left(\frac{\theta_*}{2}-\nu \right)\right) g_b\left(\epsilon \left(\frac{\theta_*}{2}+\nu \right)-\theta_t\right)}{g_b\left(\epsilon \sigma_s-\theta_*\right) g_b\left(-\theta_0+\epsilon\theta_t +\sigma_s\right) g_b\left(-\theta_0+\epsilon\theta_t -\sigma_s\right)}
\end{split}\eeq
and
\beq\label{N4}
N_4(\boldsymbol{\theta}) = s_b\left(-2 \theta_0+\tfrac{i Q}{2}\right) s_b\left(-\theta_0-\theta_*+\theta_t+\tfrac{i Q}{2}\right) e^{i \pi (-1)^k \left(\theta_0+\frac{\theta_*}{2}-\frac{i Q}{2}\right) \left(\frac{\theta_*}{2}+\theta_t+\frac{i Q}{2}\right)}.
\eeq
It follows from (\ref{gnm}) and (\ref{renormcnhat}) that the function $\hat{\mathcal{C}}_k^\ren$ admits the integral representation
\beq\label{hatCkrendef}
\hat{\mathcal{C}}_k^\ren(b,\boldsymbol{\theta},\nu,\sigma_s) =\hat{P}^{(k)}\lb\boldsymbol{\theta},\nu, \sigma_s\rb \displaystyle \int_{\mathsf{C}} dx ~ I^{(k)}\lb x,\boldsymbol{\theta},\nu, \sigma_s\rb,
\eeq
where $I^{(k)}$ is given in \eqref{I} and
\beq \label{5p5}
\hat{P}^{(k)}\lb\boldsymbol{\theta},\nu, \sigma_s\rb = N_4(\boldsymbol{\theta}) e^{i \pi  \nu  (-1)^k (\theta_0-\theta_t-\nu -i Q)} s_b\left(\theta_0+\tfrac{\theta_*}{2}-\nu \right) s_b(\theta_0-\theta_t-\sigma_s) s_b(\theta_0-\theta_t+\sigma_s).
\eeq
In particular, $\hat{\mathcal{C}}_k^\ren$ is invariant under each of the shifts $k\to k+2$ and $b \to b^{-1}$.

\subsection{From $\hat{\mathcal{C}}^\ren_k$ to $J_n$}\label{section7p2}
Define $\{\sigma_s^{(n)}\}_{n=0}^\infty \subset \mathbb{C}$ by
\beq\label{sigmasdiscrete}
\sigma_s^{(n)}=\tfrac{iQ}2 +\theta_t-\theta_0+inb.
\eeq
The main result of this section (Theorem \ref{thjacobi}) states that the big $q$-Jacobi polynomials $J_n$ defined in \eqref{Jn} emerge from $\hat{\mathcal{C}}_k^{\text{ren}}$ when $\sigma_s$ is discretized according to \eqref{sigmasdiscrete}. The proof will require Lemma \ref{sbdiff} and the following lemma.

\begin{lemma}\label{JacobiSigmanklemma}
Let $\Sigma_{n,k}$ denote the sum
\begin{align}\label{JacobiSigmankdef}
\Sigma_{n,k} = 
& \sum_{m=0}^n q^{-m} \big(x\beta \gamma^{-1}\big)^{m\delta_{k,2}} 
\frac{\big(q^{-m+n+1}, \frac{q^{-m-n}}{\alpha  \beta}, \frac{q^{1-m}}{x};q\big){}_m}{\big(\frac{q^{-m}}{\alpha }, \frac{q^{-m}}{\gamma }, q^{-m};q\big){}_m},
\end{align}
where $\delta_{k,2}=1$ if $k=2$ and $\delta_{k,2}=0$ if $k\neq 2$. Then, for any integer $n \geq 1$, 
\begin{align}\label{JacobiSigmankcomputed}
\Sigma_{n,k} = \begin{cases} 
J_n(x^{-1};\alpha^{-1},\beta^{-1},\gamma^{-1};q^{-1}), & k =1, 	\\
J_n(x;\alpha,\beta,\gamma;q), & k=2. 
\end{cases}
\end{align}
\end{lemma}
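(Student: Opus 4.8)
The plan is to follow the same strategy used in the proof of Lemma \ref{Sigmanklemma}: rewrite every $q$-Pochhammer symbol in (\ref{JacobiSigmankdef}) in standard form by means of the reversal identity (\ref{pochhammeridentity1}), and then recognize the resulting series as a big $q$-Jacobi polynomial via the definition (\ref{Jn}) and (\ref{hypergeometricphidef}). The key preliminary observation is that each entry of the Pochhammer symbols in (\ref{JacobiSigmankdef}) is of the form $q^{1-m}/a$: indeed
\[
q^{-m+n+1}=\frac{q^{1-m}}{q^{-n}},\quad \frac{q^{-m-n}}{\alpha\beta}=\frac{q^{1-m}}{\alpha\beta q^{n+1}},\quad \frac{q^{-m}}{\alpha}=\frac{q^{1-m}}{\alpha q},\quad \frac{q^{-m}}{\gamma}=\frac{q^{1-m}}{\gamma q},\quad q^{-m}=\frac{q^{1-m}}{q},
\]
while $q^{1-m}/x$ is already in this form.

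Applying (\ref{pochhammeridentity1}) to all six symbols, the Gaussian factors $q^{m(m-1)/2}$ (three in the numerator, three in the denominator) and the signs $(-1)^m$ cancel, and a short bookkeeping of the surviving powers gives
\[
\frac{\big(q^{-m+n+1},\tfrac{q^{-m-n}}{\alpha\beta},\tfrac{q^{1-m}}{x};q\big)_m}{\big(\tfrac{q^{-m}}{\alpha},\tfrac{q^{-m}}{\gamma},q^{-m};q\big)_m}
= \frac{(q^{-n};q)_m(\alpha\beta q^{n+1};q)_m(x;q)_m}{(\alpha q;q)_m(\gamma q;q)_m(q;q)_m}\Big(\frac{\gamma q^2}{\beta x}\Big)^m .
\]
Substituting this into (\ref{JacobiSigmankdef}) and combining with the explicit prefactor $q^{-m}(x\beta\gamma^{-1})^{m\delta_{k,2}}$, the $k=2$ case produces the clean base $q^m$ (the factors $x,\beta,\gamma$ cancel and the powers of $q$ add up to $q^m$), so that $\Sigma_{n,2}={}_3\phi_2\big(q^{-n},\alpha\beta q^{n+1},x;\alpha q,\gamma q;q;q\big)$, which is exactly $J_n(x;\alpha,\beta,\gamma;q)$. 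In the $k=1$ case the prefactor is merely $q^{-m}$, and one is left with base $\gamma q/(\beta x)$, i.e.
\[
\Sigma_{n,1}=\sum_{m=0}^n \frac{(q^{-n};q)_m(\alpha\beta q^{n+1};q)_m(x;q)_m}{(\alpha q;q)_m(\gamma q;q)_m(q;q)_m}\Big(\frac{\gamma q}{\beta x}\Big)^m .
\]

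It then remains to identify $\Sigma_{n,1}$ with $J_n(x^{-1};\alpha^{-1},\beta^{-1},\gamma^{-1};q^{-1})$. To this end I would expand the latter from the definition (\ref{Jn}) and convert each $q^{-1}$-Pochhammer symbol back to a $q$-Pochhammer symbol using (\ref{pochhammeridentity2}); once again the Gaussian factors and signs cancel, and the residual powers of $\alpha,\beta,\gamma,x,q$ reproduce precisely the base $\gamma q/(\beta x)$ found for $\Sigma_{n,1}$. This settles the $k=1$ case and completes the proof.

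I expect no genuine obstacle here: unlike in the downstream proof of Theorem \ref{thhahn}, no nontrivial $q$-hypergeometric transformation (of Sears type, such as (\ref{NIST1798})) is needed, because under the stated parameter inversions the two sums match $J_n$ directly. The only point demanding care is the bookkeeping of the accumulated exponents of $q,\alpha,\beta,\gamma,x$ across the two reversal identities — in particular verifying that the power collected in the second step is exactly $(\gamma q^2/(\beta x))^m$ and that the two prefactors then collapse, respectively, to $\gamma q/(\beta x)$ for $k=1$ and to $q$ for $k=2$.
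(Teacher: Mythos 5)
Your proposal is correct and follows essentially the same route as the paper's proof: both apply the reversal identity (\ref{pochhammeridentity1}) to reduce the summand to $\frac{(q^{-n},\,\alpha\beta q^{n+1},\,x;q)_m}{(\alpha q,\,\gamma q,\,q;q)_m}\big(\tfrac{\gamma q^2}{\beta x}\big)^m$ times the prefactor, identify $\Sigma_{n,2}$ directly with $J_n(x;\alpha,\beta,\gamma;q)$, and handle $k=1$ via the base-inversion identity (\ref{pochhammeridentity2}). The only (immaterial) difference is the direction in which you apply (\ref{pochhammeridentity2}) for $k=1$ — you expand $J_n(x^{-1};\alpha^{-1},\beta^{-1},\gamma^{-1};q^{-1})$ into base $q$, whereas the paper converts $\Sigma_{n,1}$ into base $q^{-1}$ — and indeed, as you anticipate, no Sears-type transformation such as (\ref{NIST1798}) is needed here.
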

\begin{proof}
Let $n \geq 1$ be an integer.
Using the identity (\ref{pochhammeridentity1}) with, in turn, $a = q^{-n}$, $a = \alpha \beta  q^{n+1}$, and $a = x$, we find
\begin{align*}
& \Big(q^{-m+n+1}, \frac{q^{-m-n}}{\alpha  \beta}, \frac{q^{1-m}}{x};q\Big)_m
= \frac{\left(q^{-n};q\right)_m\left(\alpha \beta  q^{n+1};q\right)_m\left(x;q\right)_m}
{(-q^{-n})^m(-\alpha \beta  q^{n+1})^m(-x)^m q^{\frac{3m(m-1)}{2}}}.
\end{align*}
Similarly, applying (\ref{pochhammeridentity1}) with $a = \alpha q$, $a = \gamma q$, and $a = q$, we find
$$\frac{1}{\left(\frac{q^{-m}}{\alpha }, \frac{q^{-m}}{\gamma }, q^{-m};q\right){}_m}
= \frac{(-\alpha q)^m(-\gamma q)^m(- q)^m q^{\frac{3m(m-1)}{2}}}{\left(\alpha q;q\right)_{m}\left(\gamma q;q\right)_{m}\left(q;q\right)_{m}}.$$
Thus
\begin{align*}
\frac{\big(q^{-m+n+1}, \frac{q^{-m-n}}{\alpha  \beta}, \frac{q^{1-m}}{x};q\big){}_m}{\big(\frac{q^{-m}}{\alpha }, \frac{q^{-m}}{\gamma }, q^{-m};q\big){}_m}
= \frac{\gamma^m q^{2m}}{\beta^m x^m}
\frac{\left(q^{-n}, \alpha \beta q^{n+1}, x;q\right)_m}{\left(\alpha q, \gamma q, q;q\right)_{m}}.
\end{align*}
It follows that
\begin{align*}
\Sigma_{n,2}
 =  \sum_{m=0}^n 
\frac{\left(q^{-n}, \alpha \beta q^{n+1}, x;q\right)_m}{\left(\alpha q, \gamma q, q;q\right)_{m}} q^{m} = {}_3\phi_2\left( \left. \begin{matrix} q^{-n}, \alpha \beta q^{n+1}, x \\ \alpha q, \gamma q \end{matrix}  \right| q ; q \right) = J_n(x;\alpha,\beta,\gamma;q),	
\end{align*}
which proves (\ref{JacobiSigmankcomputed}) for $k = 2$.
Similarly, it follows that
\begin{align*}
\Sigma_{n,1} = \sum_{m=0}^n 
\frac{\left(q^{-n}, \alpha \beta q^{n+1}, x;q\right)_m}{\left(\alpha q, \gamma q, q;q\right)_{m}} \frac{\gamma^m q^{m}}{\beta^m x^m}.
\end{align*}
Applying the identity (\ref{pochhammeridentity2}) six times, this can be rewritten as
\begin{align*}
\Sigma_{n,1} &= \sum_{m=0}^\infty \frac{(q^{n}, (\alpha \beta q^{n+1})^{-1}, x^{-1};q^{-1})_m}{( (\alpha q)^{-1}, (\gamma q)^{-1},q^{-1}; q^{-1})_m}q^{-m}
= {}_3\phi_2\left( \left. \begin{matrix} q^{n}, (\alpha \beta q^{n+1})^{-1}, x^{-1} \\ (\alpha q)^{-1} , (\gamma q)^{-1}  \end{matrix} \right|q^{-1} ;q^{-1} \right)
	\\
& = J_n(x^{-1};\alpha^{-1},\beta^{-1},\gamma^{-1};q^{-1})
\end{align*}
which proves (\ref{JacobiSigmankcomputed}) also for $k = 1$.
\end{proof}

The following theorem is the main result of this section.

\begin{theorem}[confluent Virasoro fusion kernels $\to$ big $q$-Jacobi polynomials]\label{thjacobi}
Suppose that Assumptions \ref{assumption} and \ref{assumptionhahnjacobi} are satisfied. 
Let $n \geq 0$ be an integer.
For each integer $k \geq 1$, the renormalized confluent fusion kernel $\hat{\mathcal{C}}^\ren_k$ defined in \eqref{renormcnhat} reduces to the big $q$-Jacobi polynomial $J_n$ defined in \eqref{Jn} in the limit $\sigma_s \to \sigma_s^{(n)}$ as follows:
\beq\label{degenjacobi}
\lim\limits_{\sigma_s \to \sigma_s^{(n)}} \hat{\mathcal{C}}_k^\ren(b,\boldsymbol{\theta},\nu,\sigma_s) = \begin{cases} 
J_n(x^{-1};\alpha^{-1},\beta^{-1},\gamma^{-1};q^{-1}), & \text{$k$ odd}, 	\\
J_n(x;\alpha,\beta,\gamma;q), & \text{$k$ even}, 
\end{cases}
\eeq
where
\beq\label{paramjacobi}
\alpha = e^{4\pi b \theta_0}, \quad \beta = e^{-4\pi b \theta_t}, \quad \gamma = e^{2\pi b(\theta_0+\theta_*-\theta_t)}, \quad x=-e^{\pi b (-ib+2\theta_0+\theta_*)}e^{-2\pi b \nu}, \quad q=e^{-2i\pi b^2}.
\eeq
\end{theorem}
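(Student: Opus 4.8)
The plan is to follow the direct residue computation used to prove Theorem \ref{thhahn}, but now tracking the poles of the integrand that move as $\sigma_s \to \sigma_s^{(n)}$ rather than those moving as $\nu \to \nu_n$. Starting from the integral representation \eqref{hatCkrendef} for $\hat{\mathcal{C}}^\ren_k$, I would single out the $\sigma_s$-dependent factor $s_b\lb x + \tfrac{iQ}{2} + \nu - \tfrac{\theta_*}{2} - \theta_t + \sigma_s\rb^{-1}$ in $I^{(k)}$. For real $\sigma_s$ this factor contributes an increasing sequence of poles of $I^{(k)}$ at $x = -\nu + \tfrac{\theta_*}{2} + \theta_t - \sigma_s + imb + ilb^{-1}$; since $\sigma_s^{(n)} = \tfrac{iQ}{2} + \theta_t - \theta_0 + inb$ has positive imaginary part, these poles descend as $\sigma_s \to \sigma_s^{(n)}$, cross the contour $\mathsf{C}$ from above, and (for $l = 0$) collide with the fixed numerator pole coming from $s_b\lb x - \theta_0 + \nu - \tfrac{\theta_*}{2}\rb$. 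Before taking the limit I would deform $\mathsf{C}$ into a contour $\mathsf{C}'$ lying below these descending poles, so that the integral splits into a finite sum of $-2i\pi$ times residues plus a remaining integral over $\mathsf{C}'$, exactly as in \eqref{5p18}.

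The second step is to identify which residues survive the limit. The prefactor $\hat{P}^{(k)}$ in \eqref{5p5} contains the factor $s_b(\theta_0 - \theta_t + \sigma_s)$, which has a \emph{simple zero} at $\sigma_s = \sigma_s^{(n)}$; this zero annihilates the contribution of the deformed integral over $\mathsf{C}'$ in the limit, just as the zero of $\mathcal{P}_k$ did in the proof of Theorem \ref{thhahn}. Computing the residue of $I^{(k)}$ at each crossing pole (of index $(m,l)$) with the help of Lemma \ref{sbdiff} and the value \eqref{ressb} of $\underset{z=-iQ/2}{\text{Res}}\, s_b$, I expect each residue, viewed as a function of $\sigma_s$, to carry the numerator factor $s_b\lb x - \theta_0 + \nu - \tfrac{\theta_*}{2}\rb$ whose pole in $\sigma_s$ occurs at $\sigma_s^{(n)}$ precisely when $l = 0$ and $0 \le m \le n$. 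Combined with the simple zero of $\hat{P}^{(k)}$, only these finitely many residues produce a nonzero limit, so that $\lim_{\sigma_s \to \sigma_s^{(n)}} \hat{\mathcal{C}}^\ren_k$ reduces to a finite sum indexed by $m \in \{0, \dots, n\}$, in complete analogy with \eqref{CkrenResIk}.

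The third step is purely $q$-combinatorial. Using the shift identities of Lemma \ref{sbdiff} to convert every ratio of $s_b$-values into $q$-Pochhammer symbols, and inserting the parameter correspondence \eqref{paramjacobi} together with $x = -e^{\pi b(-ib + 2\theta_0 + \theta_*)} e^{-2\pi b \nu}$, I would rewrite the finite sum as a scalar prefactor times the sum $\Sigma_{n,k}$ of \eqref{JacobiSigmankdef}; the $k$-dependent weight $\lb x \beta \gamma^{-1}\rb^{m\delta_{k,2}}$ and the overall powers of $q$ should be reproduced by the $\lfloor k/2 \rfloor$- and $(-1)^k$-dependent phases in $N_3$, $I^{(k)}$, and the residues. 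Lemma \ref{JacobiSigmanklemma} then identifies $\Sigma_{n,k}$ with $J_n(x;\alpha,\beta,\gamma;q)$ when $k$ is even and with $J_n(x^{-1};\alpha^{-1},\beta^{-1},\gamma^{-1};q^{-1})$ when $k$ is odd, and a final verification that the scalar prefactor equals $1$ yields \eqref{degenjacobi}.

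The main obstacle is the bookkeeping in the second and third steps rather than any conceptual difficulty: one must check precisely which residues have a pole in $\sigma_s$ cancelling the prefactor zero, and then carry out the long but elementary reduction of the surviving ratios of $s_b$-functions to the exact combination defining $\Sigma_{n,k}$, keeping careful track of the parity-dependent phases $e^{2i\pi(\lfloor k/2\rfloor - \frac12)}$ and $(-1)^k$ so that the odd and even cases land on $q^{-1}$ and $q$ respectively. An alternative that sidesteps most of this algebra, parallel to Section \ref{rk6p6}, is to establish \eqref{degenjacobi} for $n = 0$ by the (relatively simple) residue computation above and then argue by induction on $n$: the second pair of difference equations \eqref{eigenvaluetildecn} for $\mathcal{C}_k$, conjugated by the normalization $N_3$, produces a difference equation for $\hat{\mathcal{C}}^\ren_k$ acting on $\sigma_s$ through $e^{\pm ib\partial_{\sigma_s}}$, and since the shifts $\sigma_s \to \sigma_s \pm ib$ correspond to $n \to n \mp 1$ under \eqref{sigmasdiscrete}, this equation degenerates in the limit $\sigma_s \to \sigma_s^{(n)}$ to the three-term recurrence relation of the big $q$-Jacobi polynomials; then $\lim \hat{\mathcal{C}}^\ren_k$ and $J_n$ satisfy the same recurrence with the same initial datum, which forces their equality for all $n \geq 0$.
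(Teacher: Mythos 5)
Your proposal follows essentially the same route as the paper's proof of Theorem \ref{thjacobi}: deform $\mathsf{C}$ into $\mathsf{C}'$ below the descending poles before taking $\sigma_s \to \sigma_s^{(n)}$, pick up residues at the fixed poles $x_{m,l}$ of the numerator factor $s_b\lb x-\theta_0+\nu-\tfrac{\theta_*}{2}\rb$, use the simple zero of $\hat{P}^{(k)}$ coming from $s_b(\theta_0-\theta_t+\sigma_s)$ to annihilate the $\mathsf{C}'$ integral and all residues except those with $l=0$ and $0\le m\le n$, and then reduce to $\Sigma_{n,k}$ via Lemma \ref{sbdiff} and conclude with Lemma \ref{JacobiSigmanklemma}; your fallback inductive argument is precisely the alternative proof the paper sketches in a closing remark. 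Two small corrections to your bookkeeping. First, the simple pole in $\sigma_s$ of the residue at $x_{m,l}$ does \emph{not} come from the numerator factor $s_b\lb x-\theta_0+\nu-\tfrac{\theta_*}{2}\rb$ — that factor is independent of $\sigma_s$, and it is the factor at whose pole the residue is taken — but from the denominator factor evaluated at the pole, namely $s_b\lb -ibm-\tfrac{il}{b}+\theta_0-\theta_t+\sigma_s\rb^{-1}$ in \eqref{5p11}, which meets a zero of $s_b$ at $\sigma_s=\sigma_s^{(n)}$ exactly when $l=0$ and $0\le m\le n$. Second, since $\sigma_s^{(n)}=\tfrac{iQ}{2}+\theta_t-\theta_0+inb$ by \eqref{sigmasdiscrete}, the shift $\sigma_s\to\sigma_s+ib$ corresponds to $n\to n+1$, so $e^{\pm ib\partial_{\sigma_s}}$ acts as $T_{n\pm1}$ — the opposite of what you wrote; you appear to have carried over the convention from the $\nu$-discretization \eqref{nun}, where the sign of $inb$ is reversed. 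Neither slip affects the structure of the argument, and your expectation about the final scalar is borne out in a stronger form than you anticipate: unlike the proof of Theorem \ref{thhahn}, the surviving sum here equals $\Sigma_{n,k}$ with no extra prefactor, so no ${}_3\phi_2$ transformation identity is needed at the end.
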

\begin{proof}
Since $\hat{\mathcal{C}}_k^\ren=\hat{\mathcal{C}}_{k+2}^\ren$, it is enough to prove the result for $k=1$ and $k=2$. Thus let $k\in \{1,2\}$. Let $m,l\geq 0$ be integers and define $x_{m,l} \in \mathbb{C}$ by
\beq
x_{m,l} = -\tfrac{iQ}2+\theta_0+\tfrac{\theta_*}2-\nu-imb-\tfrac{il}b. \eeq
The integrand $I^{(k)}$ defined in \eqref{I} contains the factor \beq
\frac{s_b\big( x-\theta_0+\nu-\tfrac{\theta_*}2 \big)}{s_b\big( x+\tfrac{iQ}2+\nu-\tfrac{\theta_*}2-\theta_t+\sigma_s \big)}.
\eeq
The function $s_b\big( x-\theta_0-\tfrac{\theta_*}2+\nu \big)$ has a simple pole located at $x_{m,l}$ for any integers $m,l \geq 0$. In the limit $\sigma_s \to \sigma_s^{(n)}$, the pole of $s_b\big( x+\tfrac{iQ}2-\tfrac{\theta_*}2-\theta_t+\nu+\sigma_s \big)^{-1}$ located at $x=\tfrac{\theta_*}2+\theta_t-\nu-\sigma_s$ moves downwards, crosses the contour $\mathsf{C}$, and collides with the pole of $s_b\big( x-\theta_0-\tfrac{\theta_*}2+\nu \big)$ located at $x_{n,0}$. Therefore, before taking the limit $\sigma_s \to \sigma_s^{(n)}$, we deform the contour $\mathsf{C}$ into a contour $\mathsf{C}'$ which passes just below $x_{n,0}$. As $\mathsf{C}$ is deformed into $\mathsf{C}'$, the integral in (\ref{hatCkrendef}) picks up residue contributions from all the poles $x_{m,l}$ which satisfy $\im x_{m,l} \geq \im x_{n,0}$, i.e., from all the poles $x_{m,l}$ such that $(m,l)$ satisfies $mb+\tfrac{l}{b} \leq nb$. The integral in (\ref{hatCkrendef}) becomes
\beq
\int_{\mathsf{C}} dx ~ I^{(k)}\lb x,\boldsymbol{\theta},\nu, \sigma_s\rb =  -2i\pi \sum_{\substack{m,l \geq 0 \\ mb+\tfrac{l}{b} \leq nb}} \underset{x=x_{m,l}}{\text{Res}}\lb I^{(k)}\lb x,\boldsymbol{\theta},\nu, \sigma_s\rb\rb + \int_{\mathsf{C}'} I^{(k)}\lb x,\boldsymbol{\theta},\nu, \sigma_s\rb.
\eeq
Utilizing Lemma \ref{sbdiff}, it can be verified that the residues of $I^{(k)}$ at the simple poles $x_{m,l}$ are given by
%\beq\begin{split}
%\text{Res}_m = & e^{-\frac{i \pi  b m(b m-Q)}2} e^{i \pi  (-1)^{k+1} \left(\frac{\theta_*}{2}+\theta_t+\nu +\frac{i Q}{2}\right) \left(-i b m+\frac{\theta_*}{2}+\theta_t-\nu -\sigma_s\right)} \left(e^{2 i \pi  b^2 (m-1)};e^{-2 i \pi  b^2}\right){}_m \\
%& \times \frac{s_b(-i b m+\theta_*-\sigma_s) s_b(-i b m-\theta_0+\theta_t-\sigma_s) s_b(-i b m+\theta_0+\theta_t-\sigma_s)}{s_b\left(-i b m+\frac{i Q}{2}-2 \sigma_s\right) s_b\left(-i b m+\frac{\theta_*}{2}+\theta_t-\nu +\frac{i Q}{2}-\sigma_s\right)}.
%\end{split}\eeq
\beq\label{5p11}\begin{split}
 -2&i\pi \underset{x=x_{m,l}}{\text{Res}}\lb I^{(k)}\lb x,\boldsymbol{\theta},\nu, \sigma_s\rb\rb \\
 =&\; e^{i \pi  \left(\frac{l}{b}+b m\right) \left(\frac{l}{2 b}+\frac{b m}{2}+\frac{Q}{2}\right)} e^{i \pi  (-1)^{k+1} \left(\frac{\theta_*}{2}+\theta_t+\nu +\frac{i Q}{2}\right) \left(-\frac{i l}{b}-i b m+\theta_0+\frac{\theta_*}{2}-\nu -\frac{i Q}{2}\right)} \frac{1}{\left(e^{\frac{2 i \pi  l}{b^2}};e^{-\frac{2 i \pi }{b^2}}\right){}_l \left(e^{2 i \pi  m b^2};e^{-2 i b^2 \pi }\right){}_m} \\
&\times \frac{s_b\left(-\frac{i l}{b}-i b m+2 \theta_0-\frac{i Q}{2}\right) s_b\left(-\frac{i l}{b}-i b m+\theta_0+\theta_*-\theta_t-\frac{i Q}{2}\right)}{s_b\left(-\frac{i l}{b}-i b m+\theta_0+\frac{\theta_*}{2}-\nu \right) s_b\left(-\frac{i l}{b}-i b m+\theta_0-\theta_t-\sigma_s\right) s_b\left(-\frac{i l}{b}-i b m+\theta_0-\theta_t+\sigma_s\right)}. \end{split}
\eeq
Because of the factor $s_b(-\tfrac{il}b-ibm+\theta_0-\theta_t+\sigma_s)^{-1}$, we deduce from the properties \eqref{polesb} of $s_b$ that the function $\text{Res}_{x=x_{m,l}}\lb I^{(k)}\lb x,\boldsymbol{\theta},\nu, \sigma_s\rb\rb$ has a simple pole at $\sigma_s=\sigma_s^{(n)}$ if the pair $(m,l)$ satisfies $m\in [0,n]$ and $l=0$, but is regular at $\sigma_s=\sigma_s^{(n)}$ for all other choices of $m\geq 0$ and $l\geq 0$. On the other hand, because of the factor $s_b(\theta_0-\theta_t+\sigma_s)$ appearing in \eqref{5p5}, $\hat{P}^{(k)}\lb\boldsymbol{\theta},\nu, \sigma_s\rb$ has a simple zero at $\sigma_s=\sigma_s^{(n)}$. Hence the product $\hat{P}^{(k)}\lb\boldsymbol{\theta},\nu, \sigma_s\rb \underset{x=x_{m,l}}{\text{Res}}\lb I^{(k)}\lb x,\boldsymbol{\theta},\nu, \sigma_s\rb\rb$ is nonzero in the limit $\sigma_s\to \sigma_s^{(n)}$ only if $m\in [0,n]$ and $l=0$. We deduce that
\beq\label{7p15}
\lim\limits_{\sigma_s \to \sigma_s^{(n)}} \hat{\mathcal{C}}_k^\ren(b,\boldsymbol{\theta},\nu,\sigma_s) = \hat{\mathcal{C}}_k^\ren\lb b,\boldsymbol{\theta},\nu,\sigma_s^{(n)}\rb = -2i\pi \lim\limits_{\sigma_s \to \sigma_s^{(n)}} \hat{P}^{(k)}\lb\boldsymbol{\theta},\nu, \sigma_s\rb \sum_{m=0}^n \underset{x=x_{m,0}}{\text{Res}}\lb I^{(k)}\lb x,\boldsymbol{\theta},\nu, \sigma_s\rb\rb.
\eeq
Employing \eqref{5p5} and \eqref{5p11}, we obtain
\beq\label{7p16}\begin{split}
 \hat{\mathcal{C}}_k^\ren\lb b,\boldsymbol{\theta},\nu,\sigma_s^{(n)}\rb = & \sum_{m=0}^n e^{\pi  b m \left(\frac{i b m}{2}+(-1)^{k+1} \left(\frac{\theta_*}{2}+\theta_t+\nu \right)+iQ \delta_{k,1}\right)} \frac{1}{\left(e^{2 i b^2 m \pi };e^{-2 i b^2 \pi }\right){}_m} \frac{s_b\left(i b n+\frac{i Q}{2}\right)}{s_b\left(-i b m+i b n+\frac{i Q}{2}\right)}  
 	\\
& \times \frac{s_b\left(-i b m+2 \theta_0-\frac{i Q}{2}\right)}{s_b\left(2 \theta_0-\tfrac{i Q}{2}\right)} \frac{s_b\left(-i b n+2 \theta_0-2 \theta_t-\frac{i Q}{2}\right)}{s_b\left(-i b m-i b n+2 \theta_0-2 \theta_t-\frac{i Q}{2}\right)}
 	\\
& \times \frac{s_b\left(-i b m+\theta_0+\theta_*-\theta_t-\frac{i Q}{2}\right)}{s_b\left(\theta_0+\theta_*-\theta_t-\frac{i Q}{2}\right)} \frac{s_b\left(\theta_0+\frac{\theta_*}{2}-\nu \right)}{s_b\left(-i b m+\theta_0+\frac{\theta_*}{2}-\nu \right)}.
\end{split}\eeq
Lemma \ref{sbdiff} allows us to express \eqref{7p16} in terms of $q$-Pochammer symbols as follows:
\beq\label{5p11b} \begin{split}
& \hat{\mathcal{C}}_k^\ren\lb b,\boldsymbol{\theta},\nu,\sigma_s^{(n)}\rb = \sum_{m=0}^n e^{-2\pi bm\left( \delta_{k,2} \left(\frac{\theta_*}{2}+\theta_t+\nu +\frac{i Q}{2}\right)-i Q\right)} \\
& \times \frac{\left(e^{2 i b^2 (m-n-1) \pi };e^{-2 i b^2 \pi }\right){}_m \left(e^{2 b \pi  (b i (m+n)-2 \theta_0+2 \theta_t)};e^{-2 i b^2 \pi }\right){}_m \left(-e^{b \pi  (b i (2 m-1)-2 \theta_0-\theta_*+2 \nu )};e^{-2 i b^2 \pi }\right){}_m}{\left(e^{2 i b^2 m \pi };e^{-2 i b^2 \pi }\right){}_m \left(e^{2 i b \pi  (b m+2 i \theta_0)};e^{-2 i b^2 \pi }\right){}_m \left(e^{2 b \pi  (b i m-\theta_0-\theta_*+\theta_t)};e^{-2 i b^2 \pi }\right){}_m}.
\end{split} \eeq
Recalling the parameter correspondence \eqref{paramjacobi}, we arrive at
\beq \begin{split}
\hat{\mathcal{C}}_k^\ren\big( b,\boldsymbol{\theta},\nu,\sigma_s^{(n)}\big) = \Sigma_{n,k},
\end{split} \eeq
where $\Sigma_{n,k}$ is the sum defined in (\ref{JacobiSigmankdef}). The theorem now follows from Lemma \ref{JacobiSigmanklemma}.
\end{proof}

\subsection{Limits of the difference equations for $\mathcal{C}_k$ as $\sigma_s \to \sigma_s^{(n)}$}
In this subsection, we explain how the recurrence and the difference equation for the big $q$-Jacobi polynomials emerge from the difference equations for $\mathcal{C}_k$ in the limit $\sigma_s \to \sigma_s^{(n)}$. 

We first formulate the difference equations in terms of $\hat{\mathcal{C}}_k^\text{ren}$. Using the relation \eqref{renormcnhat} between $\hat{\mathcal{C}}_k^\text{ren}$ and $\mathcal{C}_k$, the four difference equations for $\mathcal{C}_k$ derived in Theorem \ref{thm6p2} and Theorem \ref{thm6p4} can be expressed as difference equations for  $\hat{\mathcal{C}}^\ren_k$. Define the renormalized difference operators 
\begin{subequations} \label{DhatCkrendef} \begin{align}
\label{DhatCkrendef1} & H_{\hat{\mathcal{C}}_k^\ren}(b,\nu) = N_3(\nu,\sigma_s)  H_{\mathcal{C}_k}(b,\nu)  N_3(\nu,\sigma_s)^{-1}, \\
\label{DhatCkrendef2} & \tilde{H}_{\hat{\mathcal{C}}_k^\ren}(b,\sigma_s) = N_3(\nu,\sigma_s)  \tilde{H}_{\mathcal{C}_k}(b,\sigma_s)  N_3(\nu,\sigma_s)^{-1},
\end{align} \end{subequations} 
where the difference operators $H_{\mathcal{C}_k}$ and $\tilde{H}_{\mathcal{C}_k}$ are respectively defined by \eqref{Hcn} and \eqref{Hcntilde}, and the normalization factor $N_3$ is given by \eqref{N3}. Using the identity \eqref{gbsbdifferenceeqs} satisfied by the function $g_b$, it can be verified that $N_3$ satisfies the following difference equations:
 \beq\label{7p20}
\frac{N_3(\nu,\sigma_s)}{N_3(\nu+ib,\sigma_s)} = S(\theta_*,\nu), \qquad \frac{N_3(\nu,\sigma_s)}{N_3(\nu-ib,\sigma_s)} = e^{(-1)^{k+1} 2 \pi  b (-i b+\theta_0-\theta_t)} S(-\theta_*,-\nu), 
\eeq
where
\beq\label{S}
S(\theta_*,\nu) = -e^{4 \pi  b (2 \nu +i b) \left(\left\lfloor \frac{k}{2}\right\rfloor -\frac{1}{2}\right)} e^{\pi  b (-1)^k (-2 i b+\theta_0-\theta_t-2 \nu )} \tfrac{\Gamma \left(\frac{b Q}{2}-i b \left(\theta_0+\nu -\frac{\theta_*}{2}\right)\right) \Gamma \left(\frac{b Q}{2}-i b \left(\frac{\theta_*}{2}-\theta_t+\nu \right)\right)}{\Gamma \left(\frac{1-b^2}{2} -i b \left(\theta_0+\frac{\theta_*}{2}-\nu \right)\right) \Gamma \left(\frac{1-b^2}{2} + ib \left(\frac{\theta_*}{2}+\theta_t+\nu \right)\right)}.
\eeq
Using \eqref{7p20}, we find that the difference operator in the left-hand side of \eqref{DhatCkrendef1} is given by 
\beq
H_{\hat{\mathcal{C}}_k^\ren}(b,\nu) = H_{\hat{\mathcal{C}}_k^\ren}^+(b,\nu) e^{ib \partial_\nu} +  H_{\mathcal{C}_k}^{0}(\nu) + H_{\hat{\mathcal{C}}_k^\ren}^-(b,\nu) e^{-ib \partial_\nu},
\eeq
where $H_{\mathcal{C}_k}^{0}(\nu)$ is defined by \eqref{Hn0} and
\beq\label{7p23}
H_{\hat{\mathcal{C}}_k^\ren}^\pm (b,\nu) = -4 e^{\pm \pi  b (-1)^k (\theta_0-\theta_t\pm 2 \nu )}
 \cosh \left(\pi  b \left(\tfrac{i b}{2}-\theta_0\mp\tfrac{\theta_*}{2}\pm\nu \right)\right) \cosh \left(\pi  b \left(\tfrac{i b}{2}+\theta_t\pm\tfrac{\theta_*}{2}\pm\nu \right)\right). \eeq
Moreover, it can be showed in a similar way that $N_3$ satisfies the difference equations
\beq\label{7p24}
\frac{N_3(\nu,\sigma_s)}{N_3(\nu,\sigma_s+ib)} = R(\sigma_s), \qquad \frac{N_3(\nu,\sigma_s)}{N_3(\nu,\sigma_s-ib)} = R(-\sigma_s),
\eeq
where
\beq\label{R}\begin{split}
R(\sigma_s)=& e^{4 \pi  b \left(\sigma_s+\frac{i b}{2}\right) \left(\left\lfloor \frac{k}{2}\right\rfloor -\frac{1}{2}\right)} \tfrac{\Gamma \left(b^2-2 i b \sigma_s\right) \Gamma (-2 i b \sigma_s)}{\Gamma \left(2 i b \sigma_s-b^2\right) \Gamma \left(2 i b \sigma_s-2 b^2\right)} \tfrac{\Gamma \left(\frac{1-b^2}{2} +ib (\theta_*+\sigma_s)\right)}{\Gamma \left(\frac{b Q}{2}+ib (\theta_*-\sigma_s)\right)} \prod _{\epsilon=\pm} \tfrac{\Gamma \left(\frac{1-b^2}{2} +ib \left(\theta_0+\epsilon \theta_t+\sigma_s\right)\right)}{\Gamma \left(\frac{b Q}{2}+ib \left(\theta_0+\epsilon \theta_t-\sigma_s\right)\right)}.
\end{split}\eeq
Using \eqref{7p24}, we find that the difference operator in the left-hand side of \eqref{DhatCkrendef2} takes the form
 \beq
 \tilde{H}_{\hat{\mathcal{C}}_k^\ren}(b,\sigma_s) = \tilde{H}_{\hat{\mathcal{C}}_k^\ren}^+(b,\sigma_s) e^{ib \partial_{\sigma_s}} + \tilde{H}^0_{\mathcal{C}_k}(\sigma_s) +  \tilde{H}_{\hat{\mathcal{C}}_k^\ren}^+(b,-\sigma_s) e^{-ib \partial_{\sigma_s}}, 
 \eeq
 where $ \tilde{H}^0_{\mathcal{C}_k}$ is defined by \eqref{Htilden0} and
 \beq
\tilde{H}_{\hat{\mathcal{C}}_k^\ren}^+(b,\sigma_s) = -2 e^{\pi  b (-1)^k \left(\sigma_s+\frac{i b}{2}\right)} \frac{\cosh \left(\pi  b \left(\tfrac{i b}{2}-\theta_*+\sigma_s\right)\right)}{\operatorname{sinh}{(2\pi b \sigma_s)}\operatorname{sinh}{(\pi b(ib+2\sigma_s)}} \prod _{\epsilon=\pm 1} \cosh \left(\pi  b \left(\tfrac{i b}{2}-\theta_0+\sigma_s+\epsilon\theta_t\right)\right).
 \eeq

 \begin{lemma}[Difference equations for $\hat{\mathcal{C}}_k^\ren$]
Let  $H_{\hat{\mathcal{C}}_k^\ren}$ and $\tilde{H}_{\hat{\mathcal{C}}_k^\ren}$ be the difference operators defined in \eqref{DhatCkrendef}. For each integer $k \geq 1$, the renormalized confluent fusion kernel $\hat{\mathcal{C}}_k^\ren$ satisfies the pair of difference equations 
\begin{subequations} \label{differenceCkrenhat} 
\begin{align}\label{differenceCkrenhata}
 & H_{\hat{\mathcal{C}}_k^\ren}(b,\nu)~\hat{\mathcal{C}}_k^\ren(b,\boldsymbol{\theta},\nu,\sigma_s) = 2\cosh{(2\pi b \sigma_s)}~\hat{\mathcal{C}}_k^\ren(b,\boldsymbol{\theta},\nu,\sigma_s), 
 	\\ \label{differenceCkrenhatb}
 & \tilde{H}_{\hat{\mathcal{C}}_k^\ren}(b,\sigma_s)~\hat{\mathcal{C}}_k^\ren(b,\boldsymbol{\theta},\nu,\sigma_s) = e^{(-1)^{k+1}2\pi b \nu}~\hat{\mathcal{C}}_k^\ren(b,\boldsymbol{\theta},\nu,\sigma_s),
\end{align} \end{subequations}
as well as the pair of difference equations obtained by replacing $b\to b^{-1}$ in \eqref{differenceCkrenhat}. 
\end{lemma}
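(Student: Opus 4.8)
The plan is to obtain all four difference equations for $\hat{\mathcal{C}}_k^\ren$ as immediate consequences of the four difference equations for $\mathcal{C}_k$ proved in Theorems \ref{thm6p2} and \ref{thm6p4}, using only the fact that both $\hat{\mathcal{C}}_k^\ren$ and the operators $H_{\hat{\mathcal{C}}_k^\ren}, \tilde{H}_{\hat{\mathcal{C}}_k^\ren}$ are built from $\mathcal{C}_k$, $H_{\mathcal{C}_k}$, and $\tilde{H}_{\mathcal{C}_k}$ by conjugation with the single nonvanishing factor $N_3$.

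The elementary identity at the heart of the argument is that, for any difference operator $O$ in the variables $(\nu,\sigma_s)$ and any nonvanishing function $N_3(\nu,\sigma_s)$, the conjugate operator $N_3\,O\,N_3^{-1}$ satisfies $(N_3\,O\,N_3^{-1})(N_3 f) = N_3\,(Of)$. Since the remaining normalization factor $N_4(\boldsymbol{\theta})$ in \eqref{renormcnhat} is independent of $\nu$ and $\sigma_s$, it commutes with every such operator. First I would combine these two observations with the definitions \eqref{renormcnhat} and \eqref{DhatCkrendef1}, together with the eigenvalue equation \eqref{eigenvaluecn1}, to write
\[
H_{\hat{\mathcal{C}}_k^\ren}(b,\nu)\,\hat{\mathcal{C}}_k^\ren(b,\boldsymbol{\theta},\nu,\sigma_s) = N_3 N_4\, H_{\mathcal{C}_k}(b,\nu)\,\mathcal{C}_k(b,\boldsymbol{\theta},\nu,\sigma_s) = 2\cosh(2\pi b\sigma_s)\,\hat{\mathcal{C}}_k^\ren(b,\boldsymbol{\theta},\nu,\sigma_s),
\]
which is \eqref{differenceCkrenhata}. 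The same manipulation applied to \eqref{DhatCkrendef2} and the first equation of \eqref{eigenvaluetildecn} yields \eqref{differenceCkrenhatb}, the eigenvalue $e^{(-1)^{k+1}2\pi b\nu}$ passing through unchanged because it does not involve $N_3$.

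It remains to establish the two equations obtained by replacing $b$ by $b^{-1}$. The cleanest route is to use the invariance $\hat{\mathcal{C}}_k^\ren(b,\cdot) = \hat{\mathcal{C}}_k^\ren(b^{-1},\cdot)$ recorded after \eqref{5p5}: since \eqref{differenceCkrenhata} and \eqref{differenceCkrenhatb} are valid for every admissible $b>0$, I would simply substitute $b\to b^{-1}$ throughout and then use the invariance to replace $\hat{\mathcal{C}}_k^\ren(b^{-1},\cdot)$ by $\hat{\mathcal{C}}_k^\ren(b,\cdot)$ on both sides. This produces exactly the operators $H_{\hat{\mathcal{C}}_k^\ren}(b^{-1},\nu)$ and $\tilde{H}_{\hat{\mathcal{C}}_k^\ren}(b^{-1},\sigma_s)$ acting on $\hat{\mathcal{C}}_k^\ren$, with eigenvalues $2\cosh(2\pi b^{-1}\sigma_s)$ and $e^{(-1)^{k+1}2\pi b^{-1}\nu}$.

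The only step that is not entirely automatic, and which I would check explicitly, is this $b\to b^{-1}$ passage: the factor $N_3$ in \eqref{N3} is \emph{not} symmetric under $b\to b^{-1}$ because of its power-of-$b$ prefactor, so a direct conjugation argument would instead require starting from the $b\to b^{-1}$ eigenvalue equations \eqref{eigenvaluecn2} and the second line of \eqref{eigenvaluetildecn} and verifying that the conjugations by $b^{-4\nu^2}$ and $b^{-2\Delta(\sigma_s)}$ appearing there are absorbed into $H_{\hat{\mathcal{C}}_k^\ren}(b^{-1},\nu)$ and $\tilde{H}_{\hat{\mathcal{C}}_k^\ren}(b^{-1},\sigma_s)$. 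The invariance route sidesteps this bookkeeping entirely, so in the end no genuine difficulty arises and the lemma follows immediately from Theorems \ref{thm6p2} and \ref{thm6p4}.
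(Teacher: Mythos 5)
Your proposal is correct and follows essentially the same route as the paper, whose proof is the one-line observation that the lemma follows immediately from the conjugated definitions \eqref{DhatCkrendef} together with Theorems \ref{thm6p2} and \ref{thm6p4}. Your extra care with the $b\to b^{-1}$ pair is well placed and sound: the symmetry $\hat{\mathcal{C}}_k^\ren(b,\cdot)=\hat{\mathcal{C}}_k^\ren(b^{-1},\cdot)$ route works, and it agrees with the direct route because under $b\to b^{-1}$ the prefactor of $N_3$ produces exactly the factors $b^{-4\nu^2}$ and $b^{-2\Delta(\sigma_s)}$ appearing in \eqref{eigenvaluecn2} and \eqref{eigenvaluetildecn}, so the two interpretations of $H_{\hat{\mathcal{C}}_k^\ren}(b^{-1},\nu)$ and $\tilde{H}_{\hat{\mathcal{C}}_k^\ren}(b^{-1},\sigma_s)$ coincide, a point the paper leaves implicit.
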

\begin{proof}
The lemma follows immediately from \eqref{DhatCkrendef} together with Theorem \ref{thm6p2} and Theorem \ref{thm6p4}.
\end{proof}

The next proposition describes how the recurrence relation and the difference equation for the big q-Jacobi polynomials emerge from \eqref{differenceCkrenhat} in the limit $\sigma_s \to \sigma_s^{(n)}$. 

\begin{proposition}
Let $k \geq 1$  and $n \geq 0$ be integers. In the limit $\sigma_s \to \sigma_s^{(n)}$, the difference equations \eqref{differenceCkrenhata} and (\ref{differenceCkrenhatb}) for $\hat{\mathcal{C}}^\ren_k$ reduce to the three-term recurrence relation (\ref{recurrenceJn}) and the difference equation \eqref{differenceJn} for the big q-Jacobi polynomials, respectively.
More precisely,
\beq \label{7p29}
R_{J_n}(\alpha^{(-1)^k},\beta^{(-1)^k},\gamma^{(-1)^k};q^{(-1)^k})~\hat{\mathcal{C}}_k^\ren\big( b,\boldsymbol{\theta},\nu,\sigma_s^{(n)}\big) = x^{(-1)^k}~\hat{\mathcal{C}}_k^\ren\big( b,\boldsymbol{\theta},\nu,\sigma_s^{(n)}\big),
\eeq
and 
\begin{align}\nonumber\label{7p30}
& \Delta_{J_n}(\alpha^{(-1)^k},\beta^{(-1)^k},\gamma^{(-1)^k};q^{(-1)^k},x^{(-1)^k})~\hat{\mathcal{C}}_k^\ren\big( b,\boldsymbol{\theta},\nu,\sigma_s^{(n)}\big) \\
& = q^{(-1)^{k+1}n}\big(1-q^{(-1)^k n}\big)\big(1-\alpha^{(-1)^k} \beta^{(-1)^k} q^{(-1)^k n+1}\big) x^{2(-1)^k}  \hat{\mathcal{C}}_k^\ren\big( b,\boldsymbol{\theta},\nu,\sigma_s^{(n)}\big),
\end{align}
where the parameters are related according to \eqref{paramjacobi}, $R_{J_n}$ is the recurrence operator defined in \eqref{RJndef}, and $\Delta_{J_n}$ is the difference operator defined in \eqref{DeltaJndef}.
\end{proposition}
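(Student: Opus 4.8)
The plan is to follow the same strategy as in the proof of the analogous proposition for the continuous dual $q$-Hahn polynomials in Section~\ref{CktoHahnsec}: we evaluate the two difference equations of the preceding lemma along the discrete locus $\sigma_s=\sigma_s^{(n)}$ and read off the recurrence \eqref{7p29} and the $q$-difference equation \eqref{7p30}. Because the discretized variable is now $\sigma_s$ instead of $\nu$, the roles of the two equations are interchanged compared with the $q$-Hahn situation: the shifts in \eqref{differenceCkrenhatb} act on $\sigma_s$ and, by \eqref{sigmasdiscrete}, turn into shifts $n\mapsto n\pm1$ after discretization, so this equation produces the three-term recurrence \eqref{7p29}; the shifts in \eqref{differenceCkrenhata} act on $\nu$, equivalently on the polynomial argument $x$, so this equation produces the difference equation \eqref{7p30}. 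Since $\hat{\mathcal{C}}_k^\ren=\hat{\mathcal{C}}_{k+2}^\ren$ it suffices to treat $k\in\{1,2\}$, and since $\hat{\mathcal{C}}_k^\ren$ is invariant under $b\mapsto b^{-1}$ we only use the $b$-version of each equation.

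To obtain \eqref{7p29}, I would first note from \eqref{sigmasdiscrete} that $\sigma_s^{(n\pm1)}=\sigma_s^{(n)}\pm ib$, whence
\[
\lim_{\sigma_s\to\sigma_s^{(n)}} e^{\pm ib\partial_{\sigma_s}}\,\hat{\mathcal{C}}_k^\ren(b,\boldsymbol{\theta},\nu,\sigma_s)=T_{n\pm1}\,\hat{\mathcal{C}}_k^\ren(b,\boldsymbol{\theta},\nu,\sigma_s^{(n)}),
\]
where $T_{n\pm1}$ denotes evaluation at $\sigma_s^{(n\pm1)}$. I would then substitute $\sigma_s=\sigma_s^{(n)}$ into the coefficients $\tilde H^+_{\hat{\mathcal{C}}_k^\ren}(b,\sigma_s)$, $\tilde H^+_{\hat{\mathcal{C}}_k^\ren}(b,-\sigma_s)$ and $\tilde H^0_{\mathcal{C}_k}(\sigma_s)$, and verify under the correspondence \eqref{paramjacobi} that they reproduce the coefficients of the recurrence operator $R_{J_n}$ from \eqref{RJndef}. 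On the right-hand side the eigenvalue $e^{(-1)^{k+1}2\pi b\nu}$ equals $x^{(-1)^k}$ up to the fixed multiplicative constant appearing in the definition \eqref{paramjacobi} of $x$; normalizing $R_{J_n}$ to absorb this constant yields \eqref{7p29}, with the cases $k$ odd and $k$ even corresponding to the swap $(\alpha,\beta,\gamma,q,x)\mapsto(\alpha^{-1},\beta^{-1},\gamma^{-1},q^{-1},x^{-1})$ as in Theorem~\ref{thjacobi}.

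To obtain \eqref{7p30}, the starting observation is that with $q=e^{-2i\pi b^2}$ and $x=-e^{\pi b(-ib+2\theta_0+\theta_*)}e^{-2\pi b\nu}$ the shift $\nu\mapsto\nu\pm ib$ corresponds to $x\mapsto q^{\pm1}x$. Fixing $\sigma_s=\sigma_s^{(n)}$ in \eqref{differenceCkrenhata}, I would match the coefficients $H^\pm_{\hat{\mathcal{C}}_k^\ren}(b,\nu)$ of $e^{\pm ib\partial_\nu}$ with the coefficients of $\Delta_{J_n}$ acting by $x\mapsto q^{\pm1}x$, using \eqref{paramjacobi}. It then remains to show that the scalar eigenvalue $2\cosh(2\pi b\sigma_s^{(n)})$, combined with the part of the diagonal term $H^0_{\mathcal{C}_k}(\nu)$ in \eqref{Hn0} that is not absorbed into $\Delta_{J_n}$, recombines into the eigenvalue on the right-hand side of \eqref{7p30}. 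Here one uses $bQ=b^2+1$ together with \eqref{sigmasdiscrete} to compute $e^{\pm2\pi b\sigma_s^{(n)}}=-q^{\mp n\mp1/2}(\alpha\beta)^{\mp1/2}$, after which the assertion is a finite identity among $q$-exponentials.

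The main obstacle is precisely this recombination in \eqref{7p30}. The eigenvalue $2\cosh(2\pi b\sigma_s^{(n)})$ is affine in $q^{\pm n}$ and carries no dependence on $x$, whereas the target eigenvalue is weighted by $x^{2(-1)^k}$; the required $x$-dependence must therefore be supplied entirely by the $\nu$-dependent diagonal term $H^0_{\mathcal{C}_k}(\nu)$. The delicate point is to split $H^0_{\mathcal{C}_k}(\nu)$ into the contribution that balances the off-diagonal coefficients of $\Delta_{J_n}$ and the remaining contribution that combines with $2\cosh(2\pi b\sigma_s^{(n)})$ to produce $q^{(-1)^{k+1}n}\bigl(1-q^{(-1)^k n}\bigr)\bigl(1-\alpha^{(-1)^k}\beta^{(-1)^k}q^{(-1)^k n+1}\bigr)x^{2(-1)^k}$, all while keeping track of the $k$ odd/even dichotomy throughout, exactly as in the proof of Theorem~\ref{thjacobi}.
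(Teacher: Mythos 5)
Your proposal is correct and takes essentially the same route as the paper's proof: you rightly pair \eqref{differenceCkrenhatb} with the recurrence \eqref{7p29} and \eqref{differenceCkrenhata} with the $q$-difference equation \eqref{7p30} (the interchange relative to the $q$-Hahn case), turn the $\sigma_s$-shifts into $T_{n\pm 1}$ via \eqref{sigmasdiscrete}, and match coefficients with $c_n^{\pm}$ and $d^{\pm}$ up to the common constants that convert the eigenvalue $e^{(-1)^{k+1}2\pi b\nu}$ into $x^{(-1)^k}$, exactly as in the paper's \eqref{7p32}--\eqref{7p34} and \eqref{7p37}. The ``delicate splitting'' of $H^0_{\mathcal{C}_k}(\nu)$ that you flag as the main obstacle is resolved in the paper by the clean identity \eqref{7p35}, whose $\nu$-dependent part is exactly $-H^+_{\hat{\mathcal{C}}_k^\ren}(b,\nu)-H^-_{\hat{\mathcal{C}}_k^\ren}(b,\nu)\propto x^{2(-1)^{k+1}}d^{\pm}$, leaving only the constant $-2\cosh\big(2\pi b\big(\theta_t-\theta_0+\tfrac{ib}{2}\big)\big)$ to combine with $2\cosh\big(2\pi b\sigma_s^{(n)}\big)$ --- precisely via your (correct) formula $e^{\pm 2\pi b\sigma_s^{(n)}}=-q^{\mp n\mp\frac12}(\alpha\beta)^{\mp\frac12}$ --- into the right-hand side of \eqref{7p30}, cf. \eqref{7p38}.
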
 
\begin{proof}
We first prove that in the limit $\sigma_s \to \sigma_s^{(n)}$ the difference equation \eqref{differenceCkrenhatb} reduces to the three-term recurrence relation \eqref{7p29}. It follows from the definition \eqref{sigmasdiscrete} of $\sigma_s^{(n)}$ that
\beq
\lim\limits_{\sigma_s\to\sigma_s^{(n)}} e^{\pm ib \sigma_s} \hat{\mathcal{C}}_k^\ren\big( b,\boldsymbol{\theta},\nu,\sigma_s\big) = T_{n\pm 1} \hat{\mathcal{C}}_k^\ren\big( b,\boldsymbol{\theta},\nu,\sigma_s^{(n)}\big),
\eeq
where $T_{n\pm1}$ acts on $\hat{\mathcal{C}}_k^\ren$ as $T_{n\pm1}\hat{\mathcal{C}}_k^\ren\big( b,\boldsymbol{\theta},\nu,\sigma_s^{(n)}\big) = \hat{\mathcal{C}}_k^\ren\big( b,\boldsymbol{\theta},\nu,\sigma_s^{(n\pm1)}\big)$. Moreover, it can be verified that the following identities hold:
\begin{align}\nonumber\label{7p32}
\lim\limits_{\sigma_s\to\sigma_s^{(n)}} & \tilde{H}_{\hat{\mathcal{C}}_k^\ren}^+\big(b,\pm \sigma_s\big) e^{\pm ib\partial_{\sigma_s}} \hat{\mathcal{C}}_k^\ren\big( b,\boldsymbol{\theta},\nu,\sigma_s\big) \\
& = -e^{(-1)^k\pi b(ib-2\theta_0-\theta_*)}c_n^\pm \big(\alpha^{(-1)^k},\beta^{(-1)^k},\gamma^{(-1)^k};q^{(-1)^k}\big)T_{n\pm1}\hat{\mathcal{C}}_k^\ren\big( b,\boldsymbol{\theta},\nu,\sigma_s^{(n)}\big), \end{align} 
\begin{align}\label{7p33}\nonumber
& \tilde{H}^0_{\mathcal{C}_k}\big(\sigma_s^{(n)}\big) \\
& = -e^{(-1)^k\pi b(ib-2\theta_0-\theta_*)} \lb 1-c_n^+\big(\alpha^{(-1)^k},\beta^{(-1)^k},\gamma^{(-1)^k};q^{(-1)^k}\big)-c_n^-\big(\alpha^{(-1)^k},\beta^{(-1)^k},\gamma^{(-1)^k};q^{(-1)^k}\big)\rb,
\end{align}
where $c_n^\pm$ are defined in \eqref{cnpm}. Finally, it follows from \eqref{7p32} and \eqref{7p33}, together with 
\beq
e^{(-1)^{k+1}2\pi b \nu}=-e^{(-1)^k\pi b(ib-2\theta_0-\theta_*)} x^{(-1)^k}
\eeq
that the limit $\sigma_s\to \sigma_s^{(n)}$ of the difference equation \eqref{differenceCkrenhatb} reduces to
\begin{align}\nonumber
\tilde{H}_{\hat{\mathcal{C}}_k^\ren}(b,\sigma_s^{(n)}) \hat{\mathcal{C}}_k^\ren\big( b,\boldsymbol{\theta},\nu,\sigma_s^{(n)}\big) & = -e^{(-1)^k (ib-2\theta_0-\theta_*)} R_{J_n}(\alpha^{(-1)^k},\beta^{(-1)^k},\gamma^{(-1)^k};q^{(-1)^k}) \hat{\mathcal{C}}_k^\ren\big( b,\boldsymbol{\theta},\nu,\sigma_s^{(n)}\big)
	\\ \label{7p34} 
& = -e^{(-1)^k (ib-2\theta_0-\theta_*)} x^{(-1)^k} \hat{\mathcal{C}}_k^\ren\big( b,\boldsymbol{\theta},\nu,\sigma_s^{(n)}\big).
\end{align}
This shows that \eqref{differenceCkrenhatb} reduces to \eqref{7p29} as $\sigma_s \to \sigma_s^{(n)}$. 

We now show that the difference equation \eqref{differenceCkrenhata} reduces to the difference equation \eqref{7p30} satisfied by the big $q$-Jacobi polynomials in the limit $\sigma_s \to \sigma_s^{(n)}$. Observe that the potential $H_{\mathcal{C}_k}^{0}(\nu)$ and the coefficients $H_{\hat{\mathcal{C}}_k^\ren}^\pm (b,\nu)$, which are defined in \eqref{Hn0} and \eqref{7p23} respectively, are related as follows:
 \beq\label{7p35}
 H_{\mathcal{C}_k}^{0}(\nu) = -2\cosh{\big(2\pi b \big( \theta_t-\theta_0+\tfrac{ib}2\big) \big)} - H_{\hat{\mathcal{C}}_k^\ren}^+ (b,\nu)-H_{\hat{\mathcal{C}}_k^\ren}^- (b,\nu).
 \eeq
Therefore, the limit $\sigma_s \to \sigma_s^{(n)}$ of the difference equation \eqref{differenceCkrenhata} can be written as 
\begin{equation}\label{7p36}\begin{split}
& H_{\hat{\mathcal{C}}_k^\ren}^+(b,\nu) \hat{\mathcal{C}}_k^\ren\big( b,\boldsymbol{\theta},\nu+ib,\sigma_s^{(n)}\big) + H_{\hat{\mathcal{C}}_k^\ren}^-(b,\nu) \hat{\mathcal{C}}_k^\ren\big(b,\boldsymbol{\theta},\nu-ib,\sigma_s^{(n)}\big) \\
& - \lb H_{\hat{\mathcal{C}}_k^\ren}^+ (b,\nu)+H_{\hat{\mathcal{C}}_k^\ren}^- (b,\nu)\rb \hat{\mathcal{C}}_k^\ren\big(b,\boldsymbol{\theta},\nu,\sigma_s^{(n)}\big) \\
& = \lb 2\cosh{\big(2\pi b \big( \theta_t-\theta_0+\tfrac{ib}2\big) \big)}+2\cosh{(2\pi b \sigma_s^{(n)})} \rb \hat{\mathcal{C}}_k^\ren\big(b,\boldsymbol{\theta},\nu,\sigma_s^{(n)}\big). \end{split}\end{equation}
We observe that the following identities hold:
\beq\label{7p37}
H_{\hat{\mathcal{C}}_k^\ren}^\pm (b,\nu) = - e^{(-1)^{k+1}2\pi b\lb \theta_0-\theta_t-\tfrac{ib}2\rb}x^{2(-1)^{k+1}} d^\pm\big(\alpha^{(-1)^k},\beta^{(-1)^k},\gamma^{(-1)^k},q^{(-1)^k},x^{(-1)^k}\big),
\eeq
where $d^\pm$ are defined in \eqref{dpm}. Moreover, under the parameter correspondence \eqref{paramjacobi} we have 
\begin{align}\nonumber
\frac{2\cosh{\big(2\pi b \big( \theta_t-\theta_0+\tfrac{ib}2\big) \big)}+2\cosh{(2\pi b \sigma_s^{(n)})}}{-e^{(-1)^{k+1}2\pi b\big(\theta_0-\theta_t-\tfrac{ib}2\big)}x^{2(-1)^{k+1}}} 
= &\;q^{(-1)^{k+1} n} \big(1-q^{(-1)^k n}\big) 
	\\ \label{7p38}
&\times \left(1-\alpha ^{(-1)^k} \beta ^{(-1)^k} q^{(-1)^k (n+1)}\right) x^{2 (-1)^k}.
\end{align} 
Finally, according to \eqref{paramjacobi}, a shift $\nu \to \nu\pm ib$ implies a multiplication $x\to q^{\pm1} x$. Thus, using \eqref{7p37} and \eqref{7p38}, is straightforward to see that \eqref{7p36} reduces to \eqref{7p30}.
\end{proof}

\begin{remark}[An alternative proof of Theorem \ref{thjacobi}] 
The proof of Theorem \ref{thjacobi} presented in Section \ref{section7p2} is based on a direct evaluation of the limit $\sigma_s \to \sigma_s^{(n)}$ in the integral representation \eqref{hatCkrendef} for $\hat{\mathcal{C}}_k^\ren(b,\boldsymbol{\theta},\nu,\sigma_s)$. An alternative approach is based on the recurrence relation \eqref{7p29}. This approach is the one we originally used to arrive at Theorem \ref{thjacobi}. In this alternative approach, the proof of Theorem \ref{thjacobi} involves two steps. First, the residue computation that led to \eqref{7p15} is carried out in the special case of $n=0$; this yields
\beq
\hat{\mathcal{C}}_k^\ren(b,\boldsymbol{\theta},\nu,\sigma_s^{(0)}) = 1, \qquad k\geq1,
\eeq
and shows that \eqref{degenjacobi} holds for $n=1$. Second, since the functions
$$ \hat{\mathcal{C}}_k^\ren(b,\boldsymbol{\theta},\nu,\sigma_s^{(n)}) \qquad \text{and} \qquad J_n(x^{(-1)^k};\alpha^{(-1)^k},\beta^{(-1)^k},\gamma^{(-1)^k};q^{(-1)^k})$$
satisfy the same three-term recurrence relation (see \eqref{7p29} and \eqref{recurrenceJn}), an inductive argument shows that \eqref{degenjacobi} holds also for $n\geq 1$. 
\end{remark}

\section{Conclusions and perspectives}\label{conclusionssec}
We have studied the family of confluent Virasoro fusion kernels $\mathcal{C}_k(b,\boldsymbol{\theta},\sigma_s,\nu)$ defined in \eqref{gnm}.
We have shown in Theorems \ref{thm6p2} and \ref{thm6p4} that $\mathcal{C}_k$ is a joint eigenfunction of four difference operators for each $k$. Furthermore, we have proved in Theorems \ref{thhahn} and Theorem \ref{thjacobi} that $\mathcal{C}_k(b,\boldsymbol{\theta},\sigma_s,\nu)$ reduces (up to normalization) to the continuous dual $q$-Hahn polynomials when $\nu$ is suitably discretized and to the big $q$-Jacobi polynomials when $\sigma_s$ is suitably discretized. We have also shown that the Virasoro fusion kernel $F$ reduces (up to normalization) to the Askey--Wilson polynomials when $\sigma_s$  is suitably discretized (Theorem \ref{FAWthm}). As described in the introduction, these results have led us to propose the existence of a non-polynomial version of the $q$-Askey scheme with the Virasoro fusion kernel as its top member. Our results have been summarized in Figure \ref{schemefig2}. 

%One natural perspective is to extend the non-polynomial scheme beyond the first two levels and to show that the confluence diagram shown in Figure \ref{schemefig2} extends also to lower levels of the two schemes. The construction of the confluent Virasoro fusion kernel in \cite{LR} was guided by two-dimensional conformal field theory. We expect that the remaining elements of the non-polynomial $q$-Askey scheme can also be constructed following the framework developed in \cite{LR}. 

Let us point out that the confluent Virasoro fusion kernels $\mathcal{C}_k(b,\boldsymbol{\theta},\sigma_s,\nu)$ are not independent for different values of $k$. In fact, the kernels $\mathcal{C}_k$ can be viewed as infinite dimensional generalizations of the connection matrices which relate the solutions of the confluent hypergeometric equation at the singular points $z=0$ and $z=\infty$ in different Stokes sectors \cite{LR}. However, the confluent hypergeometric equation possesses two independent Stokes matrices, and any two consecutive connection matrices are related by a Stokes matrix. We conjecture that two consecutive confluent Virasoro fusion kernels are related by the following integral transform:
\beq
\mathcal{C}_{k+1}(b,\boldsymbol{\theta},\sigma_s,\nu_{n+1}) = \int_{\mathbb{R}} d\nu_n ~  \mathcal{S}_{n}\left[\substack{\theta_t\vspace{0.08cm} \\ \theta_{*}\;\;\;\theta_0};\substack{\nu_{n+1}\vspace{0.15cm} \\  \nu_n}\right] ~ \mathcal{C}_{k}(b,\boldsymbol{\theta},\sigma_s,\nu_{n}),
\eeq
where the kernel $\mathcal{S}_{n}\left[\substack{\theta_t\vspace{0.08cm} \\ \theta_{*}\;\;\;\theta_0};\substack{\nu_{n+1}\vspace{0.15cm} \\  \nu_n}\right]$ is the Stokes kernel which was introduced in \cite[Eq.(5.9)]{LR}.

Finally, it would be interesting to understand the difference operators introduced in this article from the viewpoint of integrable systems. It was shown in \cite{R20} that under a certain parameter correspondence the difference operator $H_\text{ren}$ defined in \eqref{3p18} corresponds to the quantum relativistic hyperbolic Calogero-Moser Hamiltonian tied to the root system $BC_1$. Relativistic Toda system were found in \cite{R1990} and various Toda limits of relativistic Calogero-Moser systems were studied in \cite{vandiejen95}. In particular, a Toda limit of a one-parameter specialization of Ruijsenaars' hypergeometric function was considered in \cite{R2011}, and similar functions were obtained in \cite{KLS2002} from a quantum group perspective. Both of these works seem to involve one-parameter specializations of the family of confluent Virasoro fusion kernels studied in this article. It would be interesting to understand this better.

\appendix

\section{$q$-hypergeometric series}\label{appendixA}
The $q$-Pochammer symbols $(a;q)_n$ and $(a_1,a_2,...,a_m;q)_n$ are defined by
\begin{align}\label{qpochhammerdef}
(a;q)_n = \prod_{k=0}^{n-1} (1-a q^k) \quad \text{and} \quad (a_1,a_2,...,a_m;q)_n = \prod_{j=1}^m (a_j;q)_n.
\end{align}
The $q$-hypergeometric series $_{s+1}\phi_s$ is a $q$-deformation of the hypergeometric series. It is defined by
\begin{align}\label{hypergeometricphidef}
_{s+1}\phi_s\left[ \begin{matrix} a_1,...a_{s+1} \\ b_1...b_s \end{matrix} ;q,z\right] = \sum_{k=0}^\infty \frac{(a_1,...,a_{s+1};q)_k}{(b_1,...,b_s,q;q)_k}z^k.
\end{align}
The series terminates if one of the $a_i$ in the numerator is equal to $q^{-n}$ for some integer $n\geq 1$. Otherwise, the series converges for $|z|<1$.

\section{The first two levels of the $q$-Askey scheme}\label{appendixB}

%The Askey scheme organizes orthogonal polynomials of hypergeometric type into a hierarchy. 
%
%The $q$-Askey scheme is a five levels hierarchy of $q$-orthogonal polynomials (going from 4 parameters to 0). Each element of the hierarchy is characterized by satisfying one three-term recurrence relation, one difference equation and orthogonality relations with respect to a weight function. In this section, we briefly describe the first two levels of the $q$-Askey scheme.

\subsection{Askey--Wilson polynomials}

The Askey--Wilson polynomials $A_n$ are the most general polynomials of the $q$-Askey scheme. They are defined by
\beq\label{AW}
A_n(z;\alpha,\beta,\gamma,\delta,q) = {}_4 \phi_3\lb \left. \begin{split}\begin{matrix}  q^{-n}, \alpha\beta\gamma\delta q^{n-1}, \alpha z, \alpha z^{-1} \\ \alpha \beta, \alpha\gamma, \alpha\delta \end{matrix} \end{split}\right| q;q \rb
\eeq
The normalization used in $\eqref{AW}$ for the Askey--Wilson polynomials is related to the standard normalization of \cite[Eq. (3.1.1)]{KS} by
\beq\label{1p2}
p_n\lb\tfrac{z+z^{-1}}2;\alpha,\beta,\gamma,\delta,q\rb = \alpha^{-n} (\alpha\beta,\alpha\gamma,\alpha\delta;q)_n ~ A_n(z;\alpha,\beta,\gamma,\delta,q).
\eeq
The right-hand side of \eqref{1p2} is symmetric in its four parameters $\alpha,\beta,\gamma,\delta$, whereas $A_n(z;\alpha,\beta,\gamma,\delta,q)$ is only symmetric in $\beta,\gamma,\delta$. Since $p_n(x;\alpha,\beta,\gamma,\delta,q)$ is a polynomial of order $n$ in $x$, $A_n$ is a polynomial of order $n$ in $z + z^{-1}$.
The polynomials $A_n$ satisfy the three-term recurrence relation
\beq\label{recurrenceAW}
(R_{A_n} A_n)(z;\alpha,\beta,\gamma,\delta,q) = (z+z^{-1}) A_n(z;\alpha,\beta,\gamma,\delta,q),
\eeq
where the operator $R_{A_n}$ is given by
\beq\label{Mn}
R_{A_n} = a^{+}_n T_{n+1} + (\alpha+\alpha^{-1}-a^{+}_n-a^{-}_n) + a^{-}_n T_{n-1},
\eeq
with $T_{n \pm 1}p_n(x) = p_{n \pm 1}(x)$ and
\beq \label{bndef}
\begin{split}
& a^{+}_n = \frac{\lb 1-\alpha\beta q^n\rb\lb 1-\alpha\gamma q^n\rb\lb 1-\alpha\delta q^n\rb\lb 1-\alpha\beta\gamma\delta q^{n-1}\rb}{\alpha \lb 1-\alpha\beta\gamma\delta q^{2n-1}\rb\lb 1-\alpha\beta\gamma\delta q^{2n}\rb}, \\
& a^{-}_n = \frac{\alpha\lb 1-q^n\rb\lb 1-\beta \gamma q^{n-1}\rb\lb 1-\beta\delta q^{n-1}\rb\lb 1-\gamma\delta q^{n-1}\rb}{\lb 1-\alpha\beta\gamma\delta q^{2n-2}\rb\lb 1-\alpha\beta\gamma\delta q^{2n-1}\rb}.
\end{split}\eeq
They also satisfy the difference equation
\beq\label{differenceAW}
(\Delta_{A_n}A_n)(z;\alpha,\beta,\gamma,\delta,q) = \lb q^{-n} +\alpha \beta \gamma \delta q^{n-1} \rb A_n(z;\alpha,\beta,\gamma,\delta,q),
\eeq
where the $q$-difference operator $\Delta_{A_n}$ is defined by
\beq\label{L}\begin{split}
(\Delta_{A_n}f)(z)=& \lb 1+\tfrac{\alpha\beta\gamma\delta}q\rb f(z) + \frac{(1-\alpha z)(1-\beta z)(1-\gamma z)(1-\delta z)}{(1-z^2)(1-q z^2)} (f(qz)-f(z)) \\
 & + \frac{(\alpha-z)(\beta-z)(\gamma-z)(\delta-z)}{(1-z^2)(q-z^2)} (f(q^{-1}z)-f(z)).
\end{split}\eeq

We next describe the second level in the $q$-Askey which consists of the continuous dual $q$-Hahn and the big $q$-Jacobi polynomials. These families of polynomials arise as a limit of the Askey--Wilson polynomials.

\subsection{Continuous dual $q$-Hahn polynomials} 

The continuous dual $q$-Hahn polynomials, denoted by $H_n(z;\alpha,\beta,\gamma,q)$, are obtained from the Askey--Wilson polynomials by setting $\delta=0$ in \eqref{AW}:
\beq\label{qhahn}
H_n(z;\alpha,\beta,\gamma,q) = A_n(z;\alpha,\beta,\gamma,0,q) = {}_3\phi_2\left(\left. \begin{matrix} q^{-n},\alpha z ,\alpha z^{-1} \\ \alpha\beta, \alpha\gamma \end{matrix}\right|q;q\right).
\eeq
The polynomials $H_n$ satisfy the three-term recurrence relation 
\beq\label{recurrenceHn}
\lb R_{H_n}(\alpha,\beta,\gamma;q) H_n\rb(z;\alpha,\beta,\gamma,q)=(z+z^{-1})~H_n(z;\alpha,\beta,\gamma,q),
\eeq
where the operator $R_{H_n}$ is defined by
\beq \label{sn}
R_{H_n}(\alpha,\beta,\gamma;q) = b^{+}_n(\alpha,\beta,\gamma;q) T_{n+1}+\lb \alpha+\alpha^{-1}-b^{+}_n(\alpha,\beta,\gamma;q)-b^{-}_n(\alpha,\beta,\gamma;q)\rb +b^{-}_n(\alpha,\beta,\gamma;q) T_{n-1}
\eeq
with 
\beq\label{cndef}
b^{+}_n(\alpha,\beta,\gamma;q)=\alpha^{-1}(1-\alpha \beta q^n)(1-\alpha \gamma q^n), \qquad 
b^{-}_n(\alpha,\beta,\gamma;q)=\alpha(1-q^n)(1-\beta \gamma q^{n-1}). 
\eeq
They also satisfy the difference equation
\beq\label{differencehahn}
\lb \Delta_{H_n}(\alpha,\beta,\gamma;q,z) H_n\rb(z;\alpha,\beta,\gamma,q) = (q^{-n}-1) H_n(z;\alpha,\beta,\gamma,q),
\eeq
where the $q$-difference operator $\Delta_{H_n} \equiv \Delta_{H_n}(\alpha,\beta,\gamma;q,z)$ is defined by
\beq\label{deltahn}\begin{split}
\lb \Delta_{H_n} f\rb(z) = ~ & h(\alpha,\beta,\gamma;q,z) f(qz) + h(\alpha,\beta,\gamma;q,z^{-1}) f(q^{-1}z) \\
& - \lb h(\alpha,\beta,\gamma;q,z)+h(\alpha,\beta,\gamma;q,z^{-1}) \rb f(z),
\end{split}\eeq
with
\beq\label{hdef}
h(\alpha,\beta,\gamma;q,z)=\frac{(1-\alpha z)(1-\beta z)(1-\gamma z)}{(1-z^2)(1-q z^2)}.
\eeq
 
\subsection{Big $q$-Jacobi polynomials}
The big $q$-Jacobi polynomials $J_n(x;\alpha,\beta,\gamma;q)$ arise from the Askey--Wilson polynomials in a more subtle way:
\beq\label{Jn}
J_n(x;\alpha,\beta,\gamma;q) = \lim\limits_{\lambda \to 0} A_n\lb\frac{x}{\lambda};\lambda,\frac{\alpha q}\lambda,\frac{\gamma q}\lambda,\frac{\lambda \beta}\gamma,q\rb
= {}_3\phi_2\left( \left. \begin{split} \begin{matrix} q^{-n},\alpha \beta q^{n+1}, x \\ \alpha q, \gamma q\end{matrix} \end{split} \right|q;q\right).
\eeq
The polynomials $J_n$ satisfy the three-term recurrence relation
\beq\label{recurrenceJn}
R_{J_n}(\alpha,\beta,\gamma;q) J_n(x;\alpha,\beta,\gamma;q)=x J_n(x;\alpha,\beta,\gamma;q),
\eeq
where the operator $R_{J_n}$ is defined by
\beq\label{RJndef}
R_{J_n}(\alpha,\beta,\gamma;q) = c^{+}_n T_{n+1} + \lb 1-c^{+}_n-c^{-}_n\rb + c^{-}_n T_{n-1},
\eeq
with
\beq\label{cnpm}
\left\{\begin{split} 
& c^{+}_n = \frac{\lb 1-\alpha q^{n+1}\rb\lb 1-\alpha \beta q^{n+1}\rb\lb 1-\gamma q^{n+1}\rb}{\lb 1-\alpha \beta q^{2n+1}\rb\lb 1-\alpha \beta q^{2n+2}\rb}, \\
& c^{-}_n = -\alpha \gamma q^{n+1} \frac{\lb 1-q^n\rb\lb 1-\alpha \beta \gamma^{-1} q^n\rb\lb 1-\beta q^n\rb}{\lb 1-\alpha \beta q^{2n}\rb\lb 1-\alpha \beta q^{2n+1}\rb}.
\end{split}\right.
\eeq
They also satisfy the difference equation
\beq\label{differenceJn}
(\Delta_{J_n}(\alpha,\beta,\gamma;q,x)J_n)(x;\alpha,\beta,\gamma;q) = q^{-n} (1-q^n)(1-\alpha \beta q^{n+1})x^2 J_n(x;\alpha,\beta,\gamma;q),
\eeq
where the $q$-difference operator $\Delta_{J_n}\equiv \Delta_{J_n}(\alpha,\beta,\gamma;q,x)$ is defined by
\begin{align}\nonumber
(\Delta_{J_n}f)(x) = ~ & d^{+}(\alpha,\beta,\gamma,q,x) f(q x) - (d^{+}(\alpha,\beta,\gamma,q,x)+d^{-}(\alpha,\beta,\gamma,q,x)) f(z) \\\label{DeltaJndef}
& +  d^{-}(\alpha,\beta,\gamma,q,x)f(q^{-1}x),
\end{align}
with
\beq\label{dpm}
d^{+}(\alpha,\beta,\gamma,q,x) = \alpha q(x-1)(\beta x-\gamma), \qquad d^{-}(\alpha,\beta,\gamma,q,x)=(x-\alpha q)(x-\gamma q). 
\eeq

\begin{remark}\label{rootofunityremark}
In the definitions of the polynomials $A_n$, $H_n$, and $J_n$, we assume that $q$ is not a root of unity, because otherwise the polynomials are in general not well-defined. Indeed, the $q$-hypergeometric series on the right-hand sides of (\ref{AW}), (\ref{qhahn}), and (\ref{Jn}) involve the ratio $(q^{-n}; q)_k/(q; q)_k$ of $q$-Pochhammer symbols, and this ratio is indeterminate for all sufficiently large $k$ if $q$ is a root of unity.
\end{remark}

\bigskip
\noindent
{\bf Acknowledgement} {\it J.R. acknowledges support from the European Research Council, Grant Agreement No. 682537 and the Ruth and Nils-Erik Stenb\"ack Foundation. J.L. acknowledges support from the European Research Council, Grant Agreement No. 682537, the Swedish Research Council, Grant No. 2015-05430, and the Ruth and Nils-Erik Stenb\"ack Foundation. }


\begin{thebibliography}{10}

%\bibitem{AGGTV} L.F. Alday, D. Gaiotto, S. Gukov, Y. Tachikawa, H. Verlinde, \textit{Loop and surface operators in $\mathcal{N}=2$ gauge theory and Liouville modular geometry}, \href{https://link.springer.com/article/10.1007/JHEP01(2010)113}{J. High Energ. Phys. {\bf{2010}}, 113 (2010)}.

\bibitem{AW1985} R. Askey, J. Wilson, \textit{Some basic hypergeometric orthogonal polynomials that generalize Jacobi polynomials} \href{http://www.ams.org/books/memo/0319/}{Mem. Amer. Math. Soc. 54 (1985)}. 

%\bibitem{Ch} L. Chekhov, \textit{Fenchel-Nielsen coordinates and Goldman brackets}, \href{https://arxiv.org/abs/2008.12777}{arXiv:2008.12777}.

%\bibitem{BPZ} A. A. Belavin, A. M. Polyakov, A. B. Zamolodchikov, \textit{Infinite conformal symmetry    in two-dimensional quantum field theory}, Nucl. Phys.~\textbf{B241}, (1984), 333--380. 

% \bibitem{DGOT} N. Drukker, J. Gomis, T. Okuda, J. Teschner, \textit{Gauge theory loop operators and Liouville theory}, \href{https://link.springer.com/article/10.1007/JHEP02(2010)057}{J. High Energ. Phys. {\bf{2010}}, 57 (2010)}.

%\bibitem{HJP} L. Hadasz, Z. Jaskolski, M. Piatek, \textit{Analytic continuation formulae for the BPZ conformal blocks},  \url{http://arxiv.org/abs/hep-th/0409258}.

\bibitem{BRS} F.J. van de Bult, E.M. Rains, J.V. Stokman, \textit{Properties of generalized univariate hypergeometric functions}, \href{https://link.springer.com/article/10.1007/s00220-007-0289-0}{Commun. Math. Phys. \textbf{275}, 3795 (2007).}

\bibitem{vandiejen95} J.F. van Diejen, \textit{Difference Calogero-Moser systems and finite Toda chains}, \href{https://doi.org/10.1063/1.531122}{J, Math. Phys \textbf{36}, 1299 (1995).}

\bibitem{IR} M.E.H. Ismail, M. Rahman, \textit{Associated Askey--Wilson polynomials},  \href{https://www.jstor.org/stable/2001881?seq=1}{Trans. Amer. Math. Soc. \textbf{328} (1991),
201--239.}

\bibitem{KLS2002} S. Kharchev, D. Lebedev, M. Semenov-Tian-Shansky, \textit{Unitary representations of $U_q(sl(2,\mathbb{R}))$, the modular double and the multiparticle $q$-deformed Toda chains}, \href{https://doi.org/10.1007/s002200100592}{Commun. Math. Phys. \textbf{225}, 573--609 (2002).}

\bibitem{KLS2010} R. Koekoek, P.A. Lesky, and R.F. Swarttouw, Hypergeometric orthogonal polynomials and their $q$-analogues, Springer Monographs in Mathematics. Springer-Verlag, Berlin, 2010. 

\bibitem{KS} R. Koekoek, R.F. Swarttouw, \textit{The Askey-scheme of hypergeometric orthogonal
polynomials and its $q$-analogue}, \href{https://homepage.tudelft.nl/11r49/documents/as98.pdf}{Report no. 94-05, Faculty of Technical Mathematics and Informatics, Delft University of Technology (1994).}

\bibitem{KS99} E. Koelink, J.V. Stokman, \textit{The Askey--Wilson Function Transform Scheme}, \href{https://link.springer.com/chapter/10.1007/978-94-010-0818-1_9#citeas}{Special Functions 2000: Current Perspective and Future Directions. NATO Science Series (Series II: Mathematics, Physics and Chemistry), vol 30. Springer, Dordrecht.}

\bibitem{KM} T.H. Koornwinder, M. Mazzocco, \textit{Dualities in the $q$-Askey scheme and degenerate DAHA}, \href{https://doi.org/10.1111/sapm.12229}{Stud. Appl. Math. \textbf{141} (2018), 424--473.}

\bibitem{LR} J. Lenells, J. Roussillon, \textit{Confluent conformal blocks of the second kind}, \href{https://doi.org/10.1007/JHEP06(2020)133}{J. High Energ. Phys. {\bf{2020}}, 133 (2020).}

\bibitem{NIST} F. W. J. Olver, D. W. Lozier, R. F. Boisvert, and C. W. Clark, NIST handbook of mathematical functions, Cambridge University Press, 2010.

%   \bibitem{PSS} M. Pawelkiewicz, V. Schomerus, P. Suchanek, \textit{The universal Racah-Wigner symbol for $U_q(\text{\upshape osp}(1|2))$}, \href{https://doi.org/10.1007/JHEP04(2014)079}{J. High Energ. Phys. {\bf 2014}, 79 (2014)}.

\bibitem{pt1} B. Ponsot, J. Teschner, \textit{Liouville bootstrap via harmonic analysis on a noncompact quantum group}, \href{https://arxiv.org/abs/hep-th/9911110}{arXiv:hep-th/9911110}.
   	
\bibitem{pt2} B. Ponsot, J. Teschner, \textit{Clebsch--Gordan and Racah--Wigner coefficients for a continuous series of representations of $\mathcal{U}_q(\mathfrak{sl}(2,\mathbb R))$}, \href{https://doi.org/10.1007/PL00005590}{Comm. Math. Phys.~\textbf{224} (2001), 613--655.}
   
%\bibitem{Ramis} J.-P. Ramis, \textit{Confluence et r\'esurgence}, J. Fac. Sci. Tokyo, Sect. IA, Math. \textbf{36}(1989), No. 3, 703-716.
      
%\bibitem{Ro} H. Rosengren, \textit{An elementary approach to $6j$-symbols (classical, quantum, rational, trigonometric and elliptic)}, \href{https://doi.org/10.1007/s11139-006-0245-1}{The Ramanujan Journal {\bf 13} (2007), 131--166}.

\bibitem{Ribault}
S. Ribault, \textit{Conformal field theory on the plane}, \href{https://arxiv.org/abs/1406.4290}{arXiv:1406.4290v5}[hep-th]. 


\bibitem{R20} J. Roussillon, \textit{The Virasoro fusion kernel and Ruijsenaars' hypergeometric function}, \href{https://arxiv.org/abs/2006.16101}{arXiv:2006.16101}.

\bibitem{R1990} S, Ruijsenaars, \textit{Relativistic Toda systems}, \href{https://projecteuclid.org/euclid.cmp/1104201396}{Comm. Math. Phys.
Volume 133, Number 2 (1990), 217-247}.

\bibitem{R1994} S. Ruijsenaars, \textit{Systems of Calogero-Moser type}, \href{https://link.springer.com/chapter/10.1007/978-1-4612-1410-6_7}{Particles and fields, 251-352}.

\bibitem{R1999} S. Ruijsenaars, \textit{A Generalized Hypergeometric Function Satisfying Four Analytic Difference Equations of Askey--Wilson Type}, \href{https://doi.org/10.1007/PL00005522}{Commun. Math. Phys. \textbf{206} (1999), 639--690}.

\bibitem{R2001} S. Ruijsenaars, \textit{Special functions defined by analytic difference equations.} Special functions 2000: current perspective and future directions (Tempe, AZ), 281--333, NATO Sci. Ser. II Math. Phys. Chem., 30, Kluwer Acad. Publ., Dordrecht, 2001. 

\bibitem{R2003} S. Ruijsenaars, \textit{A Generalized Hypergeometric Function III. Associated Hilbert Space Transform}, \href{https://doi.org/10.1007/s00220-003-0970-x}{Commun. Math. Phys. \textbf{243}, 413448 (2003)}.

\bibitem{R2003bis} S.Ruijsenaars, \textit{A Generalized Hypergeometric Function II. Asymptotics and D4 Symmetry}, \href{https://link.springer.com/article/10.1007/s00220-003-0969-3}{Commun. Math. Phys. \textbf{243}, 389412 (2003)}.

% \bibitem{R2007} S. Ruijsenaars, \textit{Quadratic transformations for a function that generalizes 2 F1 and the Askey--Wilson polynomials}, \href{https://link.springer.com/article/10.1007/s11139-006-0257-x}{The Ramanujan Journal volume \textbf{13}, Article number: 339 (2007)}.

\bibitem{R2011} S. Ruijsenaars, \textit{A Relativistic Conical Function and its Whittaker Limits}, \href{https://doi.org/10.3842/SIGMA.2011.101}{SIGMA \textbf{7} (2011), 101, 54 pages}.


% \bibitem{TV13} J. Teschner, G. Vartanov, \textit{Supersymmetric gauge theories, Quantization of $\mathcal{M}_{\text{flat}}$, and conformal field theory}, \href{https://arxiv.org/abs/1302.3778}{arXiv:1302.3778}.

 \bibitem{TV14} J. Teschner, G. Vartanov, \textit{6j symbols for the modular double, quantum hyperbolic geometry, and supersymmetric gauge theories}, \href{https://doi.org/10.1007/s11005-014-0684-3}{Lett. Math. Phys. \textbf{104} (2014), 527--551}.

%\bibitem{R05} S. Ruijsenaars, \textit{A unitary joint eigenfunction transform for the A$\Delta$Os $\text{exp}(ia_\pm d/dz)+\text{exp}(2\pi z/a_\mp)$}, \href{https://doi.org/10.2991/jnmp.2005.12.s2.19}{J. Nonlin. Math. Phys., 12:sup2, 253-294}

\end{thebibliography}
\end{document}